\newtheorem{problem}{Problem}
\newcommand{\roundA}{}
\newcommand{\roundB}{}
\newcommand{\roundC}{}
\definecolor{ColorRoundE}{rgb}{0.85,0.03,0.19}
\definecolor{ColorRoundD}{rgb}{0.376,0.619,0.843}
\definecolor{ColorRoundF}{rgb}{0.54,0.17,0.87}
\newcommand{\roundD}{\color{ColorRoundD}}
\newcommand{\roundE}{\color{ColorRoundE}}
\newcommand{\roundF}{\color{ColorRoundF}}
\newcommand{\roundG}{\color{green}}
\newcommand{\ChengComment}{\color{red}}
\newcommand{\ChengCommentB}{\color{purple}}
\newcommand{\roundD}{}
\newcommand{\roundE}{}
\newcommand{\roundF}{}
\newcommand{\roundG}{}
\newcommand{\ChengComment}{}
\newcommand{\ChengCommentB}{}
\newcommand{\ChengCommentC}{}
\newcommand{\Yuchange}{}
\newcommand{\revision}{}
  \providecommand\BibTeX{{%
    \normalfont B\kern-0.5em{\scshape i\kern-0.25em b}\kern-0.8em\TeX}}}
\begin{document}
\sloppy
\title{Efficient Algorithms for Maximal $k$-Biplex Enumeration}

\author{Kaiqiang Yu}
\affiliation{%
  \institution{Nanyang Technological University}
}
\email{kaiqiang002@e.ntu.edu.sg}
\author{Cheng Long}
\authornotemark[1]
\thanks{*Cheng Long is the corresponding author.}
\affiliation{%
  \institution{Nanyang Technological University}
}
\email{c.long@ntu.edu.sg}
\author{Shengxin Liu}
\affiliation{%
   \institution{Harbin Institute of Technology, Shenzhen}
}
\email{sxliu@hit.edu.cn}
\author{Da Yan}
\affiliation{%
  \institution{University of Alabama at Birmingham}
}
\email{yanda@uab.edu}



\begin{abstract}
Mining maximal subgraphs with cohesive structures from a bipartite graph {\ChengComment has been widely studied}. One important cohesive structure on bipartite graphs is {\roundF $k$-biplex}, where each vertex on one side disconnects at most $k$ vertices on the other side. In this paper, we study the \textit{maximal} $k$\textit{-biplex enumeration} problem which {\roundF enumerates} all maximal $k$-biplexes. 
Existing methods {\ChengComment suffer from efficiency and/or scalability issues and have the time of waiting for the next output exponential {\roundF w.r.t.} the size of the input bipartite graph (i.e., an exponential delay).}
{\roundA In this paper, we adopt a reverse search framework called \texttt{bTraversal},
which corresponds to a depth-first search (DFS) {\roundF procedure} on an implicit {\roundF solution graph} on top of all maximal $k$-biplexes. 
We then develop a series of techniques for improving and implementing {\roundF this} framework including (1) carefully selecting an initial solution to start DFS, (2) pruning the vast majority of links from the solution graph of \texttt{bTraversal}, and (3) implementing abstract procedures of the framework. 
The resulting algorithm is called \texttt{iTraversal}, which has its underlying solution graph significantly sparser than {\roundF (around 0.1\% of)} that of \texttt{bTraversal}. 
Besides, \texttt{iTraversal} provides {\roundF a guarantee of} polynomial delay. 
}
{\roundF Our} experimental results on real and synthetic graphs, where the largest one contains one billion edges, show that our algorithm is up to four orders of magnitude faster than existing algorithms.

\end{abstract}
\begin{CCSXML}
<ccs2012>
<concept>
<concept_id>10002950.10003624.10003633.10010917</concept_id>
<concept_desc>Mathematics of computing~Graph algorithms</concept_desc>
<concept_significance>100</concept_significance>
</concept>
</ccs2012>
\end{CCSXML}

\ccsdesc[100]{Mathematics of computing~Graph algorithms}

\keywords{Bipartite graph, maximal biplex, maximal subgraph enumeration}



\maketitle

\section{Introduction}
\label{sec:introduction}
In many applications, two types of entities are involved and interact with each other. Some examples include (1) {\roundF social media} where users comment on articles~\cite{zhang2020overview}, (2) e-commerce services where customers post reviews on or purchase products~\cite{wang2006unifying}, (3) {\roundF collaboration networks where authors publish papers~\cite{ley2002dblp}}, etc. In these applications, the two types of entities and the interactions between them can be naturally modelled as a \emph{bipartite graph} {\ChengComment with the entities being vertices and the interactions being edges.}

{\roundD
For a given bipartite graph, a \emph{dense}/\emph{cohesive} subgraph within it usually carries interesting information that can be used for solving practical problems such as fraud detection~\cite{yu2021graph,gangireddy2020unsupervised}, online recommendation~\cite{DBLP:conf/kdd/PoernomoG09a, DBLP:conf/icdm/GunnemannMRS11} and community search~\cite{DBLP:conf/ssdbm/HaoZW020,DBLP:journals/corr/abs-2011-08399}. }
{\ChengComment For example, 
in social networking applications, when a group of users are paid to promote a specific set of fake articles via retweets, the induced subgraph by these users and the articles would be dense. Identifying such dense subgraphs would help to detect the fake users and articles~\cite{gangireddy2020unsupervised}.
%
{\ChengCommentC As a second example, in e-commerce services, after identifying a dense subgraph between customers and products, it would be natural to recommend {products to} those customers which disconnect the products within the subgraph
~\cite{amer2009group}.
}
}

Quite a few definitions have been proposed for a dense bipartite graph, including biclique~\cite{zhang2014finding},
$(\alpha,\beta)$-core~\cite{liu2019efficient}, 
$k$-bitruss~\cite{wang2020efficient},
$k$-biplex~\cite{DBLP:journals/sadm/SimLGL09,yu2021efficient},
$\delta$-quasi-biclique~\cite{DBLP:conf/cocoon/LiuLW08}, etc. 
These definitions impose different requirements on the connections and/or disconnections within a subgraph. For example, biclique requires that any vertex from one side \emph{connects} \emph{all} vertices from the other side, $(\alpha, \beta)$-core requires that any vertex from one side \emph{connects} at least \emph{a certain} number of vertices from the other side, and $k$-biplex requires that any vertex from one side \emph{disconnects} \emph{at most} $k$ vertices from the other side, where $k$ is a small positive integer {\roundF specified by users}.

In this paper, we study the problem of enumerating \emph{\underline{M}aximal $k$-\underline{B}i\underline{P}lexes}, called \emph{MBP}s, for the following considerations.
First, $k$-biplex imposes a strict enough requirement on the connections within a subgraph yet allows some disconnections which are common in real applications (due to data quality issues such as {\roundF incomplete data}). 
%
Second, $k$-biplex satisfies the \emph{hereditary property}~\cite{DBLP:journals/jcss/CohenKS08}, i.e., any subgraph of a $k$-biplex is {\roundF still a $k$-biplex}. This can be utilized to design efficient frameworks for {\roundF enumerating} MBPs (details will be discussed later).
%
Third, other definitions have some shortcomings when used for the aforementioned applications. Specifically, 
(1)~biclique may impose a too strict requirement on the connections, i.e., with even one single connection missed from a biclique, the subgraph is no longer a biclique. (2)~$(\alpha, \beta)$-core and $k$-bitruss can be computed efficiently, but they 
impose no constraints on the disconnections within a subgraph, i.e., a vertex may disconnect many vertices from the other side. (3)~$\delta$-quasi-biclique does not satisfy the hereditary property {\roundF so} enumerating the corresponding maximal structures is much harder than enumerating MBPs. 
{\ChengComment We {\roundF have conducted} some case studies, which show that $k$-biplex works better for fraud detection on e-commerce platforms and captures more cohesive subgraphs than other definitions including biclique and $(\alpha, \beta)$-core (details will be presented in Section~\ref{subsec:case-study}).}




\if 0
It has been used to find groups of stocks sharing similar price performance from stock market \cite{DBLP:journals/isci/SimLGL11,DBLP:journals/sadm/SimLGL09} and groups of proteins within the same domain from biological data \cite{DBLP:journals/sadm/SimLGL09}. Due to the noisy and incomplete data, MBP leads to more significant discoveries than using maximal bicliques \cite{DBLP:journals/isci/SimLGL11,DBLP:journals/sadm/SimLGL09}.
\fi

There are several existing methods, which can be used or adapted for enumerating MBPs. The first one is called \texttt{iMB}~\cite{DBLP:journals/sadm/SimLGL09,yu2021efficient}{\roundF, which} uses two prefix trees {\roundF to organize} the subsets of vertices of {\roundF the two} sides and searches MBPs with backtracking and various pruning strategies. Nevertheless, it suffers from {\roundG two} issues: (1) it is not scalable and cannot handle big graphs since most of its pruning techniques {\roundF rely highly} on some size constraints imposed on the $k$-biplexes to enumerate; (2) the delay, which represents the amount of {\roundF waiting time} for the next MBP {\roundF to return or for} the termination of the algorithm, is exponential w.r.t. the number of vertices of the bipartite graph. The second {\roundF baseline} is based on \emph{graph inflation}. Specifically, it first inflates a given bipartite graph into a general one by including an edge between every pair of two vertices at the same side and then enumerates all maximal $(k\!+\!1)$-plexes on the inflated general graphs (for which, the state-of-the-art is \texttt{FaPlexen}~\cite{DBLP:conf/aaai/ZhouXGXJ20}). Here, a $(k+1)$-plex on a general graph represents a subgraph, where each vertex $v$ disconnects at most $(k+1)$ vertices (including $v$) within the subgraph~\cite{DBLP:conf/sigmod/BerlowitzCK15}. The correctness of this method is based on the fact that a $k$-biplex on the bipartite graph corresponds to a $(k+1)$-plex on the inflated general graph. Nevertheless, the graph inflation step would usually generate very dense graphs 
and enumerating $(k+1)$-plexes on a dense graph is rather time-consuming. 

{\roundA In this paper, we adopt a \emph{reverse search} framework~\cite{DBLP:journals/jcss/CohenKS08} which we call \texttt{bTraversal} for enumerating MBPs.  \texttt{bTraversal} is originally designed for enumerating maximal subgraph structures that satisfy the hereditary property (each such structure is called a \emph{solution}). 
%
The key insight is that given a solution $H$, it is possible to find another solution by \emph{excluding} some existing vertices from $H$ and then \emph{including} some new ones to $H$. %
Specifically, \texttt{bTraversal} involves two steps. First, it finds one solution $H_0$ as the initial one, which can be any one among all solutions. 
Second, it finds solutions from $H_0$ via a procedure of excluding and including vertices from and to $H_0$ and then recursively performs the procedure from those found solutions until no new solutions are found. Suppose we take each solution as a vertex and create a directed edge from solution $H$ to solution $H'$ if \texttt{bTraversal} can find $H'$ from $H$ via the aforementioned procedure of excluding and including vertices. Then, we obtain a graph structure on top of all {\roundF found} solutions, which is called a \emph{solution graph}~\cite{DBLP:conf/stoc/ConteU19}. We refer to the vertices and directed edges in the solution graph as nodes and links, respectively. Then, \texttt{bTraversal} corresponds to a \emph{depth-first search} (DFS) procedure over the solution graph. According to~\cite{DBLP:conf/stoc/ConteU19}, the solution graph constructed in this way is \emph{strongly connected} and thus \texttt{bTraversal} is able to enumerate all solutions starting from any solution.
}

Nevertheless, \texttt{bTraversal} is still insufficient in the following aspects. First, \texttt{bTraversal} requires that any solution should be reachable from any other solution in the solution graph (i.e., the solution graph is strongly connected) so that it can find \emph{all} solutions from \emph{any} initial solution. To fulfill this requirement, it would find many solutions from one solution. Consequently, the underlying solution graph tends to be dense and the DFS procedure on the solution graph would be costly. Note that the time complexity of DFS is proportional to the number of links in the solution graph. 
Second, \texttt{bTraversal} is originally designed for general structures that satisfy the hereditary property, but not just for the $k$-biplexes. Consequently, it overlooks those unique characteristics of $k$-biplexes that would otherwise help to improve the algorithm.
Third, \texttt{bTraversal} does not support enumerating MBPs with size at least a threshold (called large MBPs) - it {\roundF has} to enumerate all MBPs first and then filter out the MBPs violating the size constraint, which is inefficient.

We observe that the requirement of a strongly connected solution graph by \texttt{bTraversal} is stronger than necessary. In fact, it would be sufficient as long as all solutions are reachable from some \emph{specific} solution since we can then start the DFS procedure from this solution {\roundF to} reach all solutions. Motivated by this observation, in this paper, we propose an improved framework called \texttt{iTraversal}, which {\roundF begins} DFS from a carefully selected solution. Specifically, it selects $H_0 = (L_0, \mathcal{R})$ as the initial solution, where $\mathcal{R}$ is the set containing all vertices from the right side of the bipartite graph and $L_0$ is any maximal set of vertices from the left side with $(L_0, \mathcal{R})$ being a $k$-biplex. With this designated initial solution, \texttt{iTraversal} makes it possible to significantly sparsify the solution graph that is defined by \texttt{bTraversal} while maintaining that all solutions are reachable from this initial solution. Specifically, we develop a series of three techniques for sparsifying the solution graph, namely (1) left-anchored traversal, (2) right-shrinking traversal, and (3) exclusion strategy. Technique (1) is based on a discovery that in one step of \emph{including} vertices {\roundF to generate new} solutions from a certain solution, pruning all vertices on the right side from being included would still guarantee that all solutions are reachable from {\roundF our} initial solution. Technique (2) is based on another discovery that by retaining only those links from a solution $H= (L, R)$ to another solution $H' = (L', R')$ with $R'\subseteq R$ and removing all {\roundF the} other links, all solutions are still reachable from the initial solution. Technique (3) is a technique that prunes a vertex from being included {\roundF to find new} solutions from a solution if the vertex appears in an \emph{exclusion set} that is maintained during the running {\roundF of \texttt{iTraversal}}. 
Based on our experimental results, the number of links in the solution graph of \texttt{iTraversal} sparsified with the three techniques is about 0.1\% of that in the original solution graph of \texttt{bTraversal}.

{\roundA In summary, our major contributions are {\roundF summarized} as follows.
\begin{itemize}[leftmargin=*]
    \item We propose a new framework \texttt{iTraversal} for enumerating MBPs. We further develop three techniques, namely left-anchored traversal, right-shrinking traversal, and exclusion strategy, for sparsifying the solution graph under \texttt{iTraversal}. We prove that the delay of finding the next solution with \texttt{iTraversal} is \emph{polynomial} w.r.t. the number of vertices and improves that of the conventional \texttt{bTraversal} {\roundF framework} (Section~\ref{sec:Traversal}). 
    {\ChengCommentC We remark that (1) the first two techniques are novel and work only with \texttt{iTraversal} but not with \texttt{bTraversal} and (2) the third technique  was proposed for \texttt{bTraversal} but its correctness for \texttt{iTraversal}, which is not trivial to prove, is verified in this paper.}
    
    \item We design {\roundF an efficient algorithm for a key procedure} that is involved in the \texttt{iTraversal} framework, which is to enumerate solutions within a graph that \emph{almost} satisfies the definition of $k$-biplex 
    (Section~\ref{sec:enumalmostsat}).
    
    \item {\ChengComment We extend \texttt{iTraversal} to enumerate those MBPs with the size of at least a threshold (i.e., large MBPs) without enumerating all MBPs, which is not possible {\roundF when using} the conventional \texttt{bTraversal}} {\roundF framework} (Section~\ref{sec:size-constrained}).
    
    \item We conduct extensive experiments on both real and synthetic datasets, which verify that {\ChengCommentB (1) $k$-biplex works better in a fraud detection task than some other structures including biclique and $(\alpha, \beta)$-core and captures cohesive subgraphs and} {\revision (2) the proposed algorithms with new techniques are up to {\roundB four} orders of magnitude faster than existing algorithms including the one based on \texttt{bTaversal} (Section~\ref{sec:exp}).}
\end{itemize}  }
{\roundB {\roundF Among other sections, we define our} problem in Section~\ref{sec:problem}, review the related work in Section~\ref{related} and conclude the paper in Section~\ref{sec:conclusion}.}


\if 0


Bipartite graphs are widely used to model diverse real-world applications with two different types of entities, such as user-product networks \cite{wang2006unifying}, collaboration networks \cite{ley2002dblp}, costumer-webpage networks \cite{beutel2013copycatch} and gene co-expression networks \cite{kaytoue2011mining}. Many mining tasks have been proposed on these bipartite graphs, among which some focus on discovering dense or cohesive structures to characterize the whole networks. Consequently, various dense structures extended from unipartite graphs have been investigated extensively, e.g., biclique \cite{kloster2019mining, zhang2014finding}, bitruss \cite{wang2020efficient}, $(\alpha,\beta)$-core \cite{liu2019efficient}, butterfly \cite{sanei2018butterfly}, etc. Among those structures, biclique with fully connected edges between two disjoint vertex sets is recognized as a natural extension to bipartite graphs. However, its strict definition limits flexibilities and practicalities in many real-world applications. For example, we would lose some solutions due to the inevitable noisy and missing data during the collection process. In addition, we sometimes are required to find structures that are similar to but not exactly the same as bicliques.

Motivated by this, we shift from bicliques to \emph{$k$-biplexes} which correspond to the traditional bicliques with $k$ of missing edges tolerable. To be specific, a $k$-biplex with two disjoint vertex sets $L$ and $R$ is a subgraph of the bipartite graph, such that each vertex in $L(R)$ is allowed to be disconnected from at most $k$ vertices in $R(L)$. To reduce redundancies, we propose to study \emph{maximal} $k$-biplex enumeration problem in the paper, which enumerates all maximal $k$-biplexes in a given bipartite graph. For instance, Figure. \ref{fig:example_case} depicts a maximal 1-biplex (on the right) of a given example graph (on the left), where dashed lines mean missing edges, i.e., $(v_0,u_4)$, $(v_2,u_3)$ and $(v_3,u_0)$. It is clear that each vertex on both side misses at most 1 edge.
\begin{figure}[]
	\centering
	\includegraphics[width=0.88\linewidth]{draft/figure/example_case}
	\vspace{-0.10in}
	\caption{A user-movie network.}
	\label{fig:example_case}
	\vspace{-0.10in}
\end{figure}

{
\noindent{\textbf{Applications.}} Enumerating maximal $k$-biplexes can be easily incorporated in various real applications. 

\noindent{$\bullet$} \textit{Fake news detection.}
In social media, there exist groups of fake users employed by companies to propagate fake news, i.e., articles or comments with fabricated contents \cite{zhang2020overview}. Therefore, extensive studies have devoted efforts to solve the fake news detection problem \cite{shu2019defend,yang2019unsupervised}, which in turn makes fake users pretend to behave as normal users do. 
In this case, fake users would seldom promote exactly the same group of fake news together,
i.e., the group of fake users and the group of news the users promote would not form a biclique structure since some edges are missing.
Therefore, biclique structures are not good choices for these applications. 
A better choice is to use $k$-biplex 
which tolerates some edges and also provides the flexibility for users to set different tolerance level.

\noindent{$\bullet$} \textit{Product Recommendation.}
In customer-product networks, an important task is to recommend products to a group of customers who have similar interests, i.e., sharing movies with a user's friends \cite{yuan2014generative, DBLP:conf/www/GorlaLR013,DBLP:journals/pvldb/Amer-YahiaRCDY09}. To address the issue of incomplete data with noisy and missing values, a fault-tolerant recommendation has been proposed where a key step is to discover cohesive customer groups \cite{DBLP:conf/kdd/PoernomoG09a, DBLP:conf/icdm/GunnemannMRS11},
which could potentially be captured by $k$-biplexs.

\noindent{$\bullet$} \textit{Protein network analysis.}
In protein-protein interaction networks, recent works aim to discover interacting pairs of protein groups. 
These groups could be modeled as maximal $k$-biplexes instead of maximal bicliques based on two key observations. First, most-to-most interactions are more general in pairs of protein groups, while those all-to-all interactions are rarities \cite{tong2002combined}. Second, biological data is usually incomplete and noisy with low quality \cite{chiang2007coverage}. Therefore, $k$-biplex can not only fulfill the flexible requirements but also tolerate the missing data.
}

\noindent{\textbf{Existing Approaches and Challenges.}} In the existing literature, there are several distinct methods that can be adopted to solve our enumeration problem. 

\noindent (1) \textit{Graph inflation.} Basically, $k$-biplex is a generalization of the well-studied $k$-plex. Here, a $k$-plex refers to a induced subgraph of the general graph such that each vertex is allowed to disconnected from most $k$ vertices (including itself). A $k$-biplex can be equivalently transformed to a $(k+1)$-plex by the graph inflation, which adds edges to each pair of vertices in the same side. The algorithms designed for the maximal $k$-plex enumeration problem can be adopted to list $k$-biplexes. In particular, \textit{FaPlexen} \cite{DBLP:conf/aaai/ZhouXGXJ20} is the state-of-the-art method which originates from the Bron-Kerbosch algorithm \cite{bron1973algorithm} and further improves the worst-case running time by employing a novel pivot heuristic and branching strategies. To utilize \textit{FaPlexen}, we need to inflate the bipartite graph to the general graph. However, the resulted dense graph with large number of edges limits its scalability and practicality. For example, a empty bipartite graph $G(L\cup R,E)$ with $|L|=|R|=n$ and $|E|=0$ would be inflated to a general graph with density $d=(n^2-n)/(2n^2-n)$ nearly 0.5.

\noindent (2) \textit{Tree-based methods.} One approach proposed in {\cite{DBLP:journals/sadm/SimLGL09}} for mining all maximal $k$-biplexes with size constraint, denoted as \textit{MQBminer}, is performed on constructed prefix trees by incorporating backtracking and branch-and-bound techniques to prune away some subtrees that cannot generate any maximal $k$-biplex. Its efficiency depends on the scale of search space, for which it thus employs several pruning strategies. However, most of their pruning techniques depend on a size constraint $\theta$, i.e., each side of a $k$-biplex has at least $\theta$ vertices, which become either inefficient or ineffective in the setting of $\theta=0$. Besides, the delay between two solutions is exponential of the size of the given graph in the worst case.

\noindent (3) \textit{Traversal-based methods.} Other techniques of solving an enumeration problem propose to iteratively employ a yielded maximal subgraph $S$ and a vertex $v$ not in $S$ to generate new maximal subgraphs, while ensuring all solutions can be found by continuing to this process until some stopping criterion is satisfied \cite{DBLP:conf/stoc/ConteU19, DBLP:journals/jcss/CohenKS08, DBLP:journals/siamcomp/TsukiyamaIAS77}. Following this framework, \cite{DBLP:conf/sigmod/BerlowitzCK15} implement an efficient algorithm to enumerating all maximal $k$-plexes in general graphs, which can be easily adopted to solve our problem. To be specific, it contains three key steps: (1) It starts with a random solution $S_0$ and iteratively constructs a new almost satisfied graph by adding a vertex $v$ that is not in $S_0$; (2) For each constructed graph, we enumerates all local maximal solutions $S'$; (3) For each local maximal solution $S'$, we extend it to global maximal solution $S_i$ and check the redundancy. Then, we repeat the same procedures for the new yielded solution until all of them are processed. To employ it for our problem, we need to adopt several sub-procedures, e.g., enumerating all local maximal $k$-plexes, to the $k$-biplex setting by graph inflation. Although it can guarantee a polynomial delay with respect to $k$, the bottleneck of first two steps, enumerating large number of vertices and local maximal solutions, lead to a dense solution graph with too many edges, which limits its scalability to large datasets.

\noindent{\textbf{Contributions.}} In this paper, we overcome these challenges and make the following contributions. Intuitively, the proposed algorithm is a traversal-based approach, where we aim to construct a sparse solution graph via employing less vertices and local maximal solutions to generate the next solution.

\noindent{(1)} \textit{One-side traversal.} 
Since the existing approach starts the algorithm with a random solution, we propose to begin with a particular initial solution. The unique property of the initial solution allows it to only add one side vertices of the given bipartite graph to the yielded solutions in the first step, which highly reduces the number of recursive calls and leads to a sparse solution graph.

\noindent{(2)} \textit{Non-increasing traversal.} Based on the one-side traversal strategy, we propose to further reduce links in solution graphs, resulting in traversing along some particular links called non-increasing links. Note that the remaining links correspond to a kind of generation between two solutions. Specifically, we only allow the generation from $S_i(X_i,Y_i)$ to $S_{i+1}(X_{i+1},Y_{i+1})$ such that $Y_{i+1}\subseteq Y_i$. 

\noindent{(3)} \textit{Exclusion techniques.}  We further enhance our one-side non-increasing traversal by incorporating the exclusion techniques, which would construct less almost satisfied graphs.

\noindent{(4)} \textit{Polynomial delay.} Combining above strategies, we propose a general framework to enumerate maximal subgraphs with particular properties in bipartite graphs, which not only work for $k$-biplexes. We further guarantee the delay of proposed algorithms is polynomial with respect to $k$. 

\noindent{\textbf{Organization.}} The rest of the paper is organized as follows. Section \ref{sec:problem} presents the problem definition. Our main contribution of an efficient traversal-based framework is presented in Section \ref{sec:Traversal}. Section \ref{sec:enumalmostsat} provides a specific implementation of the key sub-procedure. Section \ref{sec:size-constrained} adopts the proposed algorithm to solve the size-constrained problem. Section \ref{sec:exp} reports experimental results on real and synthetic datasets. Section \ref{related} presents the related work. Section \ref{sec:conclusion} concludes the paper. {Due to space limitation, for some lemmas, we show the complete proof in the technical report.}
\fi
\section{Problem Definition}
\label{sec:problem}

In this paper, we consider an undirected and unweighted bipartite graph $G = (\mathcal{L} \cup \mathcal{R},\mathcal{E})$ with two disjoint vertex sets $\mathcal{L}$, $\mathcal{R}$ and an edge set $\mathcal{E}$. $\mathcal{L}$ and $\mathcal{R}$ are supposed to be on the left and right side, respectively.
{\roundA We denote by $V(G)$ the set of vertices in $G$, i.e., $V(G) = \mathcal{L} \cup \mathcal{R}$, and by $E(G)$ the set of edges in $G$, i.e., $E(G) = \mathcal{E}$.}
{\roundA Given  $L \subseteq \mathcal{L}$ and $R \subseteq \mathcal{R}$, the {\em induced (bipartite) subgraph} $G[L \cup R]$ of $G$ consists of the set of vertices $L\cup R$ and the set of edges between $L$ and $R$.}
{
Note that all subgraphs mentioned in this paper refer to induced subgraphs, and we use $H$ or $(L,R)$ as a shorthand of $H=G[L\cup R]$.}  

{\roundC
Given $v\in \mathcal{L}$ and $R\subseteq \mathcal{R}$,
we define $\Gamma(v, R)$ (resp. $\overline{\Gamma}(v, R)$) to be the set of vertices that are in $R$ and connect (resp. disconnect) $v$, i.e., $\Gamma(v, R)=\{u\mid (v,u)\in \mathcal{E} \text{ and } u \in R \}$ (resp. $\overline {\Gamma}(v, R)=\{u\mid (v,u)\not \in \mathcal{E} $ and $u \in R \}$). 
Note that $\Gamma(v, R) \cup \overline{\Gamma}(v, R) = R$.
In addition, we define $\delta(v,R) = |\Gamma(v, R)|$ and $\overline {\delta}(v,R) = |\overline {\Gamma}(v, R)|$.
Given $u\in \mathcal{R}$ and $L\subseteq \mathcal{L}$,
$\Gamma(u,L)$ (resp. $\overline {\Gamma}(u,L)$) and $\delta(u,L)$ (resp. $\overline {\delta}(u,L)$) are similarly defined. Next, we introduce the cohesive bipartite structure $k$-biplex that is exploited in this paper.
}
\begin{definition}[$k$-biplex \cite{DBLP:journals/sadm/SimLGL09}]
	{\revision Let $k$ be a positive integer.} An induced subgraph $G[L \cup R]$ of a bipartite graph $G$ is said to be a {\em $k$-biplex} if {\roundB (1) $\overline {\delta}(v,R) \leq k$, $\forall v \in L$ and (2) $\overline {\delta}(u,L) \leq k$, $\forall u \in R$.}
	\label{definition:k-biplex}
\end{definition}
For a $k$-biplex, parameter $k$ represents the number of missing edges that each vertex in $G[L\cup R]$ can tolerate.
{\ChengCommentB We note that it is possible to use different $k$'s at different sides and the techniques developed in this paper can be easily adapted to this case.}
Moreover, the $k$-biplex structures satisfy the \emph{hereditary property}~\cite{DBLP:journals/jcss/CohenKS08}, which we present in the following lemma.
\begin{lemma} [Hereditary Property]
	If $H=(L,R)$ is a $k$-biplex, any subgraph $H'=(L',R')$ of $H$ with $L'\subseteq L, R'\subseteq R$ is a $k$-biplex.
\end{lemma}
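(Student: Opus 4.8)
The plan is to verify $H'=(L',R')$ against the two defining conditions of Definition~\ref{definition:k-biplex} directly, namely that $\overline{\delta}(v,R')\leq k$ for every $v\in L'$ and $\overline{\delta}(u,L')\leq k$ for every $u\in R'$. The whole argument rests on a single monotonicity observation: for a fixed vertex, shrinking the vertex set on the \emph{opposite} side can only remove, never add, non-neighbors. Since the definition of $k$-biplex is symmetric in the two sides, I would establish one condition and then obtain the other by an identical argument with the roles of $\mathcal{L}$ and $\mathcal{R}$ swapped.

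First I would fix an arbitrary $v\in L'$. Because $L'\subseteq L$, we have $v\in L$, so the hypothesis that $H$ is a $k$-biplex yields $\overline{\delta}(v,R)\leq k$. Now, since $R'\subseteq R$, every non-neighbor of $v$ lying in $R'$ is in particular a non-neighbor of $v$ lying in $R$; formally,
\[
\overline{\Gamma}(v,R')=\{u\mid (v,u)\notin\mathcal{E} \text{ and } u\in R'\}\subseteq \{u\mid (v,u)\notin\mathcal{E} \text{ and } u\in R\}=\overline{\Gamma}(v,R).
\]
Taking cardinalities of both sides gives $\overline{\delta}(v,R')=|\overline{\Gamma}(v,R')|\leq |\overline{\Gamma}(v,R)|=\overline{\delta}(v,R)\leq k$, which is exactly the first defining condition. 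The second condition follows symmetrically: fixing $u\in R'\subseteq R$, the hypothesis gives $\overline{\delta}(u,L)\leq k$, and $L'\subseteq L$ forces $\overline{\Gamma}(u,L')\subseteq\overline{\Gamma}(u,L)$, whence $\overline{\delta}(u,L')\leq\overline{\delta}(u,L)\leq k$. With both conditions verified, $H'$ is a $k$-biplex.

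I do not anticipate a genuine obstacle here, as the statement is an immediate consequence of the set-inclusion monotonicity of the non-neighborhood function $\overline{\Gamma}(\cdot,\cdot)$ in its second argument. The only point that requires a small amount of care is to check \emph{both} conditions rather than just one: the $k$-biplex definition constrains vertices on each side with respect to the other, so a complete proof must treat the left-side and right-side requirements separately (even though the two arguments are mirror images). This is why I would phrase the monotonicity observation once in a side-agnostic way and then apply it twice.
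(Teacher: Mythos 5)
Your proof is correct and follows exactly the same idea as the paper's (one-line) justification: excluding vertices can only decrease each remaining vertex's number of disconnections, so the bound $k$ is preserved. You simply spell out the monotonicity of $\overline{\Gamma}(\cdot,\cdot)$ in its second argument and apply it to both sides, which is a fine, more detailed rendering of the same argument.
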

This can be easily verified by the fact that with some vertices excluded from a $k$-biplex, each remaining vertex has the number of disconnections non-increasing, i.e., still bounded by $k$. 

As there might exist a large number of $k$-biplexes, one common practice is to return a compact representation of the set of all $k$-biplexes, namely the set of all \emph{maximal $k$-biplexes}.

\begin{definition}[Maximal $k$-biplex]
	A $k$-biplex $G[L\cup R]$ is said to be \emph{maximal} if and only if there is no other $k$-biplex $G[L'\cup R']$ which is a superset of $G[L\cup R]$ (i.e., $L \subseteq L'$ and $R \subseteq R'$).
	\label{definition:maximal-k-biplex}
\end{definition}
In this paper, we use MBP as a shorthand of maximal $k$-biplex when the context is clear.
We are ready to formalize the problem exploited in this paper:
\begin{problem}[Maximal $k$-biplex Enumeration \cite{DBLP:journals/sadm/SimLGL09}]
	Given a bipartite graph $G = (\mathcal{L} \cup \mathcal{R}, \mathcal{E})$ and a positive integer $k$, 
	the {\em Maximal $k$-biplex Enumeration Problem} aims to report all MBPs.
\end{problem}

We call each maximal induced subgraph of graph $G$ that is a $k$-biplex as a \emph{solution}. The maximal $k$-biplex enumeration problem is to enumerate all solutions. We use the terms ``maximal $k$-biplex (MBP)'' and ``solution'' interchangeably throughout this paper.

\if 0
In some real applications, small cohesive subgraphs, e.g., communities with one or two members, are not significant. However, a maximal $k$-biplex can be small, e.g., an independent set of $2\times k$ vertices. To bridge the gap, we propose the size-constrained problem.  
\begin{problem}[Large Maximal $k$-biplex Enumeration (LMax$k$BE) \cite{DBLP:journals/sadm/SimLGL09}]
	Given a bipartite graph $G = (\mathcal{L} \cup \mathcal{R}, \mathcal{E})$, a positive integer $k$ and a non-negative integer $\theta$, the {\em Large Maximal $k$-biplex Enumeration Problem} aims to report all MBPs $H=(L,R)$ that are of size larger than $\theta$, i.e., $|L|\geq \theta$ and $|R|\geq \theta$.
\end{problem}
\fi

\if 0
\noindent\textbf{Relationship with Biclique.}
A biclique corresponds to a bipartite graph, where each vertex at one side is linked to each vertex at the other side \cite{zhang2014finding}, i.e., a counterpart of clique in bipartite graphs. It is clear that a $k$-biplex with $k=0$ reduces to a \emph{biclique}. That is, conventional bicliques are special instances of $k$-biplexes. Therefore, the techniques developed in this paper can be used for enumerating maximal bicliques, but not vice versa. 

\smallskip\noindent\textbf{Relationship with $k$-plex.}
A $k$-plex corresponds to an induced subgraph of a \emph{general graph},
where each vertex in the subgraph disconnects at most $k$ vertexes (including itself) in the same subgraph. 
Note that a $k$-plex with $k = 1$ reduces to a clique.
Note that a $k$-biplex in a bipartite graph $G(\mathcal{L} \cup \mathcal{R}, \mathcal{E})$ would become a \emph{$(k+1)$-plex} in a ``inflated'' graph $G'(\mathcal{L}\cup \mathcal{R}, \mathcal{E}')$ by creating an edge between any two vertexes at the same side, i.e., $\mathcal{E}' = \mathcal{E} \cup \{(u, u') | u\in \mathcal{L}, u'\in \mathcal{L}\} \cup \{(v,v')|v\in \mathcal{R}, v'\in \mathcal{R}\}$.
Note that $G'$ becomes a non-bipartite graph.
This is because before the ``inflation'', each vertex in the $k$-biplex disconnects at most $k$ vertexes from the other side, and after the ``inflation'', each vertex would disconnects at most $(k+1)$ vertexs \emph{from both sides} (including at most $k$ from the other side and exactly $1$ from the same side).
Therefore, a baseline solution for the MBP enumeration problem could be one with two steps: (1) to inflate the bipartite graph by creating edges among vertexes at the same side and (2) to execute an existing algorithm for enumerating maximal $k$-plexes on the inflated graph.
While this baseline makes it possible to leverage existing algorithms for $k$-plexes, it suffers a severe scalability issue. For example, for a sparse bipartite graph $G(\mathcal{L}\cup \mathcal{R}, \mathcal{E})$ with no edges and $|\mathcal{L}|$=$|\mathcal{R}|$=$n$, the inflated graph $G'$ would have a density $d$=$(n^2-n)/(2n^2-n)$ nearly 50\%.
As will be shown in our empirical study, this baseline (with the state-of-the-art algorithm for $k$-plexes) cannot run in a reasonable amount of time (e.g., 24 hours) on a graph with about 100K vertexes.

\noindent\textbf{Relationship with proportional quasi-biclique.}
A $\delta$-quasi biclique ($\delta\in (0,0.5]$ is a real number) corresponds to a bipartite graph, where each vertex in one side connects at least $(1-\delta)\times 100\%$ of vertices in the other side \cite{DBLP:conf/cocoon/LiuLW08}.
The tolerance (number of missing edges) of a $\delta$-quasi-biclique $H=(L,R)$ is relative to $L$ and $R$  instead of a constant. Hence,
it does not have hereditary property, i.e., an induced subgraph of $\delta$-quasi-biclique may not be a  $\delta$-quasi-biclique. This poses a critial issue in efficient mining of $\delta$-quasi-biclique. 
A $k$-biplex $H=(L,R)$ is such a quasi-biclique where $\delta=k/\min (|L|,|R|)$. Hence, the MBPs returned by LMax$k$BE would be a set of $k/\theta$-quasi-bicliques. \fi

\section{The \texttt{iTraversal} Algorithm}
\label{sec:Traversal}

\begin{figure}[]
	\centering
	\includegraphics[width=0.77\linewidth]{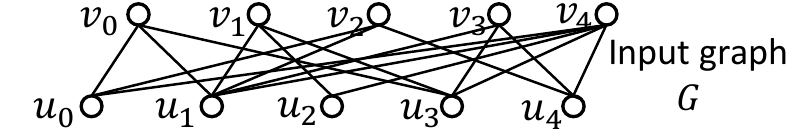}
	\vspace{-0.15in}
	\caption{Input graph used throughout the paper.}
	\label{fig:input_graph}
	\vspace{0.12in}
	\includegraphics[width=0.77\linewidth]{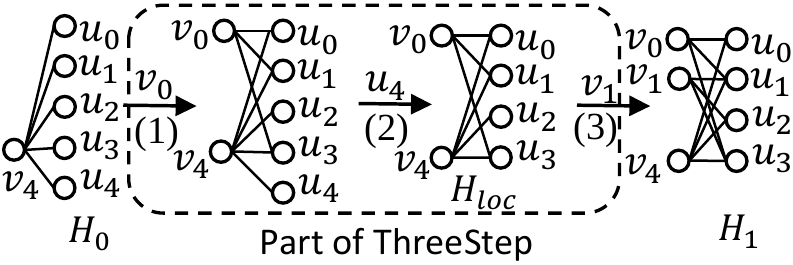}
	\vspace{-0.15in}
	\caption{Illustration of the \texttt{ThreeStep} procedure.}
	\label{fig:three-step}
\end{figure}

{\roundA We adopt a \emph{reverse search} framework which we call \texttt{bTraversal}~\cite{DBLP:journals/jcss/CohenKS08} for enumerating MBPs. \texttt{bTraversal} is a framework for enumerating maximal subgraph structures that satisfy the hereditary property. In the sequel, we review \texttt{bTraversal} in Section~\ref{subsec:bTraversal}.} We observe that \texttt{bTraversal} imposes a requirement that is more demanding than necessary and we relax it to achieve an improved framework called \texttt{iTraversal} in Section~\ref{subsec:iTraversal}. We then develop a series of three techniques, namely left-anchored traversal (Section~\ref{sec:one-side}), right-shrinking traversal (Section~\ref{sec:non-increasing}), and exclusion strategy (Section~\ref{sec:algorithm}) for further boosting \texttt{iTraversal}'s performance. Finally, we present a summary of \texttt{iTraversal} and its running time and delay in Section~\ref{sec:algorithm}.

\subsection{The Basic Framework: \texttt{bTraversal}}
\label{subsec:bTraversal}

The key insight of \texttt{bTraversal} is that given a solution $H$ (which corresponds to a set of vertices), it is possible to find another solution by \emph{excluding} some existing vertices from and \emph{including} some new ones to $H$. 
The rationale is that (1) due to the hereditary property, $H$ is a $k$-biplex so it will still be a $k$-biplex after some vertices are excluded; and (2) after some vertices are excluded from $H$, it becomes possible to include some new vertices to $H$ while retaining the property and thus it finds another solution. 
%

\begin{figure*}
	\hspace*{-.4cm}
	
	\centering
	\begin{tabular}{c c c c c c c}
		\begin{minipage}{3cm}
			\includegraphics[width=3.5cm]{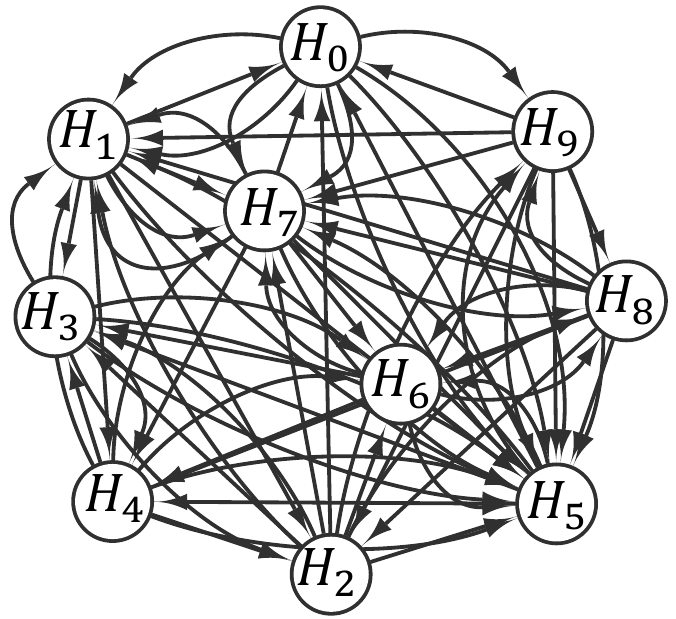}
		\end{minipage}
		& 
		&
		\begin{minipage}{3.5cm}
			\includegraphics[width=3.5cm]{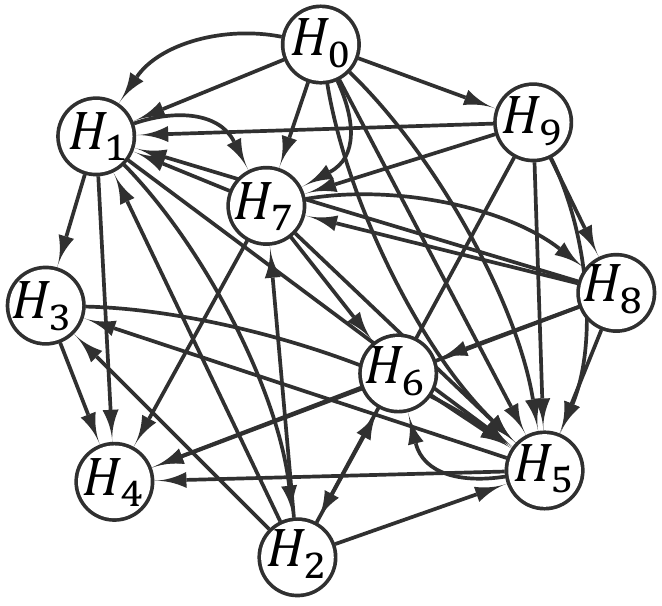}
		\end{minipage}
		&
		&
		\begin{minipage}{3cm}
			\includegraphics[width=3.5cm]{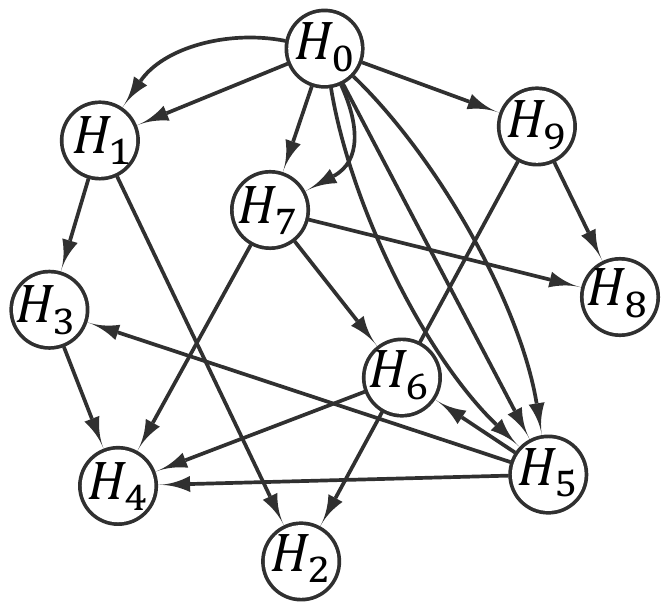}
		\end{minipage}
		& 
		&
		\begin{minipage}{3.5cm}
			\includegraphics[width=3.5cm]{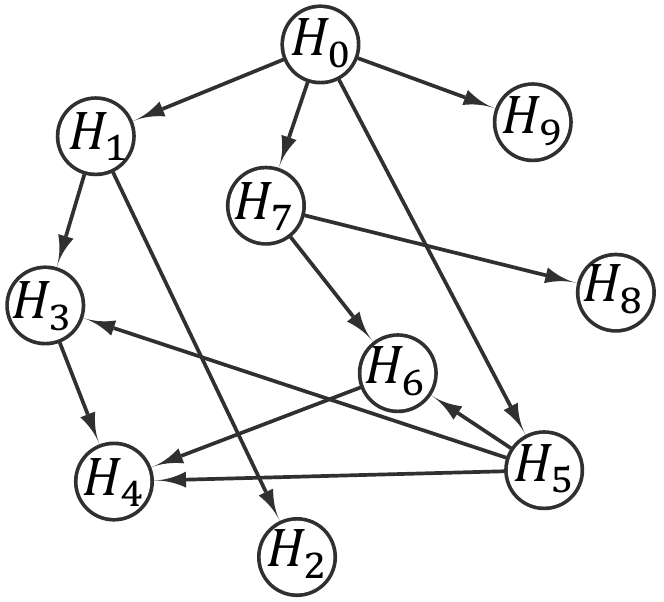}
		\end{minipage}
		\\
		(a) \texttt{bTraversal} ($\mathcal{G}$)
		& 
		&
		(b) Left-anchored traversal ($\mathcal{G}_L$)
		&
		&
		(c) Right-shrinking traversal ($\mathcal{G}_R$)
		& 
		&
		(d) \texttt{iTraversal} ($\mathcal{G}_E$)
	\end{tabular}
	\caption{Solution graphs underlying different algorithms (based on the input graph in Figure~\ref{fig:input_graph}).}
	\label{fig:solution_graph}
\end{figure*}

{Specifically, \texttt{bTraversal} first finds} one solution $H_0$ as the initial one, which can be any one among all solutions. This can be achieved easily, say, by iteratively including vertices to an initially empty set while retaining the $k$-biplex property until it is not possible to do so. {It then} finds solutions from $H_0$ via a procedure of excluding and including vertices from and to $H_0$ and recursively performs the procedure from those solutions found for finding more solutions until no new solutions are found. It uses the following three-step procedure for finding solutions from one solution $H$, which we call \texttt{ThreeStep}.
\begin{itemize}[leftmargin=*]
    {
	\item \textbf{Step 1 (Almost-satisfying graph formation).} 
	For each vertex $v\in V(G) \backslash V(H)$, it forms a new induced subgraph $G[V(H)\cup \{v\}]$ (or simply $G[H\cup v]$) by including $v$ to $H$. We call each such graph $G[H\cup v]$ an \emph{almost-satisfying graph} since it is not a $k$-biplex (since otherwise $H$ is not maximal) and would be so if one vertex, i.e., $v$, is excluded. 

	\item \textbf{Step 2 (Local solution enumeration).} 
	{\roundA For each almost-satisfying graph $G[H\cup v]$, it enumerates all (induced) subgraphs $H_{loc}$ of $G[H\cup v]$ that (1) involve $v$, (2) are $k$-biplexes, and (3) are \emph{maximal} w.r.t. $G[H\cup v]$ (which means that there exists no vertex $u\in V(H)\cup \{v\} \backslash V(H_{loc})$ such that $G[H_{loc}, u]$ is a $k$-biplex).} Essentially, it solves the MBP enumeration problem with the input of $G[H\cup v]$, which should be much easier than the original one with the input $G$. We call such a subgraph $H_{loc}$ a \emph{local solution} since it is maximal locally w.r.t. $G[H\cup v]$ and may not be maximal w.r.t. $G$. 
	We call the procedure of enumerating all local solutions within an almost-satisfying graph $G[H\cup v]$ \texttt{EnumAlmostSat}. In Section~\ref{sec:enumalmostsat}, we present an implementation of \texttt{EnumAlmostSat}.
}
	\item \textbf{Step 3 (Local solution extension).} 
	For each local solution $H_{loc}$, it extends $H_{loc}$ to a real solution $H'$ (i.e., $H'$ is maximal w.r.t. $G$), by iteratively including to $H_{loc}$ vertices from outside $H_{loc}$ until $H_{loc}$ becomes maximal w.r.t. $G$. We remark that during this step, for each local solution $H_{loc}$, it is extended to only one real solution $H'$, e.g., it includes vertices to $H_{loc}$ by following a pre-set order on all vertices. 
\end{itemize}

\begin{example}
Consider the \texttt{ThreeStep} for finding $H_1$ from $H_0$ in Figure \ref{fig:three-step} with $k=1$. We (1) form an almost-satisfying graph $G[H_0,v_0]$ by including $v_0$ to $H_0$, (2) find a local solution $H_{loc}$ by excluding $u_4$ from $G[H_0,v_0]$ and (3) extend $H_{loc}$ to $H_1$ by  including $v_1$ to $H_{loc}$ while retaining the $k$-biplex property.
\end{example}

{\roundA  We summarize the \texttt{bTraversal} algorithm in Algorithm~\ref{alg:btraversal}. A solution may be traversed from multiple solutions. To avoid duplication, a B-tree is used for storing those solutions that have been found, where the key of a solution is specified by the vertices of the solution (Line 1 and 7-8).} 

\begin{algorithm}{}
	\small
	\caption{The algorithm: {\tt \textit{bTraversal}}.}
	\label{alg:btraversal}
	\KwIn{Bipartite graph $G=(\mathcal{L}\cup \mathcal{R},\mathcal{E})$, integer $k\geq 1$;}
	\KwOut{All maximal $k$-biplexes;}
	
	\textbf{Initialize} $H_0$ as any maximal $k$-biplex, B-tree $\mathcal{T}=\{H_0\}$\; 
	\texttt{ThreeStep}$(G, H_0,\mathcal{T})$\;

	\SetKwBlock{Enum}{Procedure \texttt{ThreeStep}$(G, H,\mathcal{T})$}{}
	\Enum{
		(\textbf{Step 1}) \ForEach{ $v$ in $V(G)\backslash V(H)$ }{
		   
		(\textbf{Step 2}) \ForEach{ $H_{loc}$ in \texttt{EnumAlmostSat}$(G[H,v])$}{
		
				(\textbf{Step 3}) 
				Extend $H_{loc}$ to be a maximal $k$-biplex $H'$ with vertices from $V(G)\backslash V(H_{loc})$\;
				\If{$H'\notin \mathcal{T}$}{
					Insert $H'$ to $\mathcal{T}$\; 
					\texttt{ThreeStep}$(G, H',\mathcal{T})$\;
				}	
			}
	}
	}
\end{algorithm}

Suppose we take each solution as a node and create a directed edge from solution $H$ to solution $H'$ if \texttt{bTraversal} can find $H'$ from $H$ via the above three-step procedure. Then, we obtain a graph structure on top of all solutions. This graph structure is called a \emph{solution graph}~\cite{DBLP:conf/stoc/ConteU19}, which we denote by $\mathcal{G}$. {\roundB We note that a solution graph is a multi-graph since from one solution $H$, \texttt{bTraversal} may find another solution $H'$ by forming different almost-satisfying graphs.} We refer to the vertices and directed edges in the solution graph as nodes and links, respectively, and reserve the former notions for those in the graph $G$. Then, \texttt{bTraversal} corresponds to a \emph{depth-first search} (DFS) procedure over the solution graph $\mathcal{G}$. According to~\cite{DBLP:conf/stoc/ConteU19}, the solution graph $\mathcal{G}$ is \emph{strongly connected} and thus \texttt{bTraversal} is able to enumerate all solutions starting from any solution.
{\roundA  To illustrate, consider the input graph in Figure \ref{fig:input_graph} with $k=1$. The corresponding solution graph is shown in Figure \ref{fig:solution_graph}(a), which is strongly connected with 10 solutions and 76 links.}


\subsection{An Improved Framework: \texttt{iTraversal}}
\label{subsec:iTraversal}

\texttt{bTraversal} makes a requirement that any solution is reachable from any other solution in the solution graph (i.e., the solution graph is strongly connected) so that it can find \emph{all} solutions from \emph{any} initial solution. To fulfill this requirement, it would find many solutions from one solution. To see this, consider the above three-step procedure \texttt{ThreeStep}, where $O(|V(G)|)$ almost-satisfying graphs are formed (Step 1), for each almost-satisfying graph, an exponential number of local solutions are enumerated (Step 2), and each local solution is extended to a real solution (Step 3). Consequently, the underlying solution graph would be dense and the DFS procedure on the solution graph would be costly. Note that the time complexity of DFS is proportional to the number of links in the solution graph. 

We observe that the requirement of a strongly connected solution graph by \texttt{bTraversal} is stronger than necessary. In fact, it would be sufficient as long as all solutions are reachable from some \emph{specific} solution since we then can start the DFS procedure from this specific solution and reach all solutions. Motivated by this, in this paper, we propose an improved traversal framework called \texttt{iTraversal}, which performs the DFS procedure from some \emph{specific} but not \emph{arbitrary} solution on a solution graph. With a designated initial solution, \texttt{iTraversal} makes it possible to significantly sparsify the solution graph that is defined by \texttt{bTraversal} while maintaining that all solutions are reachable from the initial solution. 
To illustrate, consider again the example in Figure~\ref{fig:input_graph}. The solution graph that is defined by \texttt{bTraversal} is shown in Figure~\ref{fig:solution_graph}(a), which involves 76 links and is strongly connected. One solution graph that could be defined by \texttt{iTraversal} is shown in Figure~\ref{fig:solution_graph}(d) and the initial solution that could be chosen by \texttt{iTraversal} is $H_0=(v_4,\{u_i\}_{i=0}^4)$. All solutions are reachable from $H_0$ in the sparsified solution graph with 13 links.

One immediate question is: \emph{what is a good initial solution $H_0$ among all possible solutions?} We consider two desiderata: (1) $H_0$ can be computed easily and (2) the solution graph defined by \texttt{bTraversal} can be sparsified significantly (by dropping some links from the solution graph) while all solutions are still kept reachable from $H_0$. 


Our proposal is to use $H_0 = (L_0, \mathcal{R})$ as the initial solution, where $L_0$ is any maximal set of vertices from $\mathcal{L}$ with $(L_0, \mathcal{R})$ being a $k$-biplex~\footnote{An alternative proposal is $H_0 = (\mathcal{L}, R_0)$ that is defined symmetrically and all techniques proposed in this paper would still apply. These two proposals are symmetric and are evaluated empirically in experiments.}. To illustrate, consider the example in Figure~\ref{fig:input_graph} with $k=1$. We obtain $H_0=(L_0,\mathcal{R})$ where $L_0=\{v_4\}$ and $\mathcal{R}=\{u_0,u_1,u_2,u_3,u_4\}$. Next, we explain how $H_0$ meets the two desiderata.

Consider the first desideratum. We can construct $H_0$ easily as follows. First, we initialize $H_0$ as $(\emptyset, \mathcal{R})$. Note that $(\emptyset, \mathcal{R})$ is a $k$-biplex since there are no vertices at the left side (i.e., $\emptyset$) and for each vertex at right side (i.e., $\mathcal{R}$), it disconnects from no vertices from the left side. Second, we extend $H_0$ by iteratively including vertices from $\mathcal{L}$ while maintaining that $H_0$ is a $k$-biplex until this is not possible. At the end, $H_0$ corresponds to a maximal $k$-biplex, i.e., a solution. This process would check for each vertex from $\mathcal{L}$ whether it can be included to $H_0$, which is efficient.

Consider the second desideratum. With $H_0 = (L_0, \mathcal{R})$ as the initial solution, we are able to identify a set of paths to traverse from $H_0$ to all solutions {\roundA since $H_0$ includes $\mathcal{R}$ and can reach every solution by iteratively including vertices from the left side of the target solution and excluding vertices that are not in the target solution.}
Hence, we can drop a large amount of links that do not appear along any of these paths ({\roundA details are in Section~\ref{sec:one-side} and Section~\ref{sec:non-increasing}}). In this way, the solution graph $\mathcal{G}$ could be sparsified significantly while retaining that all solutions are reachable from $H_0$, as shown in Figure \ref{fig:solution_graph}(d).

\if 0
With $H_0 = (L_0, R)$ as the initial solution, there are two implications. First, for any possible solution $H' = (L', R')$, it has $R'\subseteq R$ which indicates us to only use the left-anchored vertices to form almost-satisfying graphs. Second, its right side is the universal set, which indicates us to approach a solution by iteratively excluding vertices from the right side without including.
More details are given in the following subsections.
\fi

\subsection{\texttt{iTraversal}: Left-anchored Traversal}
\label{sec:one-side}

%

Let $H'' = (L'', R'')$ be any solution that is different from the initial solution $H_0=(L_0,\mathcal{R})$ and $\mathcal{P} = \langle H_0, H_1, ..., H_n\rangle$ be a path from $H_0$ to $H_n=H''$ in $\mathcal{G}$. Consider the first link among the path $\mathcal{P}$, i.e., $\langle H_0, H_1\rangle$. Recall that when finding $H_1$ from $H_0$ via the procedure \texttt{ThreeStep} in Section~\ref{subsec:bTraversal}, it first \emph{includes} a vertex $v\in V(G) \backslash V(H_0)$ for forming an almost-satisfying graph (Step 1). We observe that this vertex $v$ is always from the left side since $V(G)\backslash V(H_0) = \mathcal{L} \backslash L_0$. We call such a link, which is formed by including a vertex from the left side for forming an almost-satisfying graph in the procedure \texttt{ThreeStep}, as a \emph{left-anchored} link. {\revision We note that left-anchored links are defined based on solutions but not intermediate ones (e.g., local solutions) and each link is either a left-anchored link or a non-left-anchored one.} {\roundA To illustrate, consider the link $\langle H_0,H_1 \rangle$ from $H_0$ to $H_1$ in Figure \ref{fig:three-step}.} It is a left-anchored link 
since the almost-satisfying graph is formed by including vertex $v_0\in \mathcal{L}$.

\begin{figure}[t]
	\centering
	\includegraphics[width=0.82\linewidth]{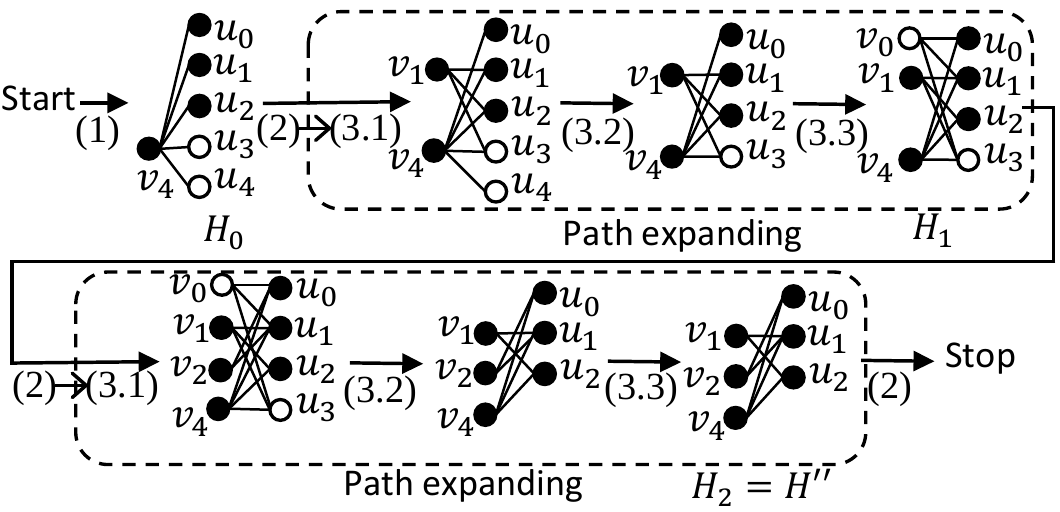}
	\caption{\revision
	Illustration of the four-step procedure of finding a path involving only left-anchored links, i.e., $\mathcal{P}_L(H'')$ (black nodes denote the vertices in the target solution $H''$).}
	\label{fig:four_step_procedure}
\end{figure}

Based on the above discussion, we know that the first link of any path from $H_0$ to $H''$ is a left-anchored link. This triggers the following question: \emph{can we always find a path from $H_0$ to $H''$, which involves left-anchored links only?} The answer is interestingly positive. 
In the following, we present a procedure, which defines for any solution $H''$ a path $\langle H_0,H_1,...,H_n \rangle$ with $H_n=H''$ in $\mathcal{G}$, which traverses from $H_0$ to $H''$ and involves left-anchored links only. We denote this path by $\mathcal{P}_L(H'')$. Specifically, the procedure has four steps and maintains the following invariant (which can be proved by induction).
\begin{equation}
    R'' \subseteq R_i, \text{~~for~~}i = 0, 1, ..., n
\end{equation}
\begin{itemize}[leftmargin=*]
	\item \textbf{Step 1: Path initialization.} Initialize $i$ to be 0. 
	Note that $R'' \subseteq R_i = \mathcal{R}$ (basis step for proving the invariant).
	
	\item \textbf{Step 2: Termination checking.} If $L'' \backslash L_i\! =\! \emptyset$, set $n\!=\!i$ and stop. 
	
	\item \textbf{Step 3: Path expanding.} Find another solution $H_{i+1}$ from $H_i$ via a \emph{left-anchored} link in $\mathcal{G}$ as follows:
		\begin{itemize}
			\item \textbf{Step 3.1.} Pick a vertex $v$ in $L'' \backslash L_i$ and form an almost-satisfying graph $G[H_i, v]$.
			
			\item \textbf{Step 3.2.} Find a local solution $H_{i+1}' = (L_{i+1}', R_{i+1}')$ by extending $((L''\cap L_i) \cup \{v\}, R'')$ to be maximal within $G[H_i, v]$. Note that $((L''\cap L_i) \cup \{v\}, R'')$ is: (1) a subgraph of $G[H_i, v]$ (since $(L''\cap L_i) \subseteq L_i$ and $R''\subseteq R_i$) and (2) a $k$-biplex (since it is subgraph of $H''$).
						
			\item \textbf{Step 3.3.} Extend $H_{i+1}'$ to be a MBP (within $G$), which we denote by $H_{i+1} = (L_{i+1}, R_{i+1})$. Note that $R''\subseteq R_{i+1}' \subseteq R_{i+1}$ (induction step for proving the invariant).
		\end{itemize}
	\item \textbf{Step 4: Repetition.} Increase $i$ by 1 and go to Step 2.
\end{itemize}

\begin{example}
\label{example:four-step}
Given solution $H''\!=\!(L'',R'')$, where $L''\!=\!\{v_1,v_2,v_4\}$ and $R''=\{u_0,u_1,u_2\}$, based on the input graph in Figure \ref{fig:input_graph} with $k=1$. We consider a path from $H_0=(L_0,\mathcal{R})$, where $L_0=\{v_4\}$, to $H''$ formed by the above procedure, as shown in Figure \ref{fig:four_step_procedure}. For the first round, we (1) initialize the path, (2) check $L''\backslash L_0=\{v_1,v_2\}$, (3.1) pick vertex $v_1$ and form an almost-satisfying graph $G[H_0,v_1]$, (3.2) find a local solution $(\{v_1,v_4\},\{u_0,u_1,u_2,u_3\})$ which includes $L''\cap L_0 \cup \{v_1\}=\{v_1,v_4\}$ and $R''$ and (3.3) extend it to a solution $H_1=(L_1,R_1)$ where $L_1=\{v_0,v_1,v_4\}$ and $R_1=\{u_0,u_1,u_2,u_3\}$. We repeat for the second round, (2) check $L''\backslash L_1=\{v_2\}$, (3.1) form an almost-satisfying graph $G[H_1,v_2]$, (3.2) find a local solution $(\{v_1,v_2,v_4\}, \{u_0,u_1,u_2\})$ which includes  $L''\cap L_1 \cup \{v_2\}=\{v_1,v_2,v_4\}$ and $R''$ and (3.3) extend it to a solution $H_2=(L_2,R_2)$ where $L_2=\{v_1,v_2,v_4\}$ and $R_2=\{u_0,u_1,u_2\}$. Finally, we check $L_2\backslash L''=\emptyset$ and get $H_2=H''$.
\end{example}

\begin{lemma}
	\label{lemma:left-anchored-path}
	The procedure of finding the path $\mathcal{P}_L(H'')$ for a given solution $H''$ would always terminate and path $\mathcal{P}_L(H'')$ ends at $H''$, i.e., $H_n = H''$.
\end{lemma}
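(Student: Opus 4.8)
The plan is to split the lemma into its two assertions—termination and the identity $H_n = H''$—and to prove both by tracking how the two sets $L'' \cap L_i$ and $R_i$ evolve along the constructed path. The engine of the argument will be two monotonicity facts: that the invariant $R'' \subseteq R_i$ holds throughout, and that $L'' \cap L_i$ strictly grows at every iteration. First I would establish the invariant $R'' \subseteq R_i$ by induction on $i$, as the procedure already hints. The base case $i=0$ is immediate from $R_0 = \mathcal{R} \supseteq R''$. For the inductive step, I would check that the seed $((L'' \cap L_i) \cup \{v\}, R'')$ used in Step~3.2 is a legitimate $k$-biplex subgraph of the almost-satisfying graph $G[H_i, v]$: its left part lies in $L_i \cup \{v\}$ and its right part is $R'' \subseteq R_i$ by the inductive hypothesis, while being a subgraph of $H''$ makes it a $k$-biplex by the hereditary property. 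Consequently the local solution $H_{i+1}'$ obtained by extending this seed still contains $R''$, and since Step~3.3 only \emph{adds} vertices, $R'' \subseteq R_{i+1}' \subseteq R_{i+1}$, closing the induction.

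For termination, which is the crux, I would show $|L'' \cap L_i|$ strictly increases. Because Step~3.2 extends a set already containing $(L'' \cap L_i) \cup \{v\}$ by add-only inclusions, and Step~3.3 again only includes vertices, we get $(L'' \cap L_i) \cup \{v\} \subseteq L_{i+1}' \subseteq L_{i+1}$. Intersecting with $L''$ and recalling that $v \in L'' \setminus L_i$ (so $v \in L''$ but $v \notin L'' \cap L_i$) yields $(L'' \cap L_i) \cup \{v\} \subseteq L'' \cap L_{i+1}$ with $v$ genuinely new, hence $|L'' \cap L_{i+1}| \geq |L'' \cap L_i| + 1$. As $|L'' \cap L_i|$ is bounded above by $|L''|$, the loop executes at most $|L''|$ times before the test $L'' \setminus L_i = \emptyset$ in Step~2 triggers, so the procedure terminates.

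For correctness at termination, suppose the loop stops at $i = n$. Then Step~2 gives $L'' \setminus L_n = \emptyset$, i.e.\ $L'' \subseteq L_n$, and the invariant gives $R'' \subseteq R_n$. Thus $H'' = (L'', R'')$ is a subgraph of $H_n = (L_n, R_n)$, and $H_n$ is itself a $k$-biplex (being a solution). Since $H''$ is a maximal $k$-biplex, by Definition~\ref{definition:maximal-k-biplex} no $k$-biplex can properly contain it, which forces $L_n = L''$ and $R_n = R''$, i.e.\ $H_n = H''$. I would also note in passing that each constructed step is exactly a \texttt{ThreeStep} transition that includes a vertex $v \in \mathcal{L} \subseteq V(G) \setminus V(H_i)$ in Step~1 (since $v \notin L_i$ and $v \notin R_i \subseteq \mathcal{R}$), so every link $\langle H_i, H_{i+1}\rangle$ is a genuine left-anchored link of $\mathcal{G}$ and $\mathcal{P}_L(H'')$ is a valid path.

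The main obstacle I anticipate is not any single deduction but the careful bookkeeping of the two containment chains through Steps~3.2 and 3.3: one must simultaneously guarantee that the seed is a subgraph of $G[H_i,v]$ \emph{and} of $H''$, and interpret ``extend'' faithfully as an add-only operation so that neither the pivot $v$ nor the previously captured vertices of $L''$ are ever discarded. Once these containments are pinned down, both the strict-growth argument for termination and the maximality argument for $H_n = H''$ follow immediately.
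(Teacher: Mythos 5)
Your proposal is correct and follows essentially the same route as the paper: the paper's termination argument uses the similarity measure $S(H_i,H'') = |L_i\cap L''| + |R_i\cap R''|$ and shows it increases by at least one per round (the $R$-part being constant by the invariant, the $L$-part growing by the newly included $v$), which is exactly your strict-growth argument for $|L''\cap L_i|$; the concluding step is likewise the same maximality/containment argument, which the paper merely phrases as a contradiction. No gaps.
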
 
\begin{proof}
We first prove that the procedure would always terminate. To this end, we define a similarity measurement between two MBPs. Given two MBPs $H\! =\! (L, R)$ and $H'\! =\! (L', R')$, we define the \emph{similarity} between $H$ and $H'$, denoted by $S(H, H')$, as the number of vertices that are shared by $H$ and $H'$, i.e., 
\begin{equation}
	S(H, H') = |V(H) \cap V(H')| = |L\cap L'| + |R\cap R'|. 
	\label{equation:solution-similarity}
\end{equation}
%
We deduce that $H_{i+1}$ shares at least one more vertex with $H''$ than $H_i$ for $i = 0, 1, ..., n-1$. That is, 
\begin{equation}
S(H_{i+1}, H'')\! \ge\! S(H_i, H'') + 1, \text{~~for~~}i=0, 1, ..., n-1.
\end{equation}
%
%
This is because (1) both $R_{i+1}$ and $R_i$ include $R''$ (based on the invariant of the procedure); and (2) $L_{i+1}$ includes all vertices that are shared by $L_i$ and $L''$ and at least one vertex $v$ from $L''\backslash L_i$.
Therefore, we further deduce that the procedure would always stop since the similarity to $H''$ increases by at least 1 after each round and it is bounded by $|H''|$.

We then prove that $H_n = H''$ by contradiction. Suppose $H_n \neq H''$. {\roundA We deduce that $H''$ would not be an MBP since $H_n = (L_n, R_n)$ is a larger k-biplex containing $H''$, given (1) $H''\subseteq H_n$ (since $L''\backslash L_n = \emptyset$ which means $L'' \subseteq L_n$ and $R''\subseteq R_n$ based on the invariant), (2) $H''\neq H_n$ based on the assumption, and (3) $H_n$ is a $k$-biplex.} This leads to a contradiction.
\end{proof}

In conclusion, we succeed in finding for any solution $H''$ a path that traverses from $H_0$ to $H''$ and involves left-anchored links only in $\mathcal{G}$. Therefore, we propose to \emph{drop} all non-left-anchored links from $\mathcal{G}$. We denote the resulting solution graph by $\mathcal{G}_L$. {\roundA For example, $\mathcal{G}_L$ based on the input graph in Figure~\ref{fig:input_graph} is shown in Figure~\ref{fig:solution_graph}(b), which involves 41 links and all solutions are reachable from $H_0$.} It is clear that a DFS procedure from $H_0$ on $\mathcal{G}_L$, which we call the \emph{left-anchored traversal}, would return all solutions. We present this result in the following lemma.

\begin{lemma}
	\label{lemma:left-anchored-traversal}
	{
	Given a bipartite graph $G=(\mathcal{L}\cup \mathcal{R},\mathcal{E})$ with an initial MBP $H_0 = (L_0,\mathcal{R})$, the left-anchored traversal enumerates all MBPs.
	}
\end{lemma}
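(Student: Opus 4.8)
The plan is to reduce the claim to a reachability statement in the sparsified solution graph $\mathcal{G}_L$ and then invoke Lemma~\ref{lemma:left-anchored-path}. Recall that the left-anchored traversal is precisely a DFS from $H_0$ over $\mathcal{G}_L$, where $\mathcal{G}_L$ is obtained from $\mathcal{G}$ by deleting every non-left-anchored link. Since a DFS started at $H_0$ visits exactly the set of nodes reachable from $H_0$ (and terminates, because the number of MBPs is finite), it suffices to show that every solution is reachable from $H_0$ along links of $\mathcal{G}_L$.

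First I would fix an arbitrary solution $H''=(L'',R'')$; if $H''=H_0$ the claim is trivial, so assume $H''\neq H_0$. By Lemma~\ref{lemma:left-anchored-path}, the four-step procedure produces a path $\mathcal{P}_L(H'')=\langle H_0,H_1,\ldots,H_n\rangle$ with $H_n=H''$, and by construction each step $H_i\to H_{i+1}$ forms its almost-satisfying graph by including a vertex $v\in L''\backslash L_i\subseteq \mathcal{L}$, i.e., every link of $\mathcal{P}_L(H'')$ is left-anchored. Next I would verify that each such link is genuinely present in $\mathcal{G}_L$: Steps 3.1--3.3 of the path construction are an instance of \texttt{ThreeStep} applied to $H_i$ (form $G[H_i,v]$, take a local solution enumerated by \texttt{EnumAlmostSat}, and extend it to the MBP $H_{i+1}$), so $\langle H_i,H_{i+1}\rangle$ is a left-anchored link of $\mathcal{G}$ and is therefore retained in $\mathcal{G}_L$. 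Consequently $\mathcal{P}_L(H'')$ lies entirely in $\mathcal{G}_L$, establishing reachability of $H''$ from $H_0$. Since $H''$ was arbitrary, every solution is reachable from $H_0$ in $\mathcal{G}_L$, and hence the DFS enumerates all of them.

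The main obstacle I anticipate is not reachability itself---Lemma~\ref{lemma:left-anchored-path} already does the heavy lifting---but confirming that the concrete DFS realizes this abstract reachability, i.e., that at each visited node it actually explores all of its left-anchored out-links. This requires noting that at each solution $H_i$ the traversal runs the restricted \texttt{ThreeStep} over every left-side vertex $v\in \mathcal{L}\backslash L_i$ and, via \texttt{EnumAlmostSat}, every local solution, thereby generating every left-anchored out-neighbor of $H_i$ and in particular the next node $H_{i+1}$ of $\mathcal{P}_L(H'')$. A secondary point to check is that the deduplication B-tree does not break completeness: skipping an already-discovered solution only suppresses redundant re-exploration and removes no node from the reachable set visited by the DFS. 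With these observations in place, completeness of the left-anchored traversal follows.
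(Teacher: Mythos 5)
Your proposal is correct and follows essentially the same route as the paper: the paper derives this lemma directly from Lemma~\ref{lemma:left-anchored-path} (existence of a left-anchored-only path to every solution) and the observation that a DFS from $H_0$ on $\mathcal{G}_L$ visits every node reachable from $H_0$. Your additional checks---that each step of the four-step procedure is an instance of the restricted \texttt{ThreeStep} and that the B-tree deduplication does not harm completeness---merely make explicit what the paper leaves as ``it is clear that.''
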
 
	

\smallskip\noindent\textbf{Remarks.}
{\roundA We remark that the four-step procedure for finding the path $\mathcal{P}_L(H'')$ is a conceptual one for verifying the correctness of the left-anchored traversal only. The implementation of left-anchored traversal will be discussed in Section~\ref{sec:algorithm}.} {\revision In addition, we note that the sparsified solution $\mathcal{G}_L$ is no longer strongly connected as $\mathcal{G}$ does.} To see this, consider that a solution in the form of $(\mathcal{L}, R_0)$ (formed by extending $(\mathcal{L}, \emptyset)$ to be an MBP). There exist no links going from this solution in $\mathcal{G}_L$ since its left side is full. In our experiments, we show that $\mathcal{G}_L$ is significantly sparser than $\mathcal{G}$, e.g., $\mathcal{G}_L$ has about 20$\times$ fewer links than $\mathcal{G}$ on average. 

\subsection{\texttt{iTraversal}: Right-shrinking Traversal}
\label{sec:non-increasing}

\begin{figure}[]
	\centering
	\includegraphics[width=0.82\linewidth]{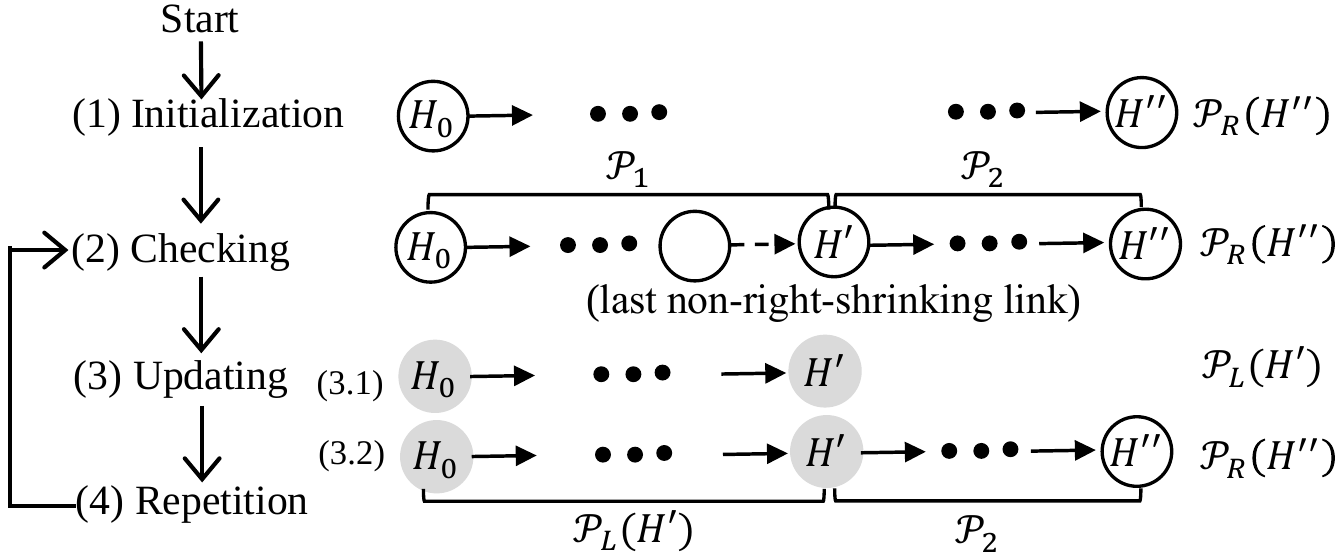}
	\caption{\revision Illustration of the procedure for finding a path involving only right-shrinking links, i.e., $\mathcal{P}_R(H'')$ (the path with gray nodes refers to a new path after updating).}
	\label{fig:shrinking_procedure}
\end{figure}

Consider a link from $H\! =\! (L, R)$ to $H'\! =\! (L', R')$ in $\mathcal{G}_L$. We say it is a \emph{right-shrinking link} iff $R'\! \subseteq\! R$.
Note that we focus on the right side for this definition. 

Consider the path $\mathcal{P}_L(H'')\! =\! \langle H_0, H_1, ..., H_n\rangle$ from $H_0$ to a solution $H''$, where $H'' = H_n$, as defined in Section~\ref{sec:one-side}. We have two observations: (1) the first link (which is from $H_0$ to $H_1$) is a right-shrinking link since $R_1 \subseteq  R_0$ and (2) the last link (which is from $H_{n-1}$ to $H_n$) is also a right-shrinking link since $R_n = R'' \subseteq R_{n-1}$ according to the invariant in Section~\ref{sec:one-side}. This triggers the following question: \emph{can we reach $H''$ from $H_0$ by traversing right-shrinking links only in $\mathcal{G}_L$?} The answer is interestingly also positive. 

In the following, we present a procedure, which defines a path in $\mathcal{G}_L$ for any solution $H''$, which traverses from $H_0$ to $H''$ and involves right-shrinking links only. We denote this path by $\mathcal{P}_R(H'')$. Specifically, the procedure has four steps as follows. A visual illustration of the following procedure is shown in Figure \ref{fig:shrinking_procedure}.
\begin{itemize}[leftmargin=*]
	\item \textbf{Step 1 (Path initialization).} Find the path $\mathcal{P}_L(H'')$ from $H_0$ to $H''$ via the four-step procedure in Section~\ref{sec:one-side} and initialize $\mathcal{P}_R(H'')$ to be $\mathcal{P}_L(H'')$.
	
	\item \textbf{Step 2 (Termination checking).} Check if $\mathcal{P}_R(H'')$ involves only right-shrinking links; If so, stop; otherwise, let $\langle H, H'\rangle$ be the \emph{last} non-right-shrinking link in $\mathcal{P}_R(H'')$, $\mathcal{P}_1$ be the portion from $H_0$ to $H'$ in $\mathcal{P}_R(H'')$, and $\mathcal{P}_2$ be the portion from $H'$ to $H''$ in $\mathcal{P}_R(H'')$. Note that $\mathcal{P}_2$ involves right-shrinking links only.
	
	\item \textbf{Step 3 (Path updating).}	
	\begin{itemize}
		\item \textbf{Step 3.1.} Find the path $\mathcal{P}_{L}(H')$ from $H_0$ to $H'$ via the four-step procedure in Section~\ref{sec:one-side}.

		\item \textbf{Step 3.2.} Update $\mathcal{P}_R(H'')$ by replacing $\mathcal{P}_1$ with $\mathcal{P}_{L}(H')$. Note that both $\mathcal{P}_1$ and $\mathcal{P}_{L}(H')$ start with $H_0$ and end at $H'$.
	\end{itemize}
	
	\item \textbf{Step 4 (Repetition).} Go to Step 2 for another round.
\end{itemize}

\begin{lemma}
	\label{lemma:right-shrinking-path}
	The procedure of finding the path $\mathcal{P}_R(H'')$ for a given solution $H''$ would always terminate and the found $\mathcal{P}_R(H'')$ involves right-shrinking links only.
\end{lemma}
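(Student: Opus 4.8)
The statement has two parts, and the second one is essentially free: the returned $\mathcal{P}_R(H'')$ involves only right-shrinking links simply because Step 2 is the sole exit of the loop and it fires precisely when this condition holds. So the real content is termination, and the plan is to exhibit a bounded potential that strictly increases on every non-terminating round, forcing the loop to stop after finitely many iterations.

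The plan is to take as potential $\Phi := |R^\star|$, where $H^\star = (L^\star, R^\star)$ denotes the endpoint of the \emph{last} non-right-shrinking link of the current path $\mathcal{P}_R(H'')$ ($\Phi$ being undefined exactly when no such link exists, i.e.\ when the procedure has already stopped). Since $R^\star \subseteq \mathcal{R}$, we have $\Phi \le |\mathcal{R}|$, so it suffices to show each non-terminating round raises $\Phi$ by at least one. Consider one round: let $\langle H, H'\rangle$ be the last non-right-shrinking link, so $H' = H^\star$ and $\Phi = |R'|$ beforehand. Step 3 replaces the prefix $\mathcal{P}_1$ (from $H_0$ to $H'$) by $\mathcal{P}_L(H') = \langle H_0, H_1', \ldots, H_m' = H'\rangle$, leaving the all-right-shrinking suffix $\mathcal{P}_2$ untouched. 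The first key step is to invoke the invariant of the four-step procedure of Section~\ref{sec:one-side}, now applied with target $H'$, which gives $R' \subseteq R_i'$ for every $i$; in particular $R_m' = R' \subseteq R_{m-1}'$, so the \emph{final} link of $\mathcal{P}_L(H')$ is right-shrinking. Combined with the fact that $\mathcal{P}_2$ has only right-shrinking links, this forces any non-right-shrinking link of the updated path to lie strictly inside $\mathcal{P}_L(H')$: either there is none (and the next Step 2 check terminates), or the new last non-right-shrinking link is some $\langle H_{j-1}', H_j'\rangle$ with $j \le m-1$, whose endpoint $H_j'$ becomes the new $H^\star$.

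The crux, and the step I expect to need the most care, is showing the potential strictly increases, i.e.\ $|R_j'| > |R'|$. I would argue $R' \subsetneq R_j'$ directly: the invariant yields both $R' \subseteq R_{j-1}'$ and $R' \subseteq R_j'$, while the link $\langle H_{j-1}', H_j'\rangle$ being non-right-shrinking means $R_j' \not\subseteq R_{j-1}'$. Were $R_j' = R'$, then $R_j' = R' \subseteq R_{j-1}'$ would make that link right-shrinking, a contradiction; hence $R' \subsetneq R_j'$ and $|R_j'| \ge |R'| + 1$. Thus $\Phi$ strictly increases each round yet is capped by $|\mathcal{R}|$, so at most $|\mathcal{R}|$ rounds can occur, establishing termination.

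Finally, I would record one bookkeeping remark to keep the argument honest: every updated path is still a legitimate path in $\mathcal{G}_L$, because $\mathcal{P}_L(H')$ consists entirely of left-anchored links (by its construction in Section~\ref{sec:one-side}), which $\mathcal{G}_L$ retains, and $\mathcal{P}_2$ is carried over unchanged. Together with the immediate second part, this completes the proof that the procedure terminates and returns a path of right-shrinking links only.
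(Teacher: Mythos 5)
Your proof is correct, and it takes a genuinely different route to termination than the paper does. The paper tracks the all-right-shrinking suffix $\mathcal{P}_2$ as a \emph{set of solutions}: it shows (by a contradiction argument that, like yours, leans on the invariant of Section~\ref{sec:one-side}) that the pivot $H'$ of the current round cannot already lie in the previous round's $\mathcal{P}_2$, so $\mathcal{P}_2$ gains at least one new solution per round and the procedure halts because the number of solutions is finite. You instead use the numeric potential $\Phi=|R^\star|$ attached to the endpoint of the last non-right-shrinking link, and your strictness argument (if $R_j'=R'$ then $R'\subseteq R_{j-1}'$ would make the link right-shrinking) is sound; the observation that the final link of $\mathcal{P}_L(H')$ is right-shrinking, so the new offending link sits strictly inside $\mathcal{P}_L(H')$, is exactly the structural fact both proofs need. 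What your version buys is a quantitatively better conclusion: the conceptual procedure runs for at most $|\mathcal{R}|$ rounds, a polynomial bound, whereas the paper's potential is only bounded by the (possibly exponential) number of solutions. What the paper's version buys is that it never needs to compare cardinalities, only set membership along right-shrinking subpaths, which keeps the argument symmetric with how $\mathcal{G}_R$ is defined. Your closing bookkeeping remark that the spliced path remains a path in $\mathcal{G}_L$ is a point the paper leaves implicit, and it is worth stating.
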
 
\begin{proof}
We first prove that the procedure would always terminate with two steps.
First, we show that $H'$ at one round does not appear in $\mathcal{P}_2$ at a previous round by contradiction. Let $\langle H_{cur}, H_{cur}'\rangle$ be the last non-right-shrinking link at the current round and $\langle H_{pre}, H_{pre}'\rangle$ be that at the previous round. Suppose $H_{cur}'$ appears in the path $\mathcal{P}_2$ at the previous round. {\roundA There exist a path from $H_{cur}'$ to $H_{pre}'$
(since $H_{pre}'$ is in the current $\mathcal{P}_2$ from $H_{cur}'$ to $H''$)
and another one from $H_{pre}'$ to $H_{cur}'$ 
(since $H_{cur}'$ is in the previous $\mathcal{P}_2$ from $H_{pre}'$ to $H''$),
both involving right-shrinking links only. We then deduce that $R_{pre}'\subseteq R_{cur}'$ and $R_{cur}'\subseteq R_{pre}'$, which imply that $R_{cur}' = R_{pre}'$. 
In addition, we have $R_{pre}'\subseteq R_{cur}$ based on the invariant in Section~\ref{sec:one-side} over the left-anchored sub-path $\langle H_{cur},...,H_{pre}'\rangle$.}
We deduce that $R_{cur}' = R_{pre}' \subseteq R_{cur}$, which leads to a contradiction to the fact that $\langle H_{cur}, H_{cur}'\rangle$ is non-right-shrinking link. Second, we deduce that $\mathcal{P}_2$ would involve at least one more new solution after each round, which further implies that the above procedure would always terminate since the number of unique solutions is bounded.

We then prove that the path $\mathcal{P}_R(H'')$ involves right-shrinking links only. This can be clearly verified by the Step 2 (Termination checking), i.e., the procedure only stops when $\mathcal{P}_R(H'')$ involves right-shrinking links only.
\end{proof}


In conclusion, we succeed in finding for any solution $H''$ a path that traverses from $H_0$ to $H''$ and involves right-shrinking links only in $\mathcal{G}_L$. Therefore, we propose to \emph{drop} all non-right-shrinking links from $\mathcal{G}_L$. We denote the resulting solution graph by $\mathcal{G}_R$. {\roundA For example, $\mathcal{G}_R$ based on the input graph in Figure~\ref{fig:input_graph} is shown in Figure~\ref{fig:solution_graph}(c), which involves 21 links and all solutions are reachable from $H_0$.}  It is clear that a DFS procedure from $H_0$ on $\mathcal{G}_R$, which we call the \emph{right-shrinking traversal}, would return all solutions. We present this result in the following lemma.

\begin{lemma}
	\label{lemma:right-shrinking-traversal}
	{
	Given a bipartite graph $G=(\mathcal{L}\cup \mathcal{R},\mathcal{E})$ with an initial MBP $H_0 \!=\! (L_0,\mathcal{R})$, the right-shrinking traversal lists all MBPs.
	}
\end{lemma}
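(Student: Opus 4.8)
The plan is to reduce this statement to the reachability result already established in Lemma~\ref{lemma:right-shrinking-path}, exactly mirroring how the left-anchored traversal result (Lemma~\ref{lemma:left-anchored-traversal}) followed from Lemma~\ref{lemma:left-anchored-path}. First I would record the defining relationship between the two solution graphs: $\mathcal{G}_R$ is obtained from $\mathcal{G}_L$ by deleting every link that is not right-shrinking, so the link set of $\mathcal{G}_R$ is \emph{precisely} the set of right-shrinking links of $\mathcal{G}_L$. This observation is what will let me transport paths from $\mathcal{G}_L$ into $\mathcal{G}_R$ without loss.

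Next I would fix an arbitrary solution $H''$ and invoke Lemma~\ref{lemma:right-shrinking-path}, which guarantees that the path-finding procedure terminates and yields a path $\mathcal{P}_R(H'') = \langle H_0, \ldots, H''\rangle$ in $\mathcal{G}_L$, running from the designated initial solution $H_0 = (L_0, \mathcal{R})$ to $H''$ and consisting of right-shrinking links only. Since every link along $\mathcal{P}_R(H'')$ is a right-shrinking link of $\mathcal{G}_L$, and such links are exactly those retained in $\mathcal{G}_R$, the entire path $\mathcal{P}_R(H'')$ survives the passage to $\mathcal{G}_R$. Hence $H''$ is reachable from $H_0$ within $\mathcal{G}_R$.

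Finally I would close the argument with the standard property of depth-first search: a DFS started at $H_0$ visits every node reachable from $H_0$. As the previous paragraph holds for an \emph{arbitrary} solution $H''$, every MBP is reachable from $H_0$ in $\mathcal{G}_R$, so the right-shrinking traversal (i.e., the DFS from $H_0$ over $\mathcal{G}_R$) visits, and therefore reports, every MBP; the duplicate-avoidance via the B-tree in Algorithm~\ref{alg:btraversal} ensures each is output exactly once.

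As for the main obstacle: the substantive work, namely constructing the right-shrinking path and proving its termination, has already been discharged in Lemma~\ref{lemma:right-shrinking-path}, so this statement is essentially a corollary. The only point that genuinely requires care is the bookkeeping that link deletion does not destroy the needed path. I would therefore make explicit that $\mathcal{P}_R(H'')$, although constructed as a path in $\mathcal{G}_L$, uses solely links of the one type that is preserved when moving from $\mathcal{G}_L$ to $\mathcal{G}_R$, so that no reachability between $H_0$ and any solution is lost in the sparsification.
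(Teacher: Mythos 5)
Your proposal is correct and matches the paper's own (largely implicit) argument: the paper derives this lemma directly from Lemma~\ref{lemma:right-shrinking-path} by observing that $\mathcal{G}_R$ retains exactly the right-shrinking links of $\mathcal{G}_L$, so the constructed path $\mathcal{P}_R(H'')$ survives and DFS from $H_0$ reaches every solution. Your explicit bookkeeping of why the sparsification preserves the path is the same reasoning the paper leaves as "clear."
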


\smallskip\noindent\textbf{Remarks.} {\roundA We remark that the four-step procedure for finding the path $\mathcal{P}_R(H'')$ is a conceptual one for verifying the correctness of the right-shrinking traversal only. The implementation of right-shrinking traversal will be discussed in Section~\ref{sec:algorithm}.} In addition, the right-shrinking traversal is on top of the left-anchored traversal. Besides leading to a sparser solution graph, it re-organizes the search space. {\revision To be specific, any solution $H''=(L'',R'')$ reachable from $H=(L,R)$ in $\mathcal{G}_R$ must satisfy $R''\!\subseteq\! R$.} 
One benefit is that it would be natural to impose some size constraints on the MBPs to be enumerated. {\roundA The traversal from solution $H=(L,R)$ can be pruned if $R$ shrinks below the size threshold (details will be presented in Section~\ref{sec:size-constrained})}.

\subsection{\texttt{iTraversal}: Summary and Analysis}
\label{sec:algorithm}

We present the \texttt{iTraversal} algorithm, which employs left-anchored traversal and right-shrinking traversal in Algorithm~\ref{alg:itraversal}. \texttt{iTraversal} differs from \texttt{bTraversal} in the following aspects. \underline{First}, it takes $(L_0, \mathcal{R})$ but not an arbitrary MBP as the initial solution, where $L_0$ is a maximal subset of $\mathcal{L}$ such that $(L_0, \mathcal{R})$ is a MBP (Line 1). \underline{Second}, in Step 1 of forming almost-satisfying graphs, it \emph{prunes} those vertices in $\mathcal{R}$ from consideration so that it would traverse along left-anchored links only (Line 5). This implements the left-anchored traversal. \underline{Third}, in Step 2 of enumerating local solutions, it \emph{prunes} those local solutions $H_{loc}$ for which there exists a vertex $u\in \mathcal{R}$ such that $u$ is not in $H_{loc}$ and $H_{loc}\cup \{u\}$ is a $k$-biplex (Line 7). These local solutions can be pruned since they can be extended to solutions with the right side containing a vertex that is not contained by the right side of the current solution, i.e., the links from the current solution to these solutions are non-right-shrinking links. This implements the right-shrinking traversal.
%

\begin{algorithm}{}
	\small
	\caption{The algorithm: {\tt \textit{iTraversal}}.}
	\label{alg:itraversal}
	\KwIn{Bipartite graph $G=(\mathcal{L}\cup \mathcal{R},\mathcal{E})$, integer $k\geq 1$ ;}
	\KwOut{All maximal $k$-biplexes;}
	\textbf{Initialize} $H_0=(L_0,\mathcal{R})$, B-tree $\mathcal{T}=\{H_0\}$\;
	\texttt{iThreeStep}$(G, H_0,\mathcal{T})$\;

	\SetKwBlock{Enum}{Procedure \texttt{iThreeStep}$(G, H,\mathcal{T})$}{}
	\Enum{
		(\textbf{Step 1}) \ForEach{ $v$ in $V(G)\backslash V(H)$}{
		    \lIf{$v$ in $\mathcal{R}$}{
		        \textbf{Continue;} //Left-anchored traversal
		    }
		(\textbf{Step 2}) \ForEach{ $H_{loc}$ in \texttt{EnumAlmostSat}$(G[H,v])$}{
		        \lIf{there exists $u$ in $\mathcal{R}\backslash V(H_{loc})$ s.t. $G[H_{loc},\{u\}]$ is $k$-biplex}{
		        \textbf{Continue;} //Right-shrinking traversal}
				(\textbf{Step 3}) 
				Extend $H_{loc}$ to be a maximal $k$-biplex $H'$ with vertices from $V(G)\backslash V(H_{loc}) \backslash \mathcal{R}$\;
				\If{$H'\notin \mathcal{T}$}{
				    Insert $H'$ to $\mathcal{T}$\; 
				    \texttt{iThreeStep}$(G, H',\mathcal{T})$\;
				}	
			}
	}
	}
	
\end{algorithm}
\smallskip
\noindent\textbf{Remark.} For \texttt{bTraversal}, it can be further enhanced with a so-called \emph{exclusion} strategy~\cite{DBLP:conf/sigmod/BerlowitzCK15}. {\revision The idea is to maintain for each solution an \emph{exclusion} set once the solution is traversed and then prune the links towards those solutions which involve a vertex in the exclusion set. Details are referred to the technical report~\cite{TR}. We verify that this strategy is applicable to \texttt{iTraversal}, for which the correctness proof is included in the  technical report~\cite{TR} for the sake of space.} In conclusion, \texttt{iTraversal} implements left-anchored traversal, right-shrinking traversal and the exclusion strategy. The solution graph underlying \texttt{iTraversal} is denoted by $\mathcal{G}_E$, which is even sparser than $\mathcal{G}_R$. 
For example, $\mathcal{G}_E$ based on the input graph in Figure~\ref{fig:input_graph} is shown in Figure~\ref{fig:solution_graph}(d), which involves 13 links and all solutions are reachable from $H_0$.

\smallskip
\noindent\textbf{Total running time}. 
Let $\alpha$ be the number of solutions.
The time cost of \texttt{iTraversal} is dominated by that of calling the \texttt{iThreeStep} procedure $\alpha$ times, each when a solution is found for the first time.
Consider the time cost of the \texttt{iThreeStep} procedure.
Let $\beta$ be the time complexity of the \texttt{EnumAlmostSat} procedure and $\gamma$ be the number of local solutions returned by the \texttt{EnumAlmostSat} procedure.
The \texttt{EnumAlmostSat} procedure is called $O(|\mathcal{L}|)$ times, costing $O(|\mathcal{L}|\cdot \beta)$ time. 
{\roundA 
There are $O(|\mathcal{L}|\cdot \gamma)$ local solutions and for each, the cost is the sum of $O(|\mathcal{R}|\cdot|H_{max}|)$ (for Line 7),  $(|\mathcal{L}|\cdot |H_{max}|)$ (for Line 8), and $O(\log \alpha \cdot |H_{max}|)$ (for Line 9-11), where $H_{max}=(L_{max},R_{max})$ is the solution with the maximum size. 
%
%

Therefore, the overall time complexity of \texttt{iTraversal} is $O(\alpha\cdot (|\mathcal{L}| \cdot \beta + |\mathcal{L}| \cdot\gamma \cdot (|V(G)|\cdot |H_{max}| + \log \alpha \cdot |H_{max}|)))$. Here, $\alpha$ is exponential w.r.t. the bipartite graph size. {\roundF $\beta=O((|L_{max}|\cdot|R_{max}|)^{k+1})$ and $\gamma=O((|L_{max}|\cdot|R_{max}|)^{k})$ are polynomial with $k$ as a constant and will be discussed in Section~\ref{sec:enumalmostsat}. 
In a simpler form, the time complexity is $O(\alpha \cdot|\mathcal{L}|\cdot(|L_{max}|\cdot|R_{max}|)^{k+1} \cdot|V_G|\cdot|H_{max}|)$.}
}

\smallskip
\noindent\textbf{Delay}. The delay of an enumeration algorithm corresponds to the maximum of three parts, namely (1) the time spent after the algorithm starts and before the first solution is found, (2) the time spent between any two consecutive solutions are found, and (3) the time spent after the last solution is found and till the algorithm terminates. 
%
With a small trick~\cite{takeaki2003two}, i.e., we print a solution before and after the recursive call (Line 11 of Algorithm~\ref{alg:itraversal}) in an alternating manner during the sequence of recursive calls, the algorithm would output at least one solution every two successive recursive calls of the \texttt{iThreeStep} procedure. {\roundB Therefore, the delay corresponds to the time complexity of the \texttt{iThreeStep} procedure, i.e., $O\big(|\mathcal{L}| \cdot \beta + |\mathcal{L}| \cdot\gamma \cdot (|V(G)|\cdot |H_{max}| + \log \alpha \cdot |H_{max}|))\big)$, which is polynomial with $k$ as a constant ($\beta$ and $\gamma$ are polynomial, which will be discussed in Section~\ref{sec:enumalmostsat}).
%
We remark that (1) \texttt{iTraversal} improves the delay of \texttt{bTraversal} based on the same implementation of \texttt{EnumAlmostSat}, which is $O\big(|V(G)| \cdot \beta + |V(G)| \cdot\gamma \cdot (|V(G)|\cdot |H_{max}| + \log \alpha \cdot |H_{max}|))\big)$; and (2) {\roundG\texttt{iMB}} and the graph inflation based algorithm \texttt{FaPlexen} have their delay exponential w.r.t. the size of the bipartite graph~\cite{yu2021efficient,DBLP:conf/aaai/ZhouXGXJ20}. }

\if 0
{\roundF \smallskip\noindent\textbf{Remarks.} We note that recent studies on the maximal clique enumeration problem propose to sparsify the corresponding solution graph as trees \cite{DBLP:journals/algorithmica/ChangYQ13,DBLP:journals/algorithmica/ConteGMV20}. However, this is hard to achieve for $k$-biplex. This is because the cohesiveness of $k$-biplex is looser than that of (bi)clique and the number of MBPs is exponentially larger which leads to the high overlaps, e.g., the number of links in solution graph exponentially increases w.r.t $k$ (as shown in Figure \ref{fig:strategy}(c)). }
\fi

\section{The \texttt{EnumAlmostSat} Procedure}
\label{sec:enumalmostsat}
Let $(L, R)$ be a solution and $(L\cup\{v\}, R)$ be an almost-satisfying graph, where $v\in \mathcal{L} \backslash L$. The \texttt{EnumAlmostSat} procedure is to enumerate all local solutions within $(L\cup \{v\}, R)$, which involve $v$. 
Essentially, the task is to explore a search space of $\{(L', R')\}$, where $L'\subseteq L$ and $R'\subseteq R$, and find those $(L', R')$'s such that $(L'\cup\{v\}, R')$ is a local solution. Note that this search space has a size of $O(2^{|L|+|R|})$, and hence simply enumerating each pair $(L', R')$ would be costly.


In this paper, we develop a series of techniques for refining the enumerations on $L$ and $R$ so as to reduce and/or prune the search space (Section~\ref{subsec:refine-y-1}, \ref{subsec:refine-y-2}, \ref{subsec:refine-x-2}, and \ref{subsec:refine-x-1}).
\if 0
{\roundA An illustration of partitions and enumerations on $L$ and $R$ is given in Figure \ref{fig:set_partition}, and would be used throughout this section.}
For the enumerations on $R$, we first observe that all vertices that connect to $v$ are involved in all local solutions and thus we can focus on enumerating on the set containing other vertices in $R$, denoted by $R_{enum}$ (Section~\ref{subsec:refine-y-1}). We then further identify some enumerations on $R_{enum}$, which cannot constitute local solutions and thus can be pruned (Section~\ref{subsec:refine-y-2}). For the enumerations on $L$, we first observe that given an enumerated subset $R'$ of $R$, subgraphs $(L\cup\{v\}, R')$ would become local solutions if at most a certain number of vertices are removed from a subset of $L$, denoted by $L_{remo}$ (Section~\ref{subsec:refine-x-1}). Furthermore, we can impose an order for enumerating the subsets of vertices to be removed so that some subsets would be pruned (Section~\ref{subsec:refine-x-2}). 
\fi
We finally present an algorithm based on these refined enumerations and analyze its time complexity (Section~\ref{subsec:enumalmostsat}).

\if 0
Fortunately, the search space $\{(L', R')\}$ can be pruned significantly. Specifically, many vertices can be excluded from being enumerated for $L'$ and $R'$ because they would appear in \emph{all} local solutions within $H(L\cup\{v\}, R)$. To see this, consider (1) all vertices in $L$ and $R$ disconnect at most $k$ vertices from the other side in $H(L, R)$, (2) if a vertex disconnects at most $k$ vertices from the other side in $H(L, R)$, it would disconnect at most $k$ vertices from the other side in any subset of $H(L, R)$, and (3) each local solution $(L'\cup \{v\}, R')$ corresponds to a subset of $H(L, R)$ plus one vertex $v$ (i.e., the difference between the location solution and a subset of $H(L, R)$ is tiny). Therefore, many vertices would disconnect at most $k$ vertices from the other side in a local solution, as they do in a $H(L, R)$. Note that (1) is based on the fact that $H(L\cup\{v\}, R)$ is an almost-satisfying graph (i.e., $H(L, R)$ is a $k$-biplex) and (2) is based on the fact that $k$-biplex is a hereditary property~\cite{DBLP:journals/jcss/CohenKS08} (i.e., the subset of a $k$-biplex is also a $k$-biplex).  In the following, we develop three techniques for excluding vertices in $L$ and $R$ from being enumerated for $L'$ and $R'$, respectively in Section~\ref{subsec:refine-y-1}, Section~\ref{subsec:refine-y-2}, and Section~\ref{subsec:refine-x}. In Section~\ref{subsec:enumalmostsat}, we present the overall algorithm and its time complexity analysis.
\fi

\subsection{Refined Enumeration on $R$: 1.0}
\label{subsec:refine-y-1}

{\roundC We start with an observation presented in the following lemma {\revision (the proofs of the lemmas in this section are presented in the technical report~\cite{TR})}.}

\begin{lemma}
	Given an almost-satisfying graph $(L\cup\{v\}, R)$, each vertex $u\in R$ that connects $v$ is involved in all local solutions within $(L\cup\{v\}, R)$.
	\label{lemma:refine-y-1}
\end{lemma}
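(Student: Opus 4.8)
The plan is to argue by contradiction, exploiting the maximality that defines a local solution together with the fact that removing $v$ from the almost-satisfying graph leaves the $k$-biplex $(L,R)$. Suppose some vertex $u\in R$ with $(v,u)\in\mathcal{E}$ were \emph{absent} from some local solution $H_{loc}=(L'\cup\{v\},R')$, where $L'\subseteq L$ and $R'\subseteq R$. Then $u\in R\setminus R'$, so $u$ is a legitimate candidate for extending $H_{loc}$ inside $G[H\cup v]$. I would show that $(L'\cup\{v\},R'\cup\{u\})$ remains a $k$-biplex; this contradicts the maximality of $H_{loc}$ and forces $u\in R'$.

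To verify that $(L'\cup\{v\},R'\cup\{u\})$ is a $k$-biplex, I would check the two disconnection constraints of Definition~\ref{definition:k-biplex}. On the left side, take any $w\in L'\cup\{v\}$: if $w=v$, then since $u$ connects $v$ the new right-vertex contributes no disconnection, giving $\overline{\delta}(v,R'\cup\{u\})=\overline{\delta}(v,R')\le k$; if $w\in L'\subseteq L$, then $R'\cup\{u\}\subseteq R$, and the monotonicity of the disconnection count together with the $k$-biplex property of $(L,R)$ give $\overline{\delta}(w,R'\cup\{u\})\le\overline{\delta}(w,R)\le k$. On the right side, the constraints of vertices already in $R'$ are unaffected because the left side is untouched, so the only new condition is for $u$ itself; here $u$ connects $v$ implies $\overline{\delta}(u,L'\cup\{v\})=\overline{\delta}(u,L')\le\overline{\delta}(u,L)\le k$, again using $L'\subseteq L$ and that $(L,R)$ is a $k$-biplex.

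Having shown that $(L'\cup\{v\},R'\cup\{u\})$ is a $k$-biplex strictly larger than $H_{loc}$, I obtain the desired contradiction with the maximality of $H_{loc}$ within $G[H\cup v]$, completing the argument.

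The step I expect to require the most care is the constraint on $u$ itself on the right side, and it is precisely here that the hypothesis ``$u$ connects $v$'' is indispensable: were $u$ a non-neighbour of $v$, adding $u$ could raise both $\overline{\delta}(u,L'\cup\{v\})$ (via the missing edge to $v$) and $\overline{\delta}(v,R'\cup\{u\})$ beyond $k$, so no contradiction would follow. Thus the proof isolates the role of the connectivity assumption, and I would make sure to state explicitly that all other bounds follow merely from $R'\cup\{u\}\subseteq R$, $L'\subseteq L$, and the hereditary $k$-biplex property of the underlying solution $(L,R)$.
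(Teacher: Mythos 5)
Your proposal is correct and follows essentially the same route as the paper's proof: assume a local solution $(L'\cup\{v\},R')$ omits $u$, show $(L'\cup\{v\},R'\cup\{u\})$ is still a $k$-biplex by checking the four groups of vertices ($v$, $L'$, $R'$, and $u$) exactly as the paper does, and contradict maximality within the almost-satisfying graph. Your added remark isolating where the hypothesis ``$u$ connects $v$'' is used is a nice touch but does not change the argument.
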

\if 0
{\roundF
\begin{proof}{
\revision{
 The complete proof is in the technical report \cite{TR}.}
}
\end{proof}
}
\fi
\if 0
Suppose $(L'\cup \{v\}, R')$ is a local solution and does not involve $u$, i.e., $u\notin R'$. Then, we consider another subgraph $(L'\cup \{v\}, R'\cup\{u\})$, which is larger than $(L'\cup \{v\}, R')$. We derive a contradiction by showing that $(L'\cup \{v\}, R'\cup\{u\})$ is a $k$-biplex since (1) each vertex $v' \in L'$ disconnects at most $k$ vertices from $R'\cup \{u\}$ (note that $R'\cup \{u\}$ is subset of $R$ and $v'$ disconnects at most $k$ vertices from $R$), (2) $v$ disconnects at most $k$ vertices from $R'\cup \{u\}$ (note that $v$ disconnects at most $k$ vertices from $R'$ and $v$ connects $u$), (3) each vertex $u'\in R'$ disconnects at most $k$ vertices from $L'\cup\{v\}$ (note that $(L'\cup\{v\}, R')$ is a $k$-biplex) and (4) vertex $u$ disconnects at most $k$ vertices from $L' \cup \{v\}$ (note that $u$ disconnects at most $k$ vertices from $L'$ and $u$ connects $v$).

\fi


Based on Lemma~\ref{lemma:refine-y-1}, we can partition $R$ into two sets, namely one containing those vertices that connect $v$ and the other containing the remaining vertices. We denote the former by $R_{keep}$ and the latter by $R_{enum}$.
{\ChengCommentB An illustration is shown in Figure~\ref{fig:set_partition}(a).}
Then, $R_{keep}$ is involved in all local solutions within $(L\cup \{v\}, R)$ and hence the enumeration of the vertices in $R_{keep}$ can be avoided when enumerating $R'\subseteq R$. Specifically, when enumerating the subsets of $R$, we enumerate $R''\subseteq R_{enum}$ only and for each $R''$, we construct a $R'$ as $R'' \cup R_{keep}$. In addition, we only need to enumerate those $R''$'s with $|R''| \le k$ since otherwise $v$ would disconnect more than $k$ vertices in $R'$ and $(L'\cup\{v\}, R')$ cannot be a $k$-biplex. With this, the number of subsets of $R$ to enumerate is reduced from $O(2^{|R|})$ to $O({|R_{enum}|}^k)$.

\subsection{Refined Enumeration on $R$: 2.0}
\label{subsec:refine-y-2}

{\roundC
Based on Section~\ref{subsec:refine-y-1}, the search space is reduced from one containing all subsets of $R$ to one containing all subsets $R''$ of $R_{enum}$ with $|R''|\leq k$. In this section, we further prune some subsets $R''$ of $R_{enum}$ by refining the enumeration on $R_{enum}$. 

Specifically, we partition $R_{enum}$ into two groups, namely $R_{enum}^1 \!=\! \{u\in R_{enum} \mid  \overline{\delta}(u, L) \le k-1\}$ and $R_{enum}^2 = R_{enum} \backslash R_{enum}^1= \{u\in R_{enum} \mid  \overline{\delta}(u, L) =k\}$. Instead of enumerating $R'' \subseteq R_{enum}$ with $|R''|\le k$ directly, we enumerate $R_1''\subseteq R_{enum}^1$ and $R_2''\subseteq R_{enum}^2$ with $|R_1'' \cup R_2''| \le k$ and construct $R''$ as $R_1'' \cup R_2''$. 
{\ChengCommentB An illustration is shown in Figure~\ref{fig:set_partition}(a).}
We then have the following lemma for pruning some enumerations of $R_1''$ and $R_2''$.



\begin{lemma}
    \label{lemma:refine-y-2}
   Let $R' = R'' \cup R_{keep}$ and $R'' = R_1'' \cup R_2''$, where $R_1''\subseteq R_{enum}^1$, $R_2''\subseteq R_{enum}^2$, and $|R_1'' \cup R_2''| \le k$. There does not exist a subset $L'$ of $L$ such that $(L'\cup \{v\}, R')$ is a local solution if (1) $|R_1''\cup R_2''| < k$ and (2) $R_{enum}^1 \backslash R_1'' \neq \emptyset$.
\end{lemma}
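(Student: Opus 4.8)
The plan is to prove the contrapositive by contradiction: assume that under conditions (1) $|R_1'' \cup R_2''| < k$ and (2) $R_{enum}^1 \backslash R_1'' \neq \emptyset$, there nonetheless exists $L' \subseteq L$ such that $(L' \cup \{v\}, R')$ is a local solution, where $R' = R'' \cup R_{keep}$ and $R'' = R_1'' \cup R_2''$. I would then pick a witness vertex $u^* \in R_{enum}^1 \backslash R_1''$, which exists by condition (2). The goal is to show that $u^*$ can be added to $R'$ without destroying the $k$-biplex property, i.e. that $(L' \cup \{v\}, R' \cup \{u^*\})$ is still a $k$-biplex, contradicting the maximality (local-solution) assumption on $(L' \cup \{v\}, R')$.

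**Verifying the $k$-biplex property after adding $u^*$.** To establish the contradiction I would check all four families of degree constraints for the enlarged graph $(L' \cup \{v\}, R' \cup \{u^*\})$. First, for each vertex $w \in L'$: since $R' \cup \{u^*\} \subseteq R$ and $\overline{\delta}(w, R) \le k$ (as $(L,R)$ is a $k$-biplex by hereditary property), we get $\overline{\delta}(w, R' \cup \{u^*\}) \le k$ automatically. Second, for the vertex $v$: this is where condition (1) is essential. We have $u^* \in R_{enum}$, so $u^*$ disconnects $v$; hence adding $u^*$ increases $v$'s disconnection count by exactly one, giving $\overline{\delta}(v, R' \cup \{u^*\}) = |R''| + 1 = |R_1'' \cup R_2''| + 1 \le k$ precisely because $|R_1'' \cup R_2''| < k$. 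Third, for the new vertex $u^*$: since $u^* \in R_{enum}^1$ we have $\overline{\delta}(u^*, L) \le k-1$, and because $u^*$ connects $v$ (wait — $u^* \in R_{enum}$ means $u^*$ disconnects $v$), the bound $\overline{\delta}(u^*, L' \cup \{v\}) \le \overline{\delta}(u^*, L) + 1 \le (k-1) + 1 = k$ holds, using $L' \subseteq L$ for the first inequality and the $+1$ to account for the disconnection to $v$. Fourth, for each $u \in R'$: the degree bound $\overline{\delta}(u, L' \cup \{v\}) \le k$ already holds because $(L' \cup \{v\}, R')$ is a $k$-biplex and adding $u^*$ to the right side does not change left-side-neighbor counts of vertices in $R'$.

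**Conclusion and the main obstacle.** Having verified all four constraints, $(L' \cup \{v\}, R' \cup \{u^*\})$ is a $k$-biplex that strictly contains $(L' \cup \{v\}, R')$, contradicting the latter's local maximality; this completes the proof. The step I expect to be the main obstacle — or at least the one demanding the most care — is the bookkeeping on the vertex $v$ in the third and second checks: one must be careful that $R_{enum}$ is defined as the set of vertices \emph{disconnecting} $v$ (so $u^* \not\in \Gamma(v, R)$), which is exactly why $u^*$ contributes to $v$'s disconnection count and forces the need for condition (1), while simultaneously $u^*$'s own disconnection budget toward $L' \cup \{v\}$ must absorb the edge to $v$, which is why the refinement to $R_{enum}^1$ (with slack $\overline{\delta}(u^*, L) \le k-1$) rather than all of $R_{enum}$ is needed. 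I would double-check that the witness is drawn from $R_{enum}^1$ and not merely $R_{enum}$, since a vertex in $R_{enum}^2$ with $\overline{\delta}(u, L) = k$ would blow the budget by reaching $k+1$ once the edge to $v$ is counted — clarifying exactly why the lemma's conclusion is stated for $R_{enum}^1 \backslash R_1''$ and not $R_{enum} \backslash R''$.
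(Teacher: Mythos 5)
Your proposal is correct and matches the paper's own (contradiction) proof essentially step for step: pick a witness $u^* \in R_{enum}^1 \backslash R_1''$, verify the four disconnection bounds for $(L'\cup\{v\}, R'\cup\{u^*\})$ using $|R''|<k$ for $v$ and $\overline{\delta}(u^*,L)\le k-1$ for $u^*$, and contradict local maximality. Your closing remark on why the witness must come from $R_{enum}^1$ rather than $R_{enum}^2$ is exactly the right point of care and is implicit in the paper's argument as well.
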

\if 0
\begin{proof}\roundF

\revision{
The complete proof is in the technical report \cite{TR}.}
\end{proof}
\fi
\if 0
\begin{proof}
This can be verified by contradiction. Suppose there exists a subset $L'\subseteq L$ such that $(L'\cup \{v\}, R')$ is a local solution. Then, another subgraph $(L'\cup \{v\}, R'\cup \{u\})$, where $u\in R_{enum}^1 \backslash R_1''$, would be a $k$-biplex since (1) each vertex $v'\in L'$ disconnects at most $k$ vertices from $R'\cup\{u\}$ (note that $R'\cup\{u\}$ is subset of $R$ and $v'$ disconnects at most $k$ vertices from $R$), (2) $v$ disconnects at most $k$ vertices from $R'\cup\{u\}$ (note that $v$ disconnects $u$ and only those vertices in $R''$ and $|R''|<k$), (3) each vertex $u'\in R'$ disconnects at most $k$ vertices from $L'\cup\{v\}$ (since $(L'\cup \{v\},R')$ is a local solution) and (4) $u$ disconnects at most $k$ vertices from $L'\cup\{v\}$ (note that $u$ disconnects at most $k-1$ vertices from $L$ since $u\in R_{enum}^1$ and $u$ disconnects $v$ since $u\in R_{enum}$).
\end{proof}
\fi


 
Based on Lemma~\ref{lemma:refine-y-2}, we can prune those enumerations of $R_1''$ and $R_2''$ with $|R_1''\cup R_2''| < k$ and $R_{enum}^1\backslash R_1'' \neq \emptyset$, and thus it reduces the enumerations of subsets $R''$ of $R_{enum}$.
}

\begin{figure}[]
	\centering
	\centering
	\begin{tabular}{l c r}
		\begin{minipage}{3.90cm}
			\includegraphics[width=4.1cm]{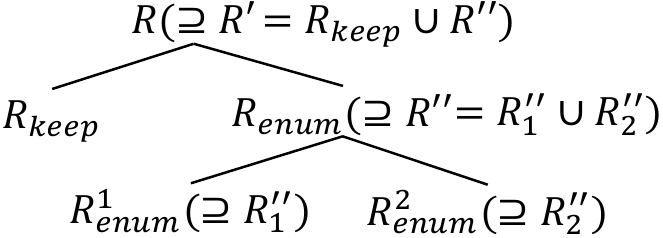}
		\end{minipage}
		&  &
		\begin{minipage}{3.00cm}
			\includegraphics[width=2.8cm]{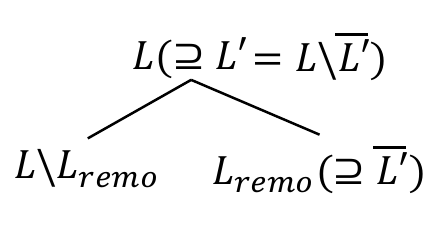}
		\end{minipage}	
		\\
		(a) Partitions on $L$
		&  &
		(b) Partitions on $R$
	\end{tabular}
	\caption{\roundC Illustration of partitions on $L$ and $R$.}
	\label{fig:set_partition}
\end{figure}

\subsection{Refined Enumeration on $L$: 1.0}
\label{subsec:refine-x-1}

Given a subset $R'$ of $R$, where $R' = R'' \cup R_{keep}$ and $R'' = R''_1 \cup R''_2$, we enumerate the subsets of $L$ and for each subset $L'$, construct $(L'\cup \{v\}, R')$ as a candidate of a local solution. A straightforward method is to enumerate all subsets of $L$. However, the search space would be $O(2^{|L|})$. In this section, we aim to refine the enumeration on $L$ so that the search space is reduced.

{\roundC 
We have an observation presented in the following lemma.}
{\roundB
\begin{lemma}
    \label{lemma:refine-x-1}
     In the subgraph $(L'\cup \{v\}, R')$, where $L' = L$ and $R' = R_{keep} \cup R''_1 \cup R''_2$, all vertices except for those in $R''_2$ disconnect at most $k$ vertices and vertices in $R''_2$ disconnect exactly $(k+1)$ vertices.
\end{lemma}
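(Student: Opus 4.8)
The plan is to prove Lemma~\ref{lemma:refine-x-1} by a direct case analysis over all vertices of the subgraph $(L\cup\{v\}, R')$, bounding the number of disconnections of each. The vertex set splits naturally into the left part $L\cup\{v\}$ and the right part $R' = R_{keep}\cup R''_1\cup R''_2$, and I would handle these groups one at a time, recalling that a right-side vertex's disconnections are counted against $L\cup\{v\}$ while a left-side vertex's disconnections are counted against $R'$. The only external facts I need are that $(L,R)$ is a $k$-biplex (so $\overline{\delta}(w,R)\le k$ and $\overline{\delta}(u,L)\le k$ for all $w\in L$, $u\in R$), the partition definitions of $R_{keep}$, $R_{enum}$, $R_{enum}^1$, $R_{enum}^2$, and the enumeration constraint $|R''_1\cup R''_2|\le k$ inherited from Section~\ref{subsec:refine-y-1}.

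First I would dispose of the left side. For any $w\in L$, since $R'\subseteq R$ we have $\overline{\Gamma}(w,R')\subseteq\overline{\Gamma}(w,R)$, hence $\overline{\delta}(w,R')\le\overline{\delta}(w,R)\le k$. For the vertex $v$, note that by construction $v$ connects every vertex of $R_{keep}=\Gamma(v,R)$ and disconnects every vertex of $R_{enum}\supseteq R''_1\cup R''_2$; therefore $\overline{\delta}(v,R') = |R''_1\cup R''_2|\le k$ using the enumeration constraint. Thus every left vertex, none of which lies in $R''_2$, disconnects at most $k$ vertices, consistent with the claim. Next I would handle the right side, splitting according to whether the vertex connects $v$. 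For $u\in R_{keep}$, $u$ connects $v$, so $\overline{\delta}(u,L\cup\{v\})=\overline{\delta}(u,L)\le k$. For $u\in R''_1\subseteq R_{enum}^1$, $u$ disconnects $v$ and $\overline{\delta}(u,L)\le k-1$ by the definition of $R_{enum}^1$, so $\overline{\delta}(u,L\cup\{v\})=\overline{\delta}(u,L)+1\le k$. Finally, for $u\in R''_2\subseteq R_{enum}^2$, $u$ disconnects $v$ and $\overline{\delta}(u,L)=k$ \emph{exactly}, so $\overline{\delta}(u,L\cup\{v\})=k+1$. Collecting these five cases yields the lemma.

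The argument is entirely elementary, so I expect no genuine obstacle beyond keeping the ``$+1$ for disconnecting $v$'' bookkeeping consistent across the right-side cases. The two points that require care are: (i) invoking the size bound $|R''_1\cup R''_2|\le k$ to control $v$, which is implicit from the enumeration context rather than from the lemma's displayed hypotheses, and without which $v$ could exceed $k$; and (ii) using the \emph{exact} equality $\overline{\delta}(u,L)=k$ from the definition of $R_{enum}^2$ (as opposed to a mere inequality), which is precisely what forces the disconnection count of an $R''_2$-vertex to be exactly $k+1$ rather than merely at most $k+1$. This exact count is what justifies calling $(L\cup\{v\},R')$ an ``almost-satisfying'' graph: removing the vertices of $R''_2$ restores the $k$-biplex property.
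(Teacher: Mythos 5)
Your proof is correct and follows essentially the same route as the paper's: a five-way case analysis over $L$, $\{v\}$, $R_{keep}$, $R''_1$, and $R''_2$, using the hereditary bound for $L$, the size bound $|R''|\le k$ for $v$, and the defining (in)equalities of $R_{keep}$, $R_{enum}^1$, $R_{enum}^2$ together with the ``$+1$ for disconnecting $v$'' adjustment on the right side. The two points of care you flag (the enumeration constraint on $|R''|$ and the exact equality for $R_{enum}^2$) are exactly the ones the paper's argument relies on.
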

\if 0
{\roundF
\begin{proof}
 This can be verified by the definitions of $k$-biplex, $R_{keep}$, $R_1''$ and $R_2''$. The detailed proof is in the {\Yuchange technical report~\cite{TR}}.   
\end{proof}
}
\fi
\if 0
\begin{proof}
This can be verified as follows. (1) Each vertex $v'\in L$ disconnects at most $k$ vertices in $R'$ (this is because $\overline{\delta}(v', R) \le k$ since $(L, R)$ is a $k$-biplex, and $R'\subseteq R$). (2) Vertex $v$ disconnects at most $k$ vertices in $R'$ (this is because  $\overline{\delta}(v, R') = \overline{\delta}(v, R''\cup R_{keep}) = \overline{\delta}(v, R'') \le |R''| \le k$. Note that $v$ connects every vertex in $R_{keep}$ by the definition of $R_{keep}$). (3) Each vertex $u\in R_{keep}$ disconnects at most $k$ vertices in $L\cup \{v\}$ (this because $u$ disconnects at most $k$ vertices in $L$ given that $(L, R)$ is a $k$-biplex and $u$ connects $v$ by the definition of $R_{keep}$). (4) Each vertex $u_1$ in $R''_1$ disconnects at most $k$ vertices in $L\cup \{v\}$ (this is because $\overline{\delta}(u_1, L) \le k-1$ by the definition of $R_{enum}^1$ and $u_1$ disconnects $v$). (5) Each vertex $u_2$ in $R''_2$ disconnects \emph{exactly} $(k+1)$ vertices in $L\cup \{v\}$ (this is because $\overline{\delta}(u_2, L) = k$ by the definition of $R_{enum}^2$ and $u_2$ disconnects $v$). 
\end{proof}
\fi

Based on Lemma~\ref{lemma:refine-x-1}, we can construct potential local solutions $(L'\cup \{v\}, R')$ by removing a \emph{minimal} set $\overline{L'}$ of vertices from $L$ so that all vertices in $R''_2$ would disconnect at most $k$ remaining vertices, where $L' = L \backslash \overline{L'}$. Note that (1) after removing $\overline{L'}$ from $L$, those remaining vertices in $L\cup \{v\}$ (i.e., the vertices in $L'\cup \{v\}$) and vertices in $R_{keep} \cup R''_1$ would still disconnect at most $k$ vertices due to the hereditary property and 
(2) $\overline{L'}$ needs to be \emph{minimal} so that $(L'\cup \{v\}, R')$ would be a \emph{maximal} $k$-biplex. 

In addition, when a vertex $v'$ that connects all vertices in $R''_2$ is removed, each vertex in $R''_2$ would still disconnect $(k+1)$ vertices. Therefore, we focus on the set $L_{remo}$ of the vertices that disconnect at least one vertex in $R''_2$, i.e., $L_{remo} = \{v'\in L \mid   \overline{\delta}(v', R''_2) > 0\}$, and enumerate subsets $\overline{L'}$ of $L_{remo}$ to be removed. 
%
%
%
}
{\ChengCommentB An illustration is shown in Figure~\ref{fig:set_partition}(b).}
{\roundB 
For each minimal set $\overline{L'}$ such that $(L'\cup\{v\}, R')$ is a local solution, where $L' = L\backslash \overline{L'}$, it involves no more than $|R''_2|$ vertices. This is because only vertices in $R''_2$ disconnect more than $k$ vertices (exactly $(k+1)$ vertices) and to make one vertex in $R''_2$ disconnect at most $k$ vertices, it is enough to remove one vertex from $L_{remo}$}.


In conclusion, we enumerate $\overline{L'}$ from $L_{remo}$ with $|\overline{L'}| \le |R''_2| \le k$ to be removed, and for each $\overline{L'}$, we construct $(L'\cup \{v\}, R')$ as a candidate of local solution, where $L' = L\backslash \overline{L'}$. In this way, the search space of enumerating $L$ is reduced from $O(2^{|L|})$ to $O(|L_{remo}|^k)$.


\subsection{Refined Enumeration on $L$: 2.0}
\label{subsec:refine-x-2}

When enumerating $\overline{L'}$ from $L$ to be removed, we follow an increasing order of $|L|$ from $0$ to $|R''_2|$. If for a subset $\overline{L'}$ and $L' = L\backslash \overline{L'}$, $(L' \cup \{v\}, R')$ is a local solution, we prune all supersets of $\overline{L'}$ from being enumerated to be removed since removing each of them would construct a $k$-biplex that is not maximal within $(L\cup\{v\},R)$. 


\subsection{\texttt{EnumAlmostSat}: Summary and Analysis}
\label{subsec:enumalmostsat}
We present the algorithm for the \texttt{EnumAlmostSat} procedure in Algorithm~\ref{alg:enumalmostsat}, which is based on the 2.0 versions for enumerating $R$ and $L$. 
{ We use $Powerset(L_{remo})$ to denote the set of all subsets of $L_{remo}$ (Line 4 and 7).} {\revision Note that output solutions in {\ChengCommentC line 6} can be non-globally optimal since an almost-satisfying graph {rather than} the whole graph is used as the context for checking the maximality.}
The correctness can be guaranteed by the lemmas and discussions in this section.

\begin{algorithm}
	\small
	\caption{The algorithm: {\tt \texttt{EnumAlmostSat}}.}
	\label{alg:enumalmostsat}
	\KwIn{Almost-satisfying graph $(L\cup\{v\},R)$;}
	\KwOut{All local solutions including $v$ within $(L\cup\{v\},R)$;}
	
	Partition $R$ into $R_{keep}$, $R_{enum}^1$ and $R_{enum}^2$ (Section~\ref{subsec:refine-y-1} and \ref{subsec:refine-y-2})\;
	\ForEach{$R' = R_{keep} \cup R''_1\cup R''_2$ (as constructed in Section~\ref{subsec:refine-y-2})}{
	    Construct $L_{remo}$ as $\{v'\in L \mid  \overline{\delta}(v', R''_2) > 0\}$\;
	    \ForEach{$\overline{L'} \in Powerset({L_{remo}})$ with $|\overline{L'}|\! \le\! |R''_2|$ in an ascending order of $|\overline{L'}|$}{
	        Construct $L'$ as $L\backslash \overline{L'}$\;
	        \If{$(L'\cup\{v\}, R')$ is a local solution}{
	            Prune those subsets from $Powerset({L_{remo}})$, which are supersets of $\overline{L'}$\;
	                Print $(L'\cup\{v\}, R')$\;
	        }
	    }
	}
\end{algorithm}


\noindent\textbf{Time complexity.} 
The time complexity of \texttt{EnumAlmostSat} is dominated by that of enumerating subsets of $R$ and $L$ (Line 2-8).
There are $O(|R_{enum}|^k)$ enumerations of $R''$ (Line 2), and for each one, it incurs the following costs. First, it partitions $L$, which takes $O(|L|\cdot |R|)$ time (Line 3). Second, it enumerates subsets $\overline{L'}$ of $L_{remo}$ with the size at most $|R_2''|$ on $L_{remo}$ (Line 4). There are $O(|L_{remo}|^k)$ such $\overline{L'}$'s, and for each $\overline{L'}$, the cost is dominated by that of checking the maximality (Line 6), which is $O(|L|\cdot |R|)$. 
{
In summary, \texttt{EnumAlmostSat} would return at most $\gamma=O(|R_{enum}|^k\cdot |L_{remo}|^k)$ local solutions, and the time complexity of \texttt{EnumAlmostSat} is $\beta=O(|R_{enum}|^k\cdot |L_{remo}|^k \cdot |L|\cdot|R|)$, which is polynomial with $k$ as a constant.
}

\section{Extensions of \texttt{iTraversal}}
\label{sec:size-constrained}

In some scenarios, one may want to impose some size constraints on one side or both sides of a MBP to be enumerated. For example, one is interested in only those MBPs with the size on either side to be at least a threshold $\theta$. We show that such constraints can be conveniently incorporated to the \texttt{iTraversal} algorithm so that not all MBPs need to be enumerated, which achieves better efficiency. In contrast, for \texttt{bTraversal}, these constraints cannot be incorporated  easily and all MBPs need to be enumerated and then a filtering step as post-processing is necessary, which is inefficient. For illustration, we consider the constraint that a MBP has the sizes on its both sides at least $\theta$, and we call such a MBP a \emph{large MBP}. However, the techniques can be easily customized for slightly different constraints such that a MBP has the size on its left or right side at least a threshold $\theta$.


\if 0
We now adapt \texttt{iTraversal} (Alg. \ref{alg:itraversal}) to list all MBPs $H(L,R)$ with size larger than $\theta$, i.e., $|L|\geq \theta$ and $|R|\geq \theta$. 
A natural solution is to call \texttt{iTraversal} and then filter out MBPs with size smaller than $\theta$. 
However, this adaptation suffers from massive computations for enumerating small MBPs. 
Hence, we solve this issue via employing the following pruning techniques in \texttt{iTraversal}. Intuitively, they allow us to prune batches of small MBPs by skipping unnecessary procedures or enumerating less combinations during the execution. 
Note that, w.l.o.g., we still assume left-side traversal in this section.

\noindent\textbf{(1) Core pruning.}
We observe that a large $k$-biplex $(L, R)$ is a $(\theta-k)$-core subgraph. Here, $k$-core is a subgraph with each vertex having degree at least $k$. This is because $\delta(v,R) \geq |R|-k \geq \theta-k,~\forall~v\in L$ and $\delta(u,L) \geq |L|-k \geq \theta-k,~\forall~u\in R$. We present this result in the following lemma.
\begin{lemma}
	\label{lemma:core}
	A $k$-biplex $(L, R)$ with $|L|\ge \theta$ and $|R|\ge \theta$ is a $(\theta-k)$-core subgraph.
\end{lemma}

Based on Lemma \ref{lemma:core}, we can first conduct a $(\theta-k)$-core decomposition on the bipartite graph $G$, which prunes all vertices with the degree smaller than $(\theta-k)$~\cite{batagelj2003m,liu2019efficient} and then enumerate the large MBPs on the remaining graph. Note that this pruning technique is not specific to any algorithm.
\fi


\smallskip\noindent\textbf{Right-side pruning.}
Recall that \texttt{iTraversal} adopts the right-shrinking traversal, i.e., it traverses along links from a solution $H = (L, R)$ to another $H' = (L', R')$ with $R' \subseteq R$. This would allow several opportunities for pruning. \underline{(1) Almost-satisfying graph pruning}: When an almost-satisfying graph $G[H, v]$ is formed in Step 1 of \texttt{iThreeStep}, we prune it if $\delta(v,R) + k < \theta$. This is because any solution found based on this almost-satisfying graph would involve $v$ at the left side and less than $\theta$ vertices at the right side (including at most $\delta(v,R)$ vertices that connect $v$ and $k$ vertices that disconnect $v$). \underline{(2) Local solution pruning}: When enumerating the local solutions of an almost-satisfying graph $G[(L,R), v]$ in Step 2 of \texttt{iThreeStep}, the enumerations of the subsets $R'$ of $R$ with $|R'| < \theta$ can be pruned (correspondingly, the enumerations of the subsets $L'$ of $L$ can be saved) since all solutions formed from these local solutions would have a size on their right side being smaller than $\theta$ (due to the right-shrinking traversal). \underline{(3) Solution pruning}: When a solution $H = (L, R)$ with $|R| < \theta$ is found, we skip the recursive call of \texttt{iThreeStep} from $H$ since all solutions that will be found via this call would have the size of their right side at most $|R|$ and smaller than $\theta$ (due to the right-shrinking traversal). 


\smallskip\noindent\textbf{Left-side pruning.} 
Based on the exclusion strategy, we derive the following pruning technique. When a solution $H$ is found, we check whether $|\mathcal{L}|-|\mathscr{E}(H)|<\theta$, where $\mathscr{E}(H)$ is the exclusion set maintained for $H$ by the exclusion strategy. If so, we skip the recursive call \texttt{iThreeStep} from $H$ so that all links from $H$ are pruned. To verify, consider a solution $H'=(L',R')$ that can be reached from $H$ via the call \texttt{iThreeStep} from $H$. {\roundB There are two cases. \underline{Case 1:} $L'\cap \mathscr{E}(H)\neq \emptyset$. The link from $H$ to $H'$ would be pruned by the exclusion strategy. \underline{Case 2:} $L'\cap \mathscr{E}(H) = \emptyset$. We have $|L'| \le   |\mathcal{L}|-| \mathscr{E}(H)| < \theta$, and thus $H'$ cannot be a large MBP.}

    

\if 0
\subsection{Enumerating Maximal Bicliques (MBPCs)}
\label{subsec:MBPC}
Basically, we can directly set $k=0$ as a $k$-biplex will reduce to a biclique when $k=0$. Moreover, the `all-to-all' constraint allows the following improvements.
\begin{itemize}[leftmargin=*]
    \item \textbf{Efficient procedures.} For \texttt{EnumAlmostSat}, given an almost-satisfying graph $G[(L,R),v]$, it returns exactly a local solution $(L\cup\{v\}, \Gamma_G(v,R))$. For \texttt{ExtendToMax}, given a local solution $(L,R)$, it first selects vertex $w$ with the smallest degree $\delta_G(\mathscr{L},w)$, then checks vertices in $\Gamma_G(\mathscr{L},w)\backslash L$ (instead of $\mathscr{L}\backslash L$).
    \item \textbf{Efficient redundancy checking.} Instead of employing the B-tree to avoid repeated outputs, we prove that \texttt{iTraversal} without the B-tree lists all maximal bicliques without redundancies, which is guaranteed by the exclusion-aware traversal.
\end{itemize}
\fi
\section{Experiments}
\label{sec:exp}

\textbf{Datasets.} We use both real and synthetic datasets. The real datasets are taken from various domains {\roundD(http://konect.cc/)} and summarized in Table \ref{tab:dataset}. 
The synthetic datasets are Erd{\"o}s-R{\'e}yni (ER) graphs, which are generated by first creating a certain number of vertices and then randomly creating a certain number of edges between pairs of vertices. {\roundB We set the number of vertices and edge density as 100k and 10 for synthetic datasets, respectively, by default. Here, the edge density of a graph $G=(\mathcal{L}\cup \mathcal{R},\mathcal{E})$ is defined as $|\mathcal{E}|/(|\mathcal{L}|+|\mathcal{R}|)$.}

\begin{table}[t]
	\centering
	\scriptsize
	\small
	\caption {Real datasets}
	\label{tab:dataset}
	\begin{tabular}{c|c|r|r|r}
		\hline
		 \multicolumn{1}{c|}{\textbf{Name}}
		 & \multicolumn{1}{c|}{\textbf{Category}}
		 & \multicolumn{1}{c|}{$|\mathcal{L}|$}
		 & \multicolumn{1}{c|}{$|\mathcal{R}|$}
		
		& \multicolumn{1}{c}{$|\mathcal{E}|$}
		\\
		\hline\hline
		Divorce   &   HumanSocial   &9  &50     &  225  \\
		\hline
		Cfat     & Miscellaneous &100 &100     &  802 \\
		\hline
		Crime     & Social & 551  & 829   &  1,476 \\
		\hline
		Opsahl     & Authorship &2,865  &4,558       &  16,910 \\
		\hline\hline
		Marvel     & Collaboration  &19,428  &6,486      &  96,662 \\
		\hline
		Writer & Affiliation &89,356  &46,213  & 144,340 \\
		\hline
		Actors & Affiliation &392,400  &127,823    & 1,470,404 \\
		\hline
		IMDB &Communication &428,440  &896,308    & 3,782,463 \\
		\hline
		DBLP & Authorship & 1,425,813  &4,000,150   & 8,649,016 \\
		\hline
		Google & Hyperlink &17,091,929  &3,108,141   & 14,693,125 \\
		\hline
	\end{tabular}
\end{table}

\smallskip
\noindent\textbf{Algorithms.} We compare our algorithm \texttt{iTraversal}, with three baselines, namely, \texttt{iMB} \cite{yu2021efficient}, \texttt{FaPlexen} \cite{DBLP:conf/aaai/ZhouXGXJ20} and \texttt{bTraversal}.
\texttt{iMB} was proposed for enumerating maximal $k$-biplexes. \texttt{FaPlexen} is the state-of-the-art algorithm for enumerating large maximal $k$-plexes and we run it on the inflated graph of a bipartite graph for enumerating maximal $k$-biplexes. \texttt{bTraversal} follows the conventional framework of reverse search and implements the \texttt{EnumAlmostSat} procedure by first inflating the graph and then using an existing algorithm for enumerating local maximal $k$-plexes~\cite{DBLP:conf/sigmod/BerlowitzCK15}.

\smallskip
\noindent\textbf{Settings.} 
All algorithms were implemented in C++ and executed on a machine with a 2.66GHz processor and 32GB of memory, with CentOS installed. 
We set the running time limit (INF) as 24 hours and the memory budget (OUT) as 32GB. The source codes are available at https://github.com/KaiqiangYu/SGMOD22-MBPE.

\subsection{Comparison among Algorithms}
\label{subsec:comparison-results}

\noindent{\textbf{Results of running time (real datasets).}}
{\roundD The results of running time when returning the first 1,000 MBPs (by following the existing studies~\cite{DBLP:conf/sigmod/BerlowitzCK15}) are shown in Figure~\ref{fig:real}(a) for varying datasets, (b) and (c) for varying $k$, and (d) and (e) for varying the number of returned MBPs (results on Writer and DBLP are shown only and those on other datasets show similar clues and {\roundB are put in the technical report \cite{TR})}.} We have the following observations. First, \texttt{iTravesal} outperforms all other algorithms and can finish on all datasets. Second, \texttt{iMB} cannot finish {\roundE within INF} on large datasets due to its ineffective pruning techniques. Third, \texttt{FaPlexen} cannot finish {\roundE within OUT} on large datasets either, mainly due to its underlying graph inflation procedure. For example, for Marvel which has 96 thousand edges, it would generate more than 200 million edges after graph inflation. Therefore, we exclude \texttt{iMB} and \texttt{FaPlexen} for the remaining experiments of enumerating MBPs. Fourth, \texttt{bTraversal} is more scalable than \texttt{iMB} and \texttt{FaPlexen} but still cannot finish on the Google dataset {\roundE within INF}. Fifth, \texttt{iTraversal} and \texttt{bTraversal} have their running time increase as $k$ and the number of MBPs grows, and \texttt{iTraversal} is up to four orders of magnitude faster than \texttt{bTraversal}. {\revision We note that \texttt{iTraversal} has its running time clearly rising as $k$ increases when compared with \texttt{bTraversal}. The reason is that \texttt{iTraversal} adopts the proposed \texttt{L2.0+R2.0}, whose pruning power gets weaker as $k$ increases, while \texttt{bTraversal} adopts a procedure based on graph inflation, which is not that sensitive to $k$.}
\begin{figure}[]
	\centering
	\centering
	\begin{minipage}{8cm}
			\includegraphics[width=8cm]{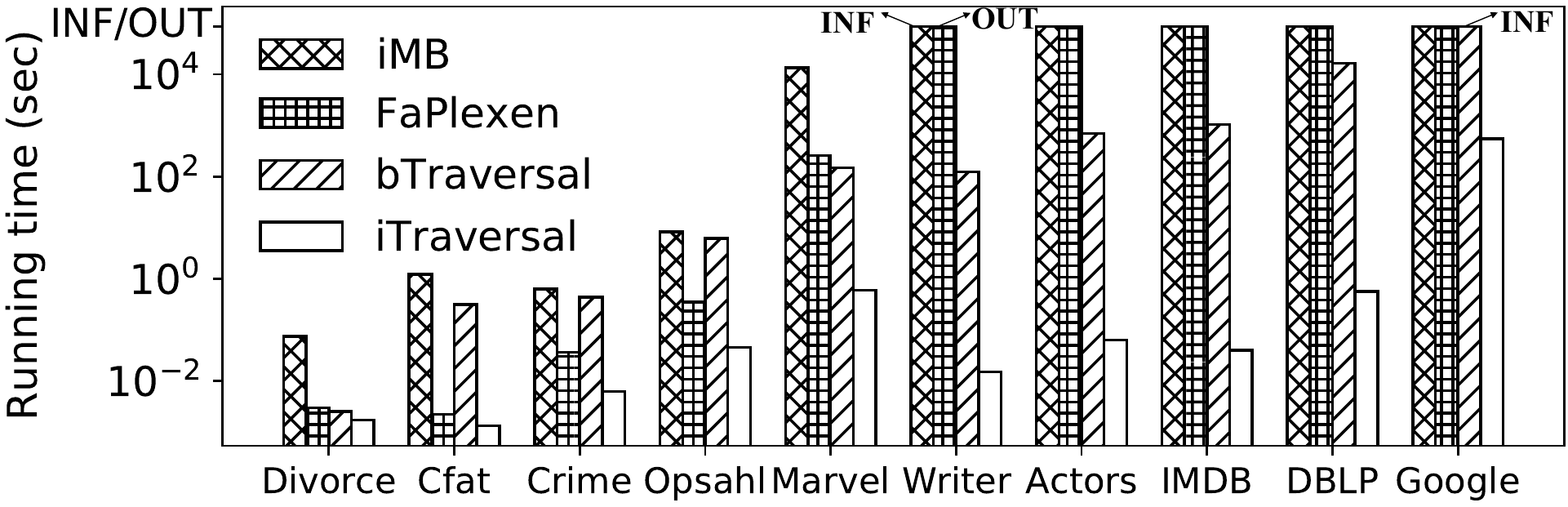}
	\end{minipage}
	\\
	\ \ \ \ \ (a) Varying datasets ($k=1$)
	\begin{tabular}{c c}
		\begin{minipage}{3.50cm}
			\includegraphics[width=4.1cm]{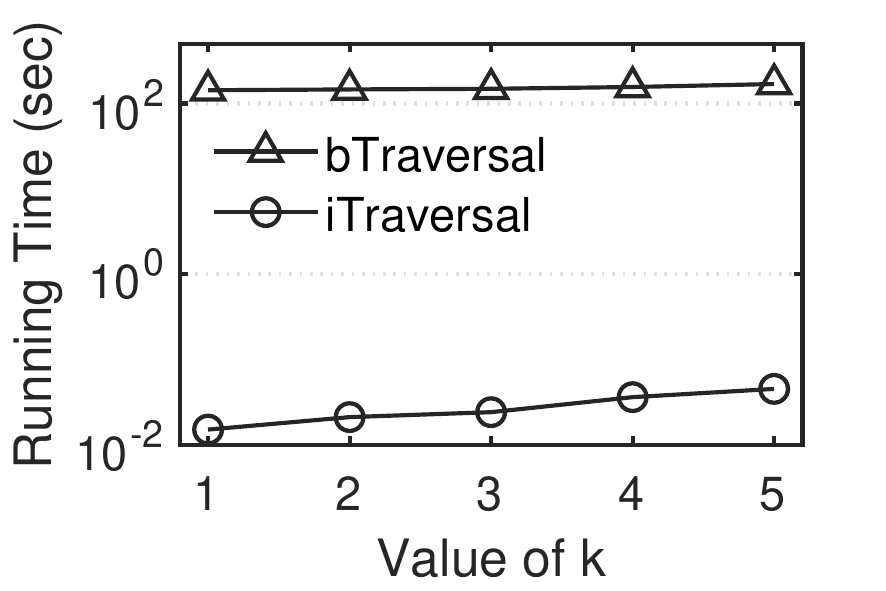}
		\end{minipage}
			&
	\begin{minipage}{3.50cm}
		\includegraphics[width=4.1cm]{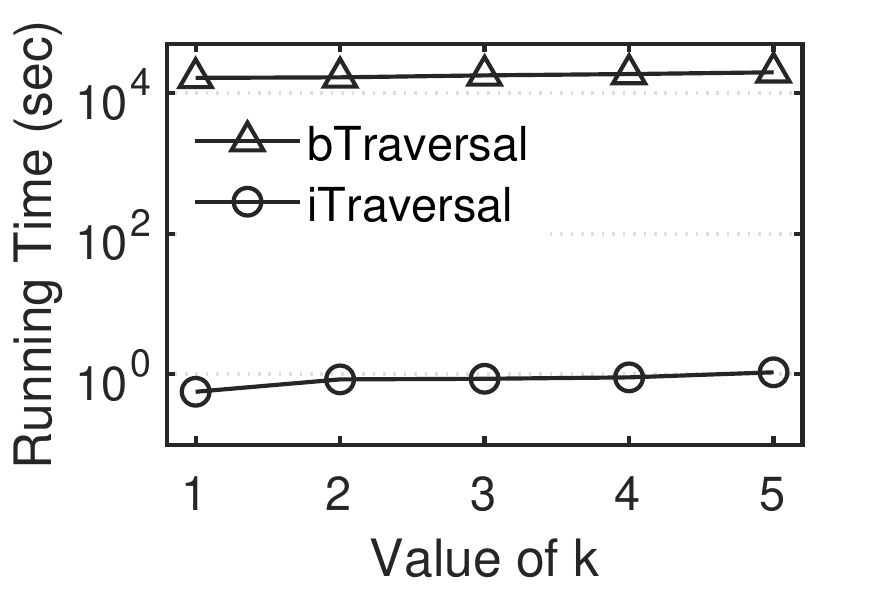}
	\end{minipage}
		\\
		\ \ \ \ (b) Varying $k$ (Writer)
		&
		\ \ \ \ (c) Varying $k$ (DBLP)
		\\
		\begin{minipage}{3.50cm}
			\includegraphics[width=4.1cm]{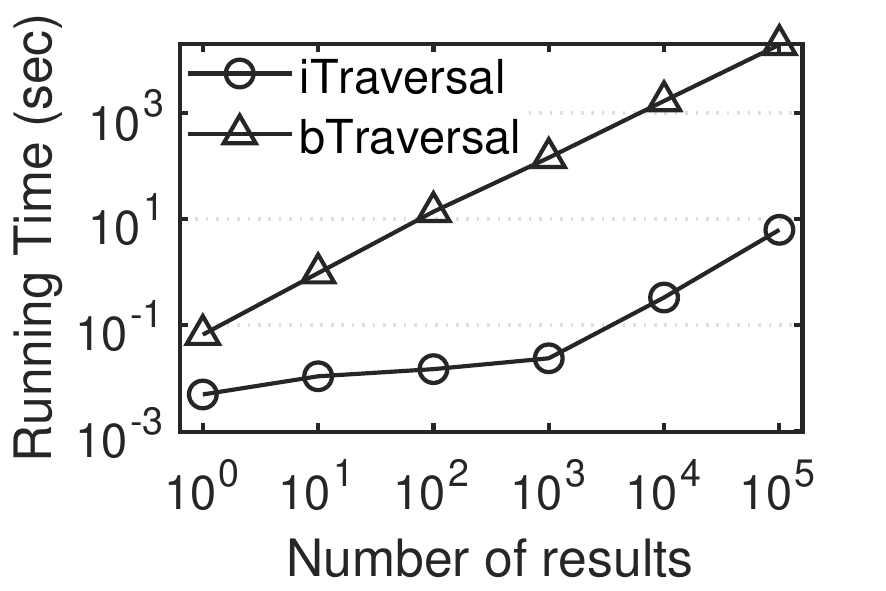}
		\end{minipage}
			&
	\begin{minipage}{3.50cm}
		\includegraphics[width=4.1cm]{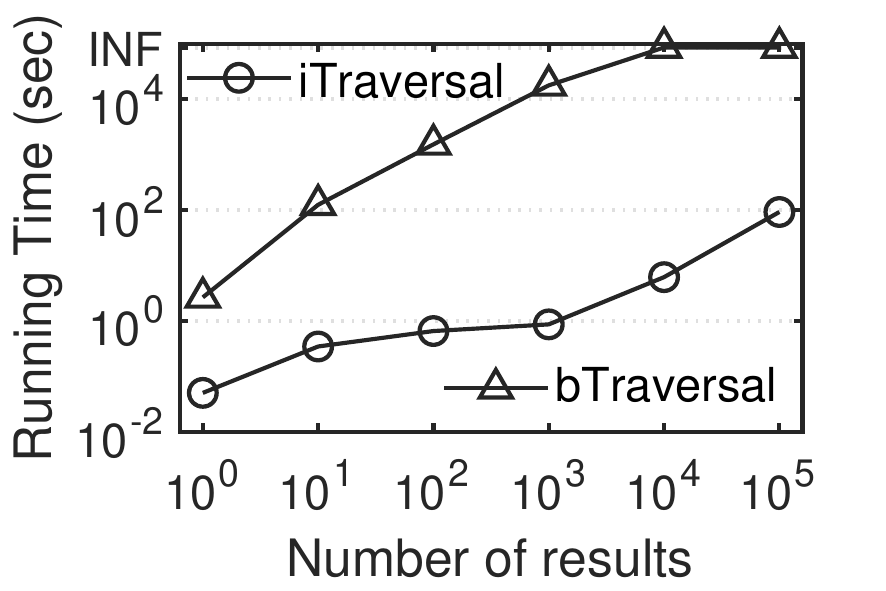}
	\end{minipage}
		\\
		(d) Varying \# of MBPs (Writer)
		&
		(e) Varying \# of MBPs (DBLP)
	\end{tabular}
	\caption{\roundD Results of running time (real datasets)}
	\label{fig:real}
\end{figure}

\smallskip\noindent{\textbf{Results of delay (real datasets).}}
The results are shown in Figure~\ref{fig:small_dataset}. Note that we use small datasets for this experiment since some baseline algorithms \texttt{iMB} cannot find all solutions on larger datasets {\roundE within INF} and thus the delay cannot be measured. In general, the delay of all algorithms increases with $k$ and \texttt{iTraversal} has the smallest delays. This is consistent with our theoretical analysis that the delay of \texttt{iTraversal} is polynomial while those of \texttt{iMB} and \texttt{FaPlexen} are exponential w.r.t. the size of the input bipartite graph. 
\begin{figure}[ht]
	\centering
	\centering
	\begin{tabular}{c c}
		
		\begin{minipage}{3.80cm}
			\includegraphics[width=4.1cm]{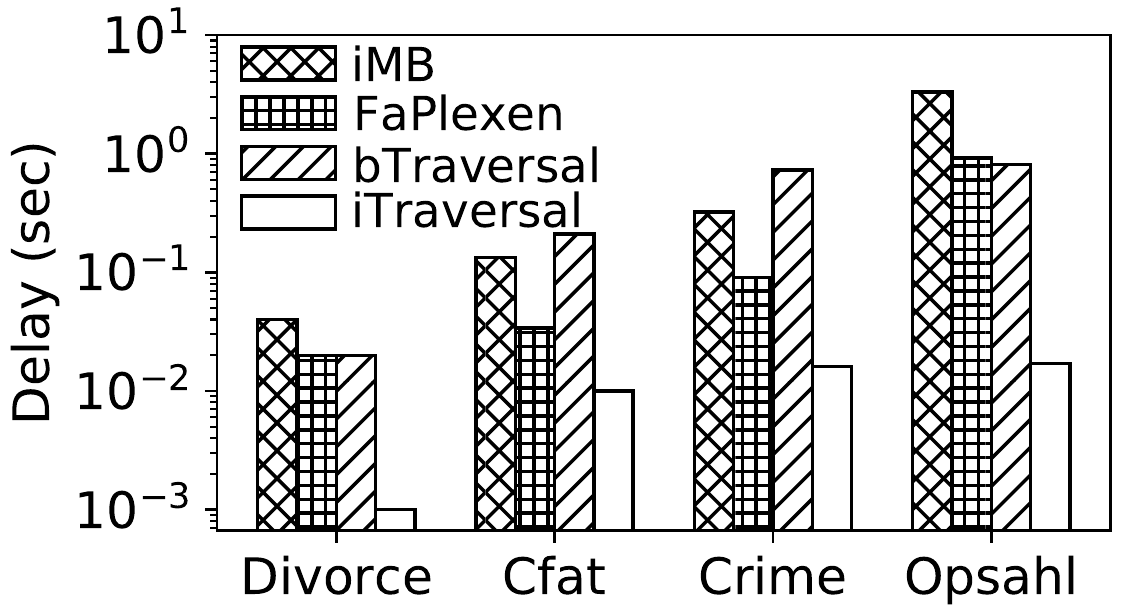}
		\end{minipage}
		&
		\begin{minipage}{3.80cm}
			\includegraphics[width=4.0cm]{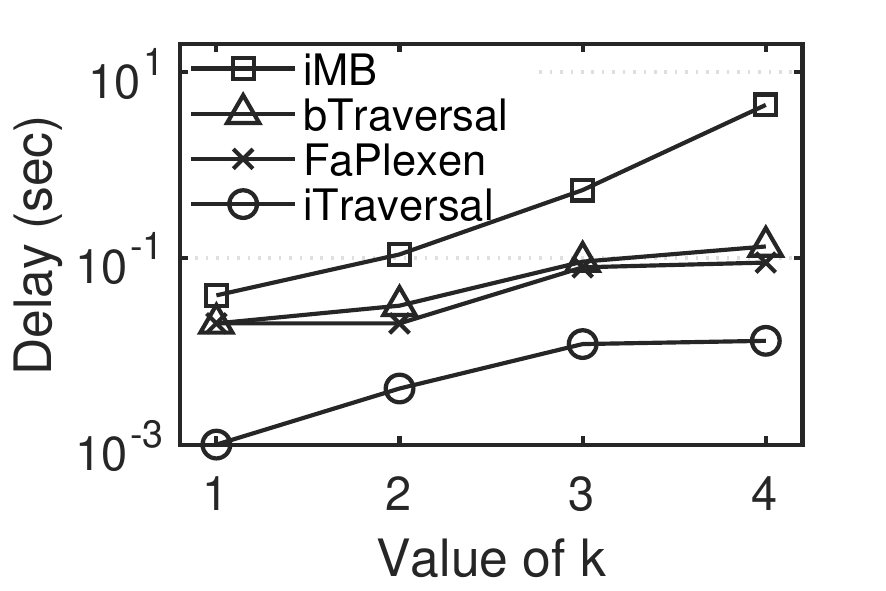}
		\end{minipage}		
		\\
		(a) Varying datasets ($k=1$)
		&
		(b) Varying $k$ (Divorce) 
	\end{tabular}
	\caption{\roundD Results of delay}
	\label{fig:small_dataset}
\end{figure}

\smallskip
\noindent{\textbf{Scalability test/Varying \# of vertices (synthetic datasets).}}
 The results when returning the first 1,000 MBPs are shown in Figure~\ref{fig:syn}(a).
\texttt{iTraversal} achieves at least 100$\times$ speedup compared with \texttt{bTraversal} and can handle large datasets with 100 million vertices and 1 billion edges {\roundE within INF} while \texttt{bTraversal} cannot.

\smallskip
\noindent{\textbf{Varying edge density (synthetic datasets).}}
 The results are shown in Figure~\ref{fig:syn}(b). \texttt{iTraversal} outperforms \texttt{bTraversal} by around 1-5 orders of magnitude. The speedup decreases as the graph becomes denser. Possible reasons include: (1) the density gap between the original almost-satisfying graph and the inflated graph narrows {\roundA (note that \texttt{bTraversal} involves a graph inflation step for implementing \texttt{EnumAlmostSat})} and {\roundB (2) the number of solutions decreases in dense bipartite graphs.}

\begin{figure}[]
	\centering
	\centering
	\begin{tabular}{c c}

		\begin{minipage}{3.50cm}
			\includegraphics[width=4cm]{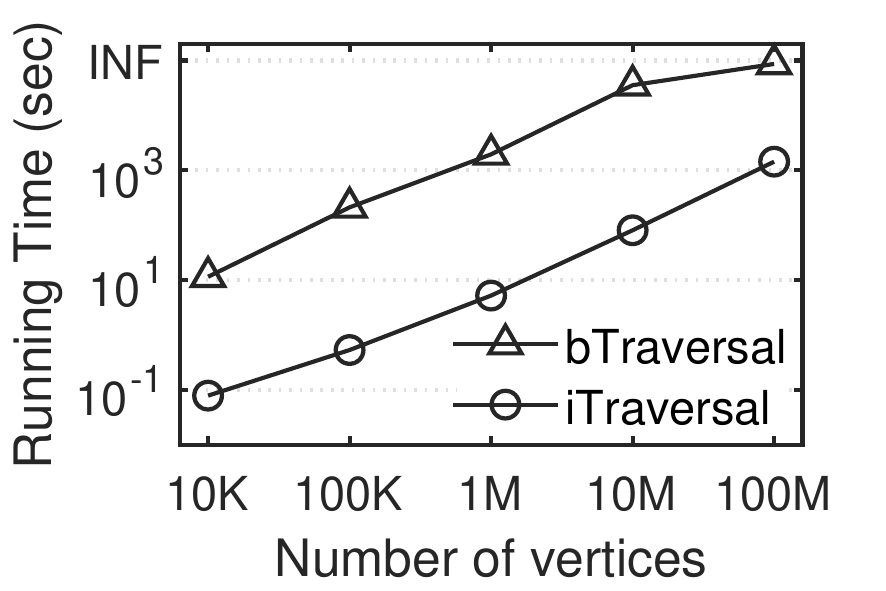}
		\end{minipage}
		&
		\begin{minipage}{3.50cm}
			\includegraphics[width=4cm]{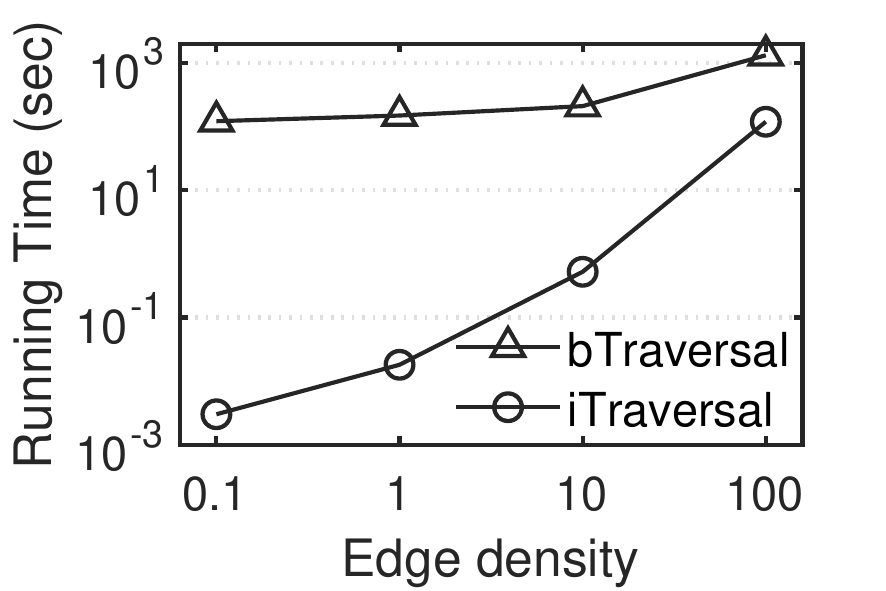}
		\end{minipage}
		\\
		(a) Varying \# of vertices
		&
		(b) Varying edge density
	\end{tabular}
	\caption{Results of running time (synthetic datasets)}
	\label{fig:syn}
\end{figure}

\smallskip
\noindent\textbf{Extension of \texttt{iTraversal} for enumerating large MBPs (real datasets)}. We compare the extension of \texttt{iTraversal} with \texttt{iMB}, for which some pruning techniques are developed for {\roundE enumerating large MBPs~\cite{yu2021efficient}}. We exclude \texttt{FaPlexen} and \texttt{bTraversal} for comparison because (1) \texttt{FaPlexen} cannot handle large datasets due to the graph inflation procedure and (2) \texttt{btraversal} cannot be extended with techniques for pruning small MBPs, but needs to enumerate all MBPs and then filter out those small ones. In addition, for both \texttt{iTraversal} and \texttt{iMB}, we perform a $(\theta-k)$-core decomposition~\cite{liu2019efficient} on the input bipartite graph first before enumerating large MBPs on them. This is due to the fact that each large MBP, i.e., a MBP with the size of both sides at least $\theta$, corresponds to a $(\theta-k)$-core, which is clear. The results of running time {\roundC on Writer and DBLP} are shown in Figure~\ref{fig:size_constraint} for varying $\theta$. We observe that the running time decreases as $\theta$ grows. {\roundB This is because both the size of $(\theta-k)$-core and the number of large MBPs decrease as $\theta$ grows.} \texttt{iTraversal} is faster than \texttt{iMB} by up to four orders of magnitude.

\begin{figure}[ht]
	\centering
	\centering
	\begin{tabular}{c c}
		\begin{minipage}{3.70cm}
			\includegraphics[width=4.1cm]{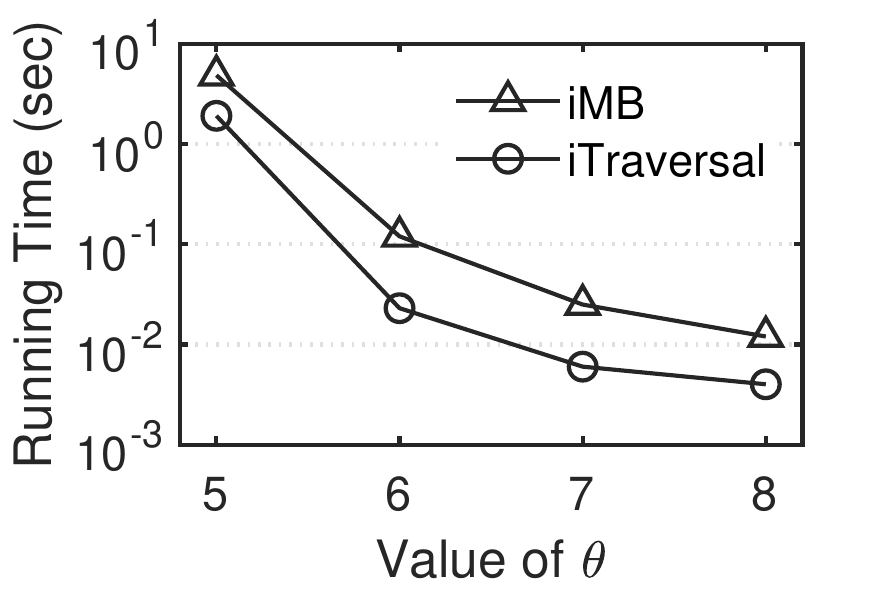}
		\end{minipage}
		&
		\begin{minipage}{3.70cm}
			\includegraphics[width=4.1cm]{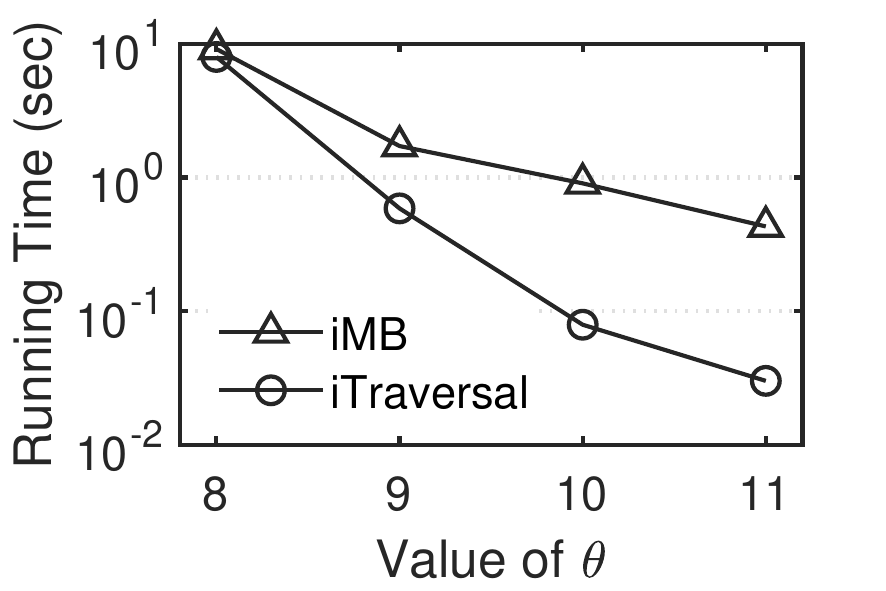}
		\end{minipage}	
		\\
		(a) Varying $\theta$ ($k$=1, Writer)
		&
		(b) Varying $\theta$ ($k$=1, DBLP)	
	\end{tabular}
	\caption{\roundD Results of enumerating large MBPs}
	\label{fig:size_constraint}
\end{figure}

\subsection{Performance Study of \texttt{iTraversal}}

\noindent{\textbf{\texttt{bTraversal} vs \texttt{iTraversal}}} 
We compare three different versions of \texttt{iTraversal}, namely (1) \texttt{iTraversal}: the full version involving left-anchored traversal, right-shrinking traversal and exclusion strategy, (2) \texttt{iTraversal-ES}: the full version without the exclusion strategy, and (3) \texttt{iTraversal-ES-RS}: \texttt{iTraversal-ES} without the right-shrinking traversal. We also consider the \texttt{bTraversal} for comparison. All these algorithms adopt the \texttt{L2.0+R2.0} algorithm for the \texttt{EnumAlmostSat} procedure for fair comparison. We measure the number of links of the solution graph and the running time. The results are shown in Figure \ref{fig:strategy}. Note that we use those small real datasets for this experiment only since \texttt{bTraversal} cannot find all solutions {\roundE within INF}. We set the maximum allowed number of links (UPP) as $10^{10}$. Figure \ref{fig:strategy}(a) and (b) show that  \texttt{iTraversal} would generate the sparsest solution graph and achieves up to 1000$\times$ speedup compared with \texttt{bTraversal}. These results also verify the effectiveness of left-anchored traversal, right-shrinking traversal, and exclusion strategy. Figure \ref{fig:strategy}(c) and (d) show the results of varying $k$. The scale of solution graph and the running time of all algorithms grow exponentially w.r.t. $k$. 
\begin{figure}[]
	\centering
	\centering
	\begin{tabular}{c c}
		\begin{minipage}{3.70cm}
			\includegraphics[width=4.0cm]{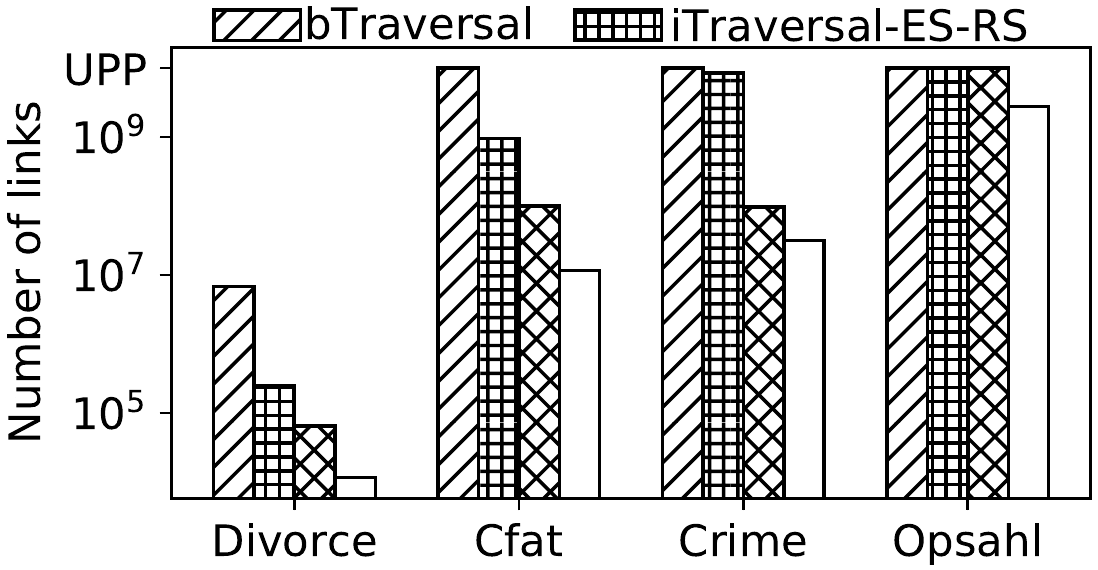}
		\end{minipage}
		&
		\begin{minipage}{3.70cm}
			\includegraphics[width=4.0cm]{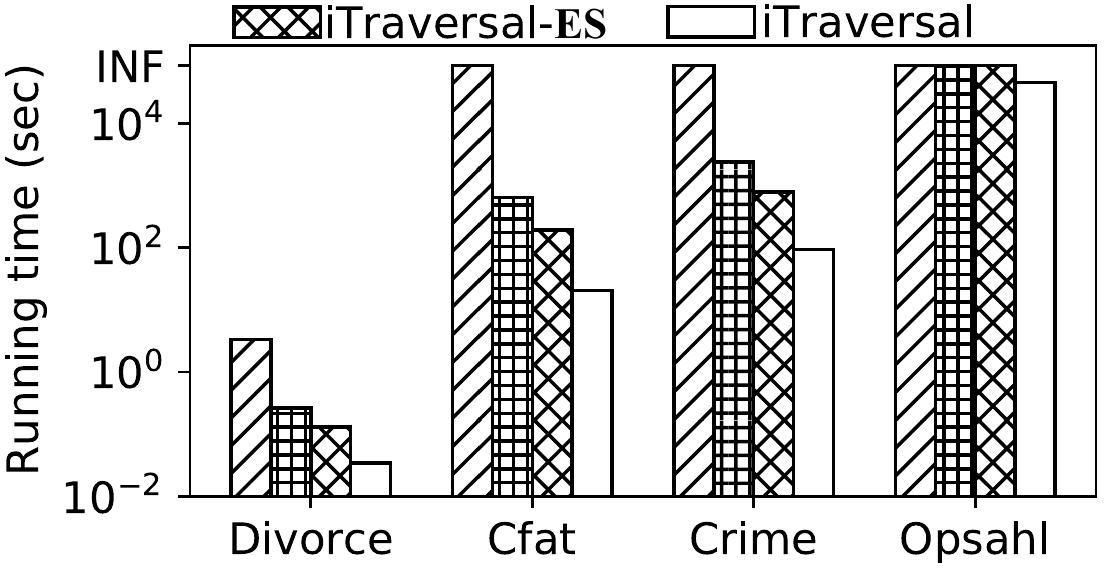}
		\end{minipage}	
		\\
		\ \ \ (a) Solution graph ($k=1$)
		&
		\ \ \ {(b) Running time ($k=1$)}	
		\\	
		\begin{minipage}{3.70cm}
			\includegraphics[width=4.1cm]{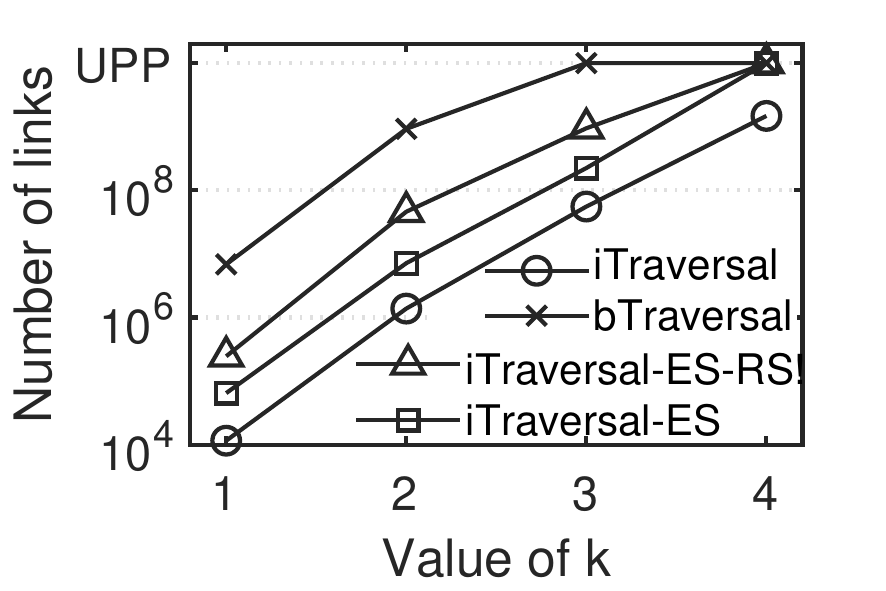}
		\end{minipage}
		&
		\begin{minipage}{3.70cm}
			\includegraphics[width=4.1cm]{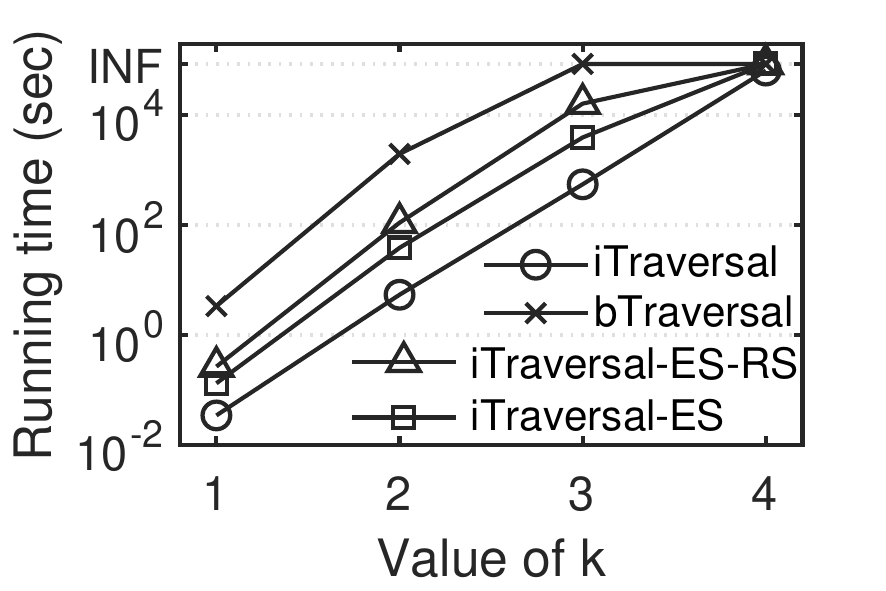}
		\end{minipage}	
		\\
		\ \ \ (c) Solution graph (Divorce)
		&
		\ \ \ (d) Running time (Divorce)	
	\end{tabular}
	\caption{\texttt{bTraversal} vs \texttt{iTraversal}}
	\label{fig:strategy}
\end{figure}

\smallskip
\noindent{\textbf{Performance study of \texttt{EnumAlmostSat}.}}
We consider different combinations of enumeration methods on $R$ and $L$ as described in Section~\ref{sec:enumalmostsat}. The algorithm \texttt{L1.0+R1.0} corresponds to the combination of ``refined enumeration on $L$: 1.0'' and ``refined enumeration on $R$: 1.0''. Similarly, we have \texttt{L1.0+R2.0}, \texttt{L2.0+R1.0}, and \texttt{L2.0+R2.0}. In addition, we consider a baseline method \texttt{Inflation} which conducts a graph inflation procedure and then uses an existing procedure for enumerating local maximal $k$-plexes \cite{DBLP:conf/sigmod/BerlowitzCK15}. 
We execute these algorithms 1,000 times with random almost-satisfying graphs that are constructed based on Writer {\roundB and DBLP}. {\roundA Specifically, we run $\texttt{iTraversal}$ on {\roundA a dataset (Writer or DBLP)}, collect the first 1,000 MBPs and then for each MBP $H=(L,R)$, we add to $H$ a random vertex $v\in \mathcal{L}\backslash L$.} We present the average running time in Figure \ref{fig:enum}. According to the results, the running time of all algorithms grows as $k$ increases and \texttt{L2.0+R2.0} is the fastest, achieving up to 1000$\times$ speedup compared with \texttt{Inflation}. 
\begin{figure}[ht]
	\centering
	\begin{tabular}{c c}
		\begin{minipage}{3.70cm}
			\includegraphics[width=4.0cm]{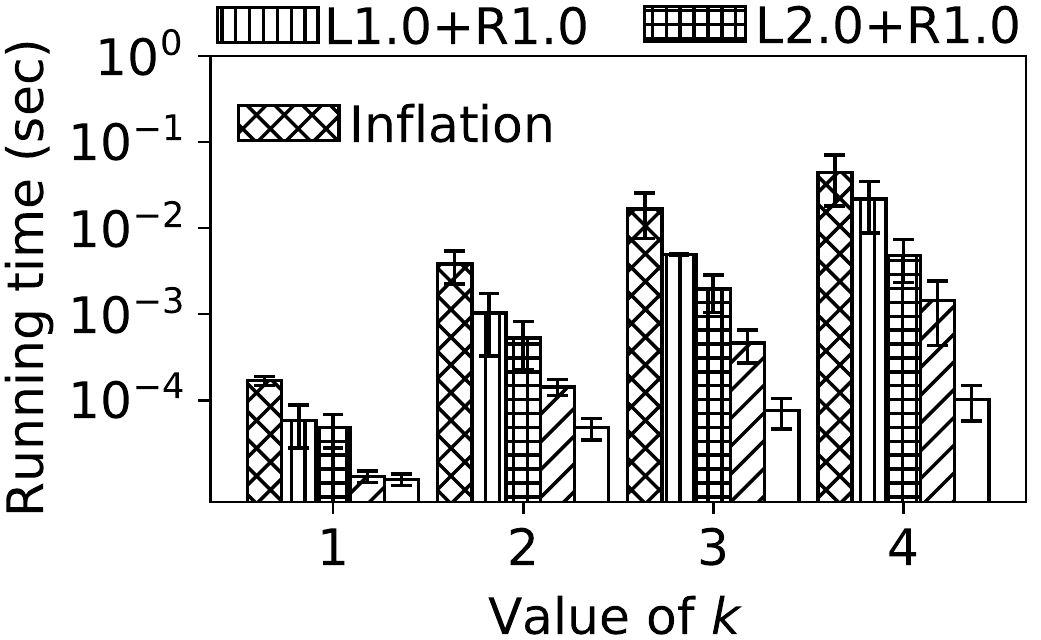}
		\end{minipage}
		&
		\begin{minipage}{3.70cm}
			\includegraphics[width=4.0cm]{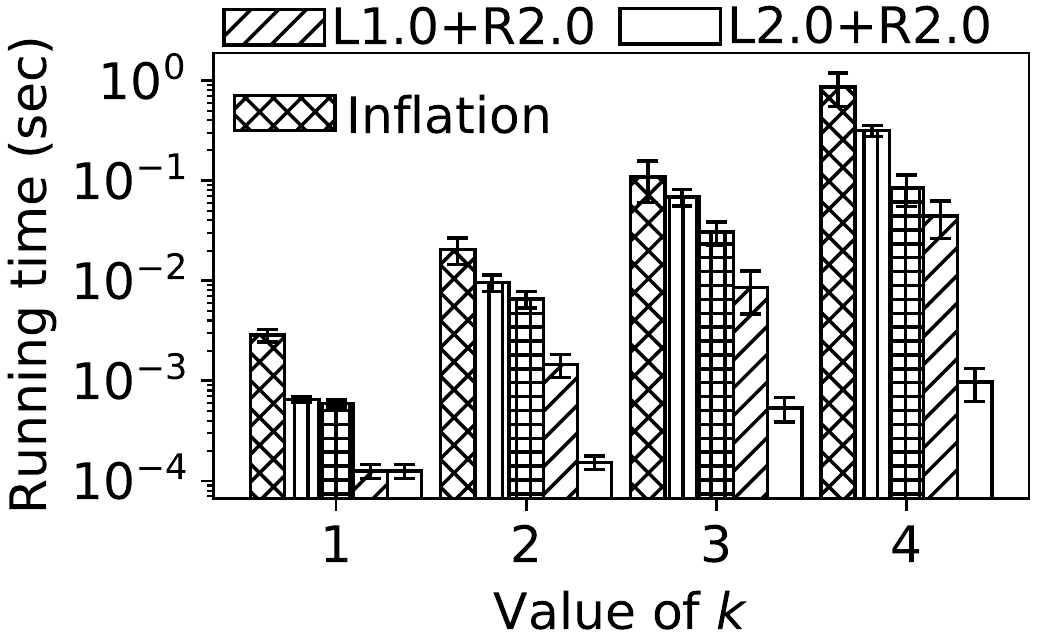}
		\end{minipage}		
		\\
		(a) Varying $k$ (Writer)
		&
		(b) Varying $k$ (DBLP)
	\end{tabular}
	\caption{Comparison among algorithms of \texttt{EnumAlmostSat}}
	\label{fig:enum}
\end{figure}

\smallskip
\noindent{\textbf{Left-anchored traversal vs Right-anchored traversal}.}
Recall that the left-anchored traversal is based on an initial solution $H_0 = (L_0, \mathcal{R})$. A symmetric option is to use an intial solution $H_0' = (\mathcal{L}, R_0)$ as described in Section~\ref{subsec:iTraversal}. We compare these two options by varying $k$ and measuring the running of time of returning the first 1,000 MBPs.
{\roundE The results are put in the technical report \cite{TR} for sake of the space.}
We observe from the results that these two options have similar results. {\roundA For left-anchored traversal, the scale of its solution graph depends on both sides, e.g., the number of almost-satisfying graphs is affected by $\mathcal{L}$ and the number of local MBPs within an almost-satisfying graphs is affected by both $\mathcal{L}$ and $\mathcal{R}$. There are no clear metrics to determine a dominating side.}


\subsection{Case Study: Fraud Detection}
\label{subsec:case-study}

{\ChengCommentB 
{\revision In this case study, we investigate four types of cohesive subgraphs, namely biclique, $k$-biplex, $(\alpha, \beta)$-core, and $\delta$-quasi-biclique (abbreviated as $\delta$-QB), for a fraud detection task in face of camouflage attacks~\cite{DBLP:conf/kdd/HooiSBSSF16}.}
%
Specifically, we use the ``Amazon Review Data'' (the software category)~\cite{dataset}, which contains 459,436 reviews on 21,663 softwares by 375,147 users. 
We then consider a scenario with the random camouflage attack~\cite{DBLP:conf/kdd/HooiSBSSF16},
which injects a fraud block with 2K fake users, 2K fake products (i.e., softwares), 200K fake comments, and 200k camouflage comments to the data. 
The fake comments (resp. camouflage comments) are randomly generated between pairs of fake users and fake products (resp. real products) such that each fake user has an equal number of fake comments and camouflage ones. 
The rationale behind the random camouflage attack is that in reality, fake users could be coordinated to comment on a set of products and/or deliberately post comments on some real products so as not to be spotted out~\cite{DBLP:conf/kdd/HooiSBSSF16}.
%
\if 0
We then find $k$-biplexes with different settings of $k$, namely 0, 1, and 2. When $k$=$0$, $k$-biplexes correspond to bicliques. We impose size constraints on the number of users (denoted by $\theta_L$) and the number of products (denoted by $\theta_R$) and find large $k$-biplexes with different sizes. {\roundD We also find $(\alpha,\beta)$-cores by setting $\alpha$=$\theta_R$ and $\beta$=$\theta_L$.}
\fi
{\ChengComment {\revision We then find four types of cohesive subgraphs, namely biclique, $k$-biplex (with $k=1$ and $k=2$), $(\alpha,\beta)$-core, and $\delta$-QB (with $\delta=0.01$, 0.1, 0.2 and 0.3).
For biclique, $k$-biplex and $\delta$-QB, we explore different size constraints on the number of users (denoted by $\theta_L$) and the number of products (denoted by $\theta_R$).} For $(\alpha,\beta)$-core, we explore different settings of $\alpha$ and $\beta$ with $\alpha$=$\theta_R$ and $\beta$=$\theta_L$.}
{\roundD We classify all users and products that are involved in the found subgraphs as fake items and others as real ones.} We then measure the precision, recall and F1 score. 
{\roundD

The results are shown in Figure~\ref{fig:case_study1}, where $\theta_L (\beta)$ is fixed at 4 and $\theta_R (\alpha)$ varies from 3 to 7.}
We have the following observations.
%
%
\begin{itemize}[leftmargin=*]
\item \textbf{$k$-biplex.}
It has both high precision and high recall with appropriate settings of $\theta_R$ (e.g., for $1$-biplex, it has precision around $90\%$ and recall around $100\%$ when $\theta_R = 5$). 
This shows that in a scenario with random camouflage attacks, fake users and products can be effectively spotted out by finding $k$-biplexes. In this scenario, a group of fake users (e.g., those who are coordinated) would connect most of products in a pool (e.g., those that are to be promoted by paying the fake users) and deliberately disconnect few of the products, and this phenomenon is effectively captured by $k$-biplexes. {\revision In addition, we find that $k$-biplexes with a large $k$ would involve many disconnections and become less cohesive. Hence, small $k$'s are usually used for $k$-biplex in real applications, e.g., the setting $k=1$ achieves the best result for this fraud detection application.}
\item \textbf{biclique.}
 {\revision It 
 usually has very low recall as shown in Figure~\ref{fig:case_study1}(b)} (e.g., it has recall around 55\% when $\theta_R = 4$ and close to 0 when $\theta_R \ge 5$), since biclique requires complete connections which is too strict. {\roundD We remark that if no users are found, the precision and F1 score are undefined, marked as ``ND'' in Figure~\ref{fig:case_study1}(a) and (c).}
\item \textbf{$(\alpha,\beta)$-core.}
It has high recall but very low precision, since it would usually find large but sparse subgraphs (e.g., it has precision below 25\% in all settings shown in the figure and not beyond 30\% even with larger $\alpha$'s). {\revision 
{\ChengCommentC We note that all ($\alpha, \beta$)-cores found are connected.}
In addition, we observe that its precision grows as both $\alpha$ and $\beta$ become larger, but then the recall would drop. By exploring various settings of $\alpha$ and $\beta$, we find that when $\alpha=30$ and $\beta=12$, it has the best F1 score of 0.64, which is still worse than that of 1-biplex (i.e., 0.96).}

{\revision \item \textbf{$\delta$-QB.}  With larger $\delta$'s, it would allow more disconnections in a $\delta$-QB, and thus the precision decreases and the recall increases. In addition, with larger $\theta_R$'s, it would find fewer $\delta$-QBs that are large, and thus the precision increases and the recall decreases. By trying various settings, we found that when $\delta=0.2$ and $\theta_R=6$, it has the best F1 score of 0.88, which is worse than that of 1-biplex. We also note that a $k$-biplex with the sizes of both sides at least $\theta$ is a $k/\theta$-QB, and a $\delta$-QB with the sizes of both sides at most $\theta$ is a $\lceil \theta\delta\rceil$-biplex. With typical settings of $k$ (e.g., below 5) and $\delta$ (e.g., a small real number below 1), $k$-biplexes are usually denser than $\delta$-QBs since for the latter, there could be possibly be many disconnections when the sizes are large.
}
\end{itemize}
In summary, $k$-biplexes are superior over other structures for fraud detection, as confirmed by the F1 scores shown in Figure~\ref{fig:case_study1}(c) (with the best of each method in bold).}
%
%

\begin{figure}[]
	\centering
	\centering
	\begin{tabular}{c c}
		\begin{minipage}{3.70cm}
			\includegraphics[width=4.0cm]{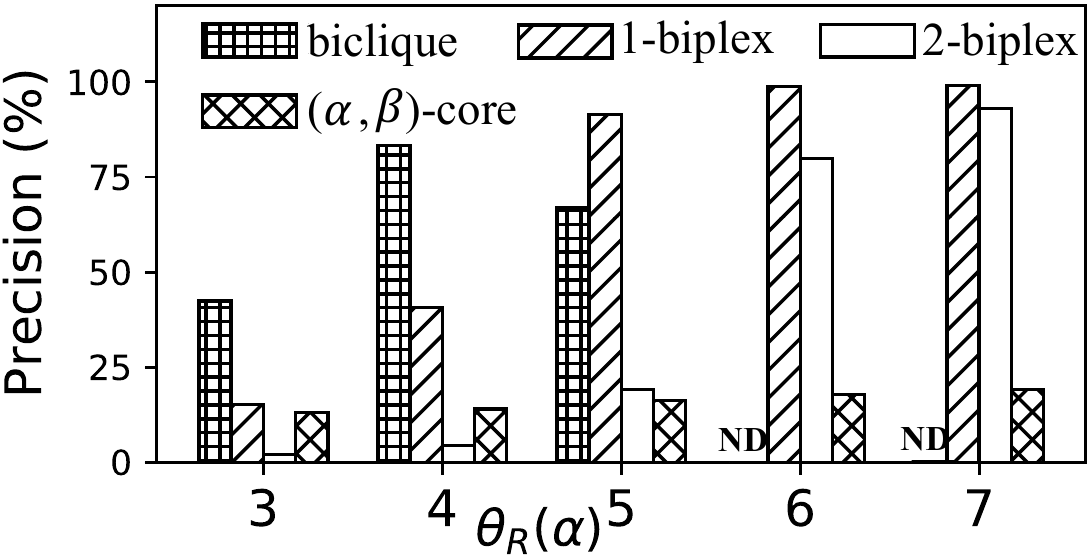}
		\end{minipage}
		&
		\begin{minipage}{3.70cm}
			\includegraphics[width=4.0cm]{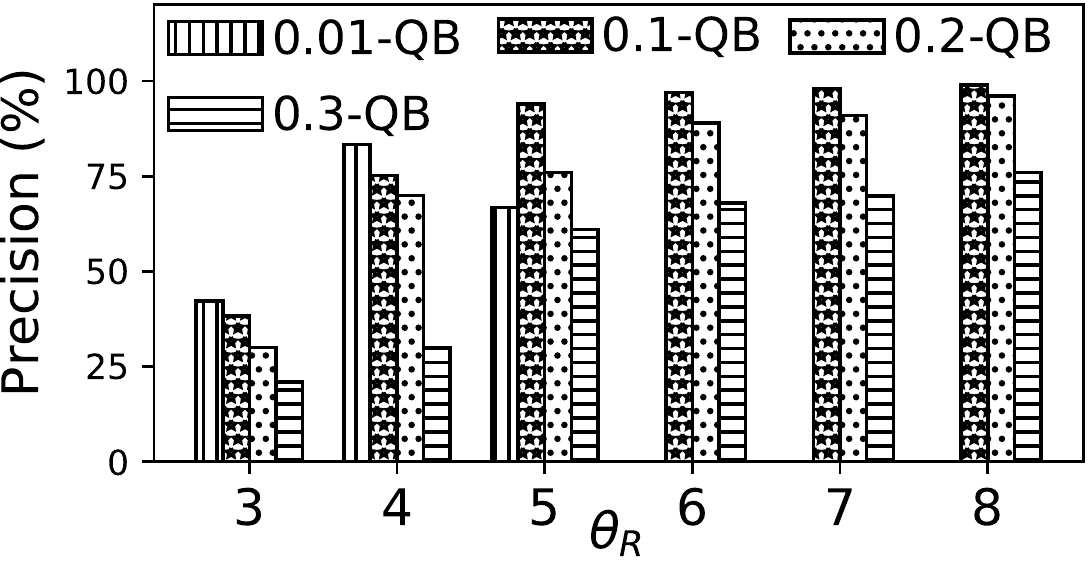}
		\end{minipage}
	\end{tabular}
	\\
	(a) Precision
	\vspace{0.10in}
	\\
	\begin{tabular}{c c}
		\begin{minipage}{3.70cm}
			\includegraphics[width=4.0cm]{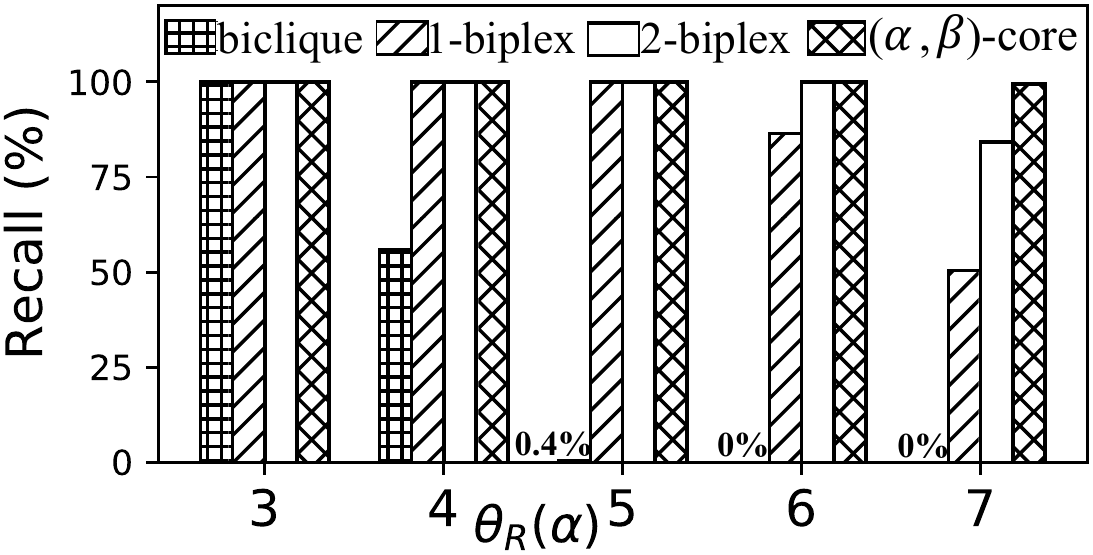}
		\end{minipage}
		&
		\begin{minipage}{3.70cm}
			\includegraphics[width=4.0cm]{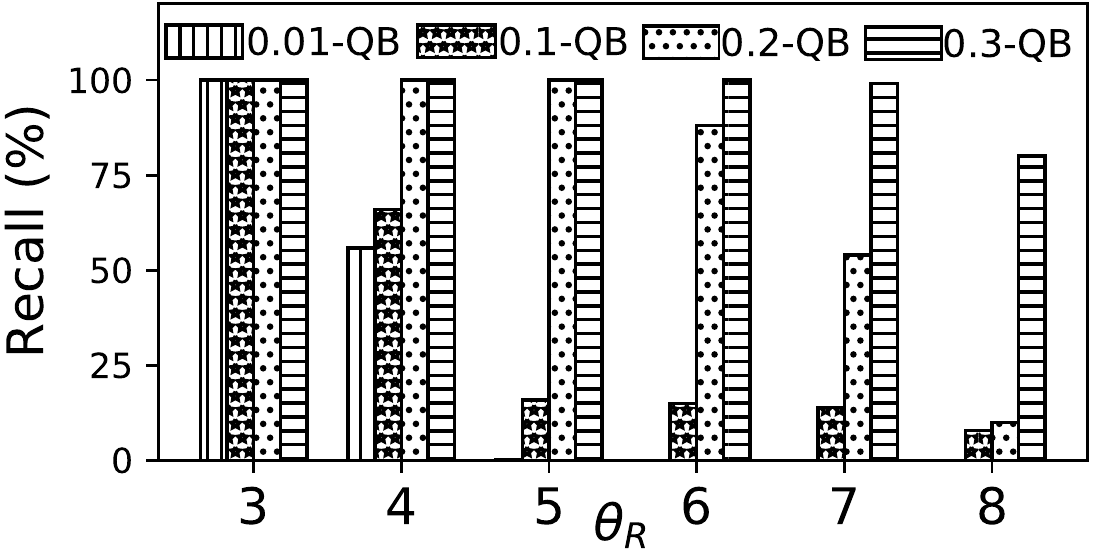}
		\end{minipage}		
	\end{tabular}
	\\
	(b) Recall
	\vspace{0.10in}
	\\
	\begin{minipage}{8cm}
			\includegraphics[width=8cm]{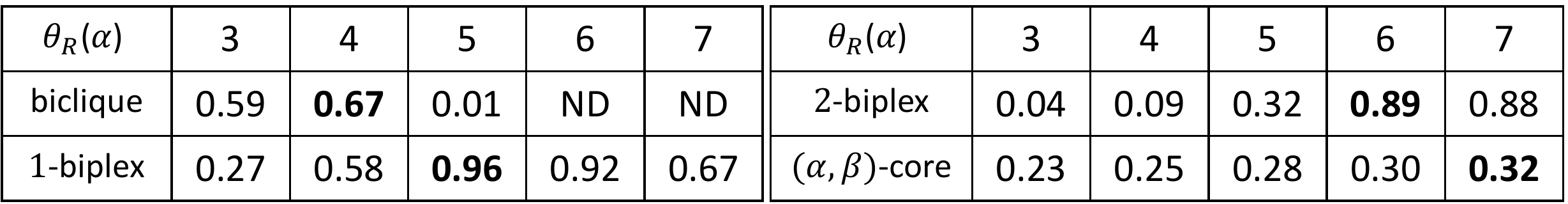}
	\end{minipage}
	\begin{minipage}{8cm}
			\includegraphics[width=8cm]{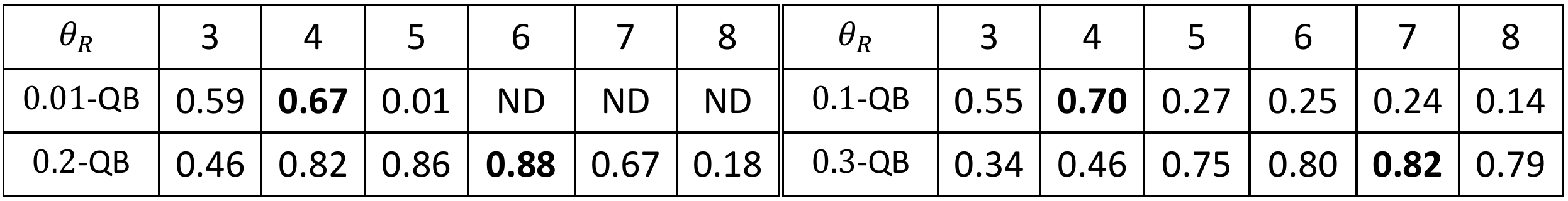}
	\end{minipage}
	\\
	(c) F1 score 
	
	\caption{\revision Case Study: Fraud Detection}
	\label{fig:case_study1}
\end{figure}

\balance

\section{Related Work}
\label{related}

{\ChengCommentB
\noindent{\textbf{Dense subgraphs of bipartite graphs.}} 
Extensive studies have been conducted on finding and/or enumerating dense subgraphs of bipartite graphs, including bicliques \cite{zhang2014finding,DBLP:conf/ijcai/AbidiZCL20,mukherjee2016enumerating,DBLP:journals/pvldb/LyuQLZQZ20,DBLP:journals/corr/abs-2007-08836,kloster2019mining}, $(\alpha,\beta)$-cores \cite{liu2019efficient}, $k$-biplexes~\cite{DBLP:journals/sadm/SimLGL09,yu2021efficient}, $k$-bitrusses \cite{DBLP:conf/dasfaa/Zou16,wang2020efficient} and quasi-bicliques \cite{DBLP:conf/cocoon/LiuLW08,wang2013near,ignatov2018mixed,DBLP:conf/icfca/Ignatov19}. 
%
Existing studies of bicliques focus on maximal biclique enumeration~\cite{zhang2014finding,DBLP:conf/ijcai/AbidiZCL20,mukherjee2016enumerating,kloster2019mining} and maximum biclique discovery~\cite{DBLP:journals/pvldb/LyuQLZQZ20,DBLP:journals/corr/abs-2007-08836}.
%
$(\alpha,\beta)$-cores have been used in applications such as recommendation systems~\cite{ding2017efficient} and community search~\cite{DBLP:conf/ssdbm/HaoZW020,DBLP:journals/corr/abs-2011-08399}.
Existing studies of $k$-biplexes focus on large maximal $k$-biplex enumeration~\cite{DBLP:journals/sadm/SimLGL09,yu2021efficient}.
A $k$-bitruss~\cite{DBLP:conf/dasfaa/Zou16,wang2020efficient} is a bipartite graph where each edge is contained in at least $k$ butterflies,
where a butterfly corresponds to a complete 2$\times$2 biclique~\cite{DBLP:journals/pvldb/WangLQZZ19}.
%
There are two types of definition for a quasi-biclique $H=(L,R)$, namely (1) $\delta$-quasi-biclique~\cite{DBLP:conf/cocoon/LiuLW08}, where each vertex in $L$ (resp. $R$) misses at most $\delta \cdot |R|$ (resp. $\delta \cdot |L|$) edges with $\delta\in [0,1)$) and (2) $\gamma$-quasi-biclique~\cite{ignatov2018mixed}, where at most $\gamma \cdot |L|\cdot|R|$ edges can be missed with $\gamma\!\in\! [0,1)$.
%
%
Existing studies of quasi-bicliques focus on finding the maximum quasi-biclique \cite{DBLP:conf/cocoon/LiuLW08,wang2013near,ignatov2018mixed,DBLP:conf/icfca/Ignatov19}. {\roundA There are some other studies which find subgraphs with a certain density and degree~\cite{DBLP:conf/kdd/MitzenmacherPPT15,DBLP:conf/bigdata/LiuSC13}.}
In this paper, we focus on $k$-biplex for reasons as discussed in Section~\ref{sec:introduction} (i.e., $k$-biplex imposes strict enough requirements on connections within a subgraph, tolerates some disconnections, and satisfies the hereditary property).}


\smallskip
\noindent{\textbf{Maximal subgraphs enumeration.}}
In general, there are two types of methodology, namely Bron-Kerbosch scheme~\cite{bron1973algorithm} and reverse search~\cite{DBLP:journals/dam/AvisF96,DBLP:journals/jcss/CohenKS08},
for enumerating maximal subgraphs of desired properties. 
%
%
The original Bron-Kerbosch (BK) algorithm is proposed in \cite{bron1973algorithm} for enumerating maximal cliques. 
BK uses branch-and-bound and backtracking to filter out the branches that cannot yield desired subgraphs.
Quite a few variants of BK have been proposed for solving various enumeration problems, including (bi)cliques~\cite{DBLP:journals/tcs/TomitaTT06,DBLP:journals/jea/EppsteinLS13,zhang2014finding,DBLP:conf/ijcai/AbidiZCL20},  $k$-plexes~\cite{DBLP:conf/aaai/ZhouXGXJ20,DBLP:conf/kdd/ConteMSGMV18}, signed cliques~\cite{chen2020efficient,DBLP:journals/tkde/LiDQWXYQ21}, motif cliques~\cite{DBLP:conf/icde/HuCCSFL19}, temporal cliques~\cite{DBLP:journals/snam/HimmelMNS17}, quasi-bicliques~\cite{DBLP:journals/sadm/SimLGL09} and etc. 
{\roundA However, the time complexity of BK and its variants is polynomial w.r.t. the worst-case number of desired subgraphs.}
%
%
In addition, these algorithms output desired subgraphs with exponential delay.
%
Reverse search was first proposed as a general framework for enumerating subgraphs~\cite{DBLP:journals/dam/AvisF96}. 
It defines a ``successor'' function to traverse from one solution to the others, which corresponds to a DFS on an implicit solution graph where the nodes represent desired subgraphs and the links capture the traversals from desired subgraphs to other desired subgraphs.  
{\roundB 
Recently, this framework has been adopted to solve various enumeration problems, e.g., independent sets \cite{DBLP:journals/siamcomp/TsukiyamaIAS77,DBLP:conf/spire/ConteGMUV17}, cliques \cite{DBLP:conf/swat/MakinoU04,DBLP:journals/algorithmica/ChangYQ13,DBLP:journals/algorithmica/ConteGMV20}, {\ChengComment $k$-plexes~\cite{DBLP:conf/sigmod/BerlowitzCK15}, and general structures that satisfy the hereditary property \cite{DBLP:conf/stoc/ConteU19, DBLP:journals/jcss/CohenKS08,cao2020enumerating}}}.
The algorithms following this framework achieve an output-sensitive time complexity that is proportional to the number of desired subgraphs (within the input graph). Moreover, there is a polynomial time guarantee on the time cost per solution. 
In this paper, we adopt the reverse search framework and develop various techniques for improving and instantiating the framework for enumerating maximal $k$-biplexes.
{\ChengCommentB We note that our techniques of improving the framework for $k$-biplexes are new and different from those that were developed for other specific structures such as independent sets and cliques and the latter cannot be applied to $k$-biplexes.}


\section{Conclusion}
\label{sec:conclusion}
In this paper, we study the \textit{maximal} $k$\textit{-biplex enumeration} problem. 
We develop an efficient reverse search-based algorithm with the polynomial delay guarantee. 
Extensive experiments on real and synthetic datasets demonstrate that our algorithm outperforms the existing methods in terms of total running time and delay significantly. 
{\roundA
In the future, we will investigate efficient parallel and distributed implementations. Another interesting research direction is to adapt the proposed reverse search-based algorithm to enumerate some other cohesive subgraphs over bipartite graphs.}

\section{Acknowledgement}
The research of Cheng Long and Kaiqiang Yu is supported by the Ministry of Education, Singapore, under its Academic Research Fund (Tier 2 Award MOE-T2EP20220-0011 and Tier 1 Award RG20/19 (S)) and by the Nanyang Technological University Start-Up Grant from the College of Engineering under Grant M4082302. Any opinions, findings and conclusions or recommendations expressed in this material are those of the author(s) and do not reflect the views of the Ministry of Education, Singapore. The research of Shengxin Liu is supported by the National Natural Science Foundation of China under Grant 62102117. The research of Da Yan is supported by NSF OAC-1755464 (CRII).

\balance
\clearpage
\balance
\bibliographystyle{ACM-Reference-Format}
\bibliography{SIGMOD_biplex}


\begin{thebibliography}{52}


\ifx \showCODEN    \undefined \def \showCODEN     #1{\unskip}     \fi
\ifx \showDOI      \undefined \def \showDOI       #1{#1}\fi
\ifx \showISBNx    \undefined \def \showISBNx     #1{\unskip}     \fi
\ifx \showISBNxiii \undefined \def \showISBNxiii  #1{\unskip}     \fi
\ifx \showISSN     \undefined \def \showISSN      #1{\unskip}     \fi
\ifx \showLCCN     \undefined \def \showLCCN      #1{\unskip}     \fi
\ifx \shownote     \undefined \def \shownote      #1{#1}          \fi
\ifx \showarticletitle \undefined \def \showarticletitle #1{#1}   \fi
\ifx \showURL      \undefined \def \showURL       {\relax}        \fi
\providecommand\bibfield[2]{#2}
\providecommand\bibinfo[2]{#2}
\providecommand\natexlab[1]{#1}
\providecommand\showeprint[2][]{arXiv:#2}

\bibitem[\protect\citeauthoryear{Abidi, Zhou, Chen, and Liu}{Abidi
  et~al\mbox{.}}{2020}]%
        {DBLP:conf/ijcai/AbidiZCL20}
\bibfield{author}{\bibinfo{person}{Aman Abidi}, \bibinfo{person}{Rui Zhou},
  \bibinfo{person}{Lu Chen}, {and} \bibinfo{person}{Chengfei Liu}.}
  \bibinfo{year}{2020}\natexlab{}.
\newblock \showarticletitle{Pivot-based Maximal Biclique Enumeration}. In
  \bibinfo{booktitle}{\emph{Proc. Int. Joint Conf. Artif. Intell., {IJCAI}
  2020}}. \bibinfo{pages}{3558--3564}.
\newblock


\bibitem[\protect\citeauthoryear{Amer-Yahia, Roy, Chawlat, Das, and
  Yu}{Amer-Yahia et~al\mbox{.}}{2009}]%
        {amer2009group}
\bibfield{author}{\bibinfo{person}{Sihem Amer-Yahia},
  \bibinfo{person}{Senjuti~Basu Roy}, \bibinfo{person}{Ashish Chawlat},
  \bibinfo{person}{Gautam Das}, {and} \bibinfo{person}{Cong Yu}.}
  \bibinfo{year}{2009}\natexlab{}.
\newblock \showarticletitle{Group recommendation: Semantics and efficiency}.
\newblock \bibinfo{journal}{\emph{Proceedings of the VLDB Endowment}}
  \bibinfo{volume}{2}, \bibinfo{number}{1} (\bibinfo{year}{2009}),
  \bibinfo{pages}{754--765}.
\newblock


\bibitem[\protect\citeauthoryear{Avis and Fukuda}{Avis and Fukuda}{1996}]%
        {DBLP:journals/dam/AvisF96}
\bibfield{author}{\bibinfo{person}{David Avis} {and} \bibinfo{person}{Komei
  Fukuda}.} \bibinfo{year}{1996}\natexlab{}.
\newblock \showarticletitle{Reverse Search for Enumeration}.
\newblock \bibinfo{journal}{\emph{Discret. Appl. Math.}} \bibinfo{volume}{65},
  \bibinfo{number}{1-3} (\bibinfo{year}{1996}), \bibinfo{pages}{21--46}.
\newblock


\bibitem[\protect\citeauthoryear{Berlowitz, Cohen, and Kimelfeld}{Berlowitz
  et~al\mbox{.}}{2015}]%
        {DBLP:conf/sigmod/BerlowitzCK15}
\bibfield{author}{\bibinfo{person}{Devora Berlowitz}, \bibinfo{person}{Sara
  Cohen}, {and} \bibinfo{person}{Benny Kimelfeld}.}
  \bibinfo{year}{2015}\natexlab{}.
\newblock \showarticletitle{Efficient Enumeration of Maximal k-Plexes}. In
  \bibinfo{booktitle}{\emph{Proc. ACM SIGMOD Int. Conf. Manage. Data}}.
  \bibinfo{pages}{431--444}.
\newblock


\bibitem[\protect\citeauthoryear{Bron and Kerbosch}{Bron and Kerbosch}{1973}]%
        {bron1973algorithm}
\bibfield{author}{\bibinfo{person}{Coen Bron} {and} \bibinfo{person}{Joep
  Kerbosch}.} \bibinfo{year}{1973}\natexlab{}.
\newblock \showarticletitle{Algorithm 457: finding all cliques of an undirected
  graph}.
\newblock \bibinfo{journal}{\emph{Commun. ACM}} \bibinfo{volume}{16},
  \bibinfo{number}{9} (\bibinfo{year}{1973}), \bibinfo{pages}{575--577}.
\newblock


\bibitem[\protect\citeauthoryear{Cao}{Cao}{2020}]%
        {cao2020enumerating}
\bibfield{author}{\bibinfo{person}{Yixin Cao}.}
  \bibinfo{year}{2020}\natexlab{}.
\newblock \showarticletitle{Enumerating maximal induced subgraphs}.
\newblock \bibinfo{journal}{\emph{arXiv preprint arXiv:2004.09885}}
  (\bibinfo{year}{2020}).
\newblock


\bibitem[\protect\citeauthoryear{Chang, Yu, and Qin}{Chang
  et~al\mbox{.}}{2013}]%
        {DBLP:journals/algorithmica/ChangYQ13}
\bibfield{author}{\bibinfo{person}{Lijun Chang}, \bibinfo{person}{Jeffrey~Xu
  Yu}, {and} \bibinfo{person}{Lu Qin}.} \bibinfo{year}{2013}\natexlab{}.
\newblock \showarticletitle{Fast Maximal Cliques Enumeration in Sparse Graphs}.
\newblock \bibinfo{journal}{\emph{Algorithmica}} \bibinfo{volume}{66},
  \bibinfo{number}{1} (\bibinfo{year}{2013}), \bibinfo{pages}{173--186}.
\newblock


\bibitem[\protect\citeauthoryear{Chen, Liu, Zhou, Xu, and Li}{Chen
  et~al\mbox{.}}{2021}]%
        {DBLP:journals/corr/abs-2007-08836}
\bibfield{author}{\bibinfo{person}{Lu Chen}, \bibinfo{person}{Chengfei Liu},
  \bibinfo{person}{Rui Zhou}, \bibinfo{person}{Jiajie Xu}, {and}
  \bibinfo{person}{Jianxin Li}.} \bibinfo{year}{2021}\natexlab{}.
\newblock \showarticletitle{Efficient Exact Algorithms for Maximum Balanced
  Biclique Search in Bipartite Graphs}. In \bibinfo{booktitle}{\emph{Proc. ACM
  SIGMOD Int. Conf. Manage. Data}}. \bibinfo{pages}{248--260}.
\newblock


\bibitem[\protect\citeauthoryear{Chen, Yuan, Lin, Qin, and Yang}{Chen
  et~al\mbox{.}}{2020}]%
        {chen2020efficient}
\bibfield{author}{\bibinfo{person}{Zi Chen}, \bibinfo{person}{Long Yuan},
  \bibinfo{person}{Xuemin Lin}, \bibinfo{person}{Lu Qin}, {and}
  \bibinfo{person}{Jianye Yang}.} \bibinfo{year}{2020}\natexlab{}.
\newblock \showarticletitle{Efficient Maximal Balanced Clique Enumeration in
  Signed Networks}. In \bibinfo{booktitle}{\emph{Proc. World Wide Web Conf.,
  {WWW}}}. \bibinfo{pages}{339--349}.
\newblock


\bibitem[\protect\citeauthoryear{Cohen, Kimelfeld, and Sagiv}{Cohen
  et~al\mbox{.}}{2008}]%
        {DBLP:journals/jcss/CohenKS08}
\bibfield{author}{\bibinfo{person}{Sara Cohen}, \bibinfo{person}{Benny
  Kimelfeld}, {and} \bibinfo{person}{Yehoshua Sagiv}.}
  \bibinfo{year}{2008}\natexlab{}.
\newblock \showarticletitle{Generating all maximal induced subgraphs for
  hereditary and connected-hereditary graph properties}.
\newblock \bibinfo{journal}{\emph{J. Comput. Syst. Sci.}} \bibinfo{volume}{74},
  \bibinfo{number}{7} (\bibinfo{year}{2008}), \bibinfo{pages}{1147--1159}.
\newblock


\bibitem[\protect\citeauthoryear{Conte, Grossi, Marino, Uno, and Versari}{Conte
  et~al\mbox{.}}{2017}]%
        {DBLP:conf/spire/ConteGMUV17}
\bibfield{author}{\bibinfo{person}{Alessio Conte}, \bibinfo{person}{Roberto
  Grossi}, \bibinfo{person}{Andrea Marino}, \bibinfo{person}{Takeaki Uno},
  {and} \bibinfo{person}{Luca Versari}.} \bibinfo{year}{2017}\natexlab{}.
\newblock \showarticletitle{Listing Maximal Independent Sets with Minimal Space
  and Bounded Delay}. In \bibinfo{booktitle}{\emph{String Processing and
  Information Retrieval - 24th International Symposium, {SPIRE} 2017}}
  \emph{(\bibinfo{series}{Lecture Notes in Computer Science},
  Vol.~\bibinfo{volume}{10508})}. \bibinfo{publisher}{Springer},
  \bibinfo{pages}{144--160}.
\newblock


\bibitem[\protect\citeauthoryear{Conte, Grossi, Marino, and Versari}{Conte
  et~al\mbox{.}}{2020}]%
        {DBLP:journals/algorithmica/ConteGMV20}
\bibfield{author}{\bibinfo{person}{Alessio Conte}, \bibinfo{person}{Roberto
  Grossi}, \bibinfo{person}{Andrea Marino}, {and} \bibinfo{person}{Luca
  Versari}.} \bibinfo{year}{2020}\natexlab{}.
\newblock \showarticletitle{Sublinear-Space and Bounded-Delay Algorithms for
  Maximal Clique Enumeration in Graphs}.
\newblock \bibinfo{journal}{\emph{Algorithmica}} \bibinfo{volume}{82},
  \bibinfo{number}{6} (\bibinfo{year}{2020}), \bibinfo{pages}{1547--1573}.
\newblock


\bibitem[\protect\citeauthoryear{Conte, Matteis, Sensi, Grossi, Marino, and
  Versari}{Conte et~al\mbox{.}}{2018}]%
        {DBLP:conf/kdd/ConteMSGMV18}
\bibfield{author}{\bibinfo{person}{Alessio Conte}, \bibinfo{person}{Tiziano~De
  Matteis}, \bibinfo{person}{Daniele~De Sensi}, \bibinfo{person}{Roberto
  Grossi}, \bibinfo{person}{Andrea Marino}, {and} \bibinfo{person}{Luca
  Versari}.} \bibinfo{year}{2018}\natexlab{}.
\newblock \showarticletitle{{D2K:} Scalable Community Detection in Massive
  Networks via Small-Diameter k-Plexes}. In \bibinfo{booktitle}{\emph{Proc.
  24th ACM SIGKDD Int. Conf. Knowl. Discovery Data Mining}}.
  \bibinfo{pages}{1272--1281}.
\newblock


\bibitem[\protect\citeauthoryear{Conte and Uno}{Conte and Uno}{2019}]%
        {DBLP:conf/stoc/ConteU19}
\bibfield{author}{\bibinfo{person}{Alessio Conte} {and}
  \bibinfo{person}{Takeaki Uno}.} \bibinfo{year}{2019}\natexlab{}.
\newblock \showarticletitle{New polynomial delay bounds for maximal subgraph
  enumeration by proximity search}. In \bibinfo{booktitle}{\emph{Proceedings of
  the 51st Annual {ACM} {SIGACT} Symposium on Theory of Computing, {STOC}
  2019}}. \bibinfo{pages}{1179--1190}.
\newblock


\bibitem[\protect\citeauthoryear{Ding, Li, Huang, and Mamoulis}{Ding
  et~al\mbox{.}}{2017}]%
        {ding2017efficient}
\bibfield{author}{\bibinfo{person}{Danhao Ding}, \bibinfo{person}{Hui Li},
  \bibinfo{person}{Zhipeng Huang}, {and} \bibinfo{person}{Nikos Mamoulis}.}
  \bibinfo{year}{2017}\natexlab{}.
\newblock \showarticletitle{Efficient fault-tolerant group recommendation using
  alpha-beta-core}. In \bibinfo{booktitle}{\emph{Proceedings of the 2017 ACM on
  Conference on Information and Knowledge Management}}.
  \bibinfo{pages}{2047--2050}.
\newblock


\bibitem[\protect\citeauthoryear{Eppstein, L{\"{o}}ffler, and Strash}{Eppstein
  et~al\mbox{.}}{2013}]%
        {DBLP:journals/jea/EppsteinLS13}
\bibfield{author}{\bibinfo{person}{David Eppstein}, \bibinfo{person}{Maarten
  L{\"{o}}ffler}, {and} \bibinfo{person}{Darren Strash}.}
  \bibinfo{year}{2013}\natexlab{}.
\newblock \showarticletitle{Listing All Maximal Cliques in Large Sparse
  Real-World Graphs}.
\newblock \bibinfo{journal}{\emph{{ACM} J. Exp. Algorithmics}}
  \bibinfo{volume}{18} (\bibinfo{year}{2013}).
\newblock


\bibitem[\protect\citeauthoryear{Gangireddy, Long, and Chakraborty}{Gangireddy
  et~al\mbox{.}}{2020}]%
        {gangireddy2020unsupervised}
\bibfield{author}{\bibinfo{person}{Siva Charan~Reddy Gangireddy},
  \bibinfo{person}{Cheng Long}, {and} \bibinfo{person}{Tanmoy Chakraborty}.}
  \bibinfo{year}{2020}\natexlab{}.
\newblock \showarticletitle{Unsupervised Fake News Detection: A Graph-based
  Approach}. In \bibinfo{booktitle}{\emph{Proceedings of the 31st ACM
  Conference on Hypertext and Social Media}}. \bibinfo{pages}{75--83}.
\newblock


\bibitem[\protect\citeauthoryear{G{\"{u}}nnemann, M{\"{u}}ller, Raubach, and
  Seidl}{G{\"{u}}nnemann et~al\mbox{.}}{2011}]%
        {DBLP:conf/icdm/GunnemannMRS11}
\bibfield{author}{\bibinfo{person}{Stephan G{\"{u}}nnemann},
  \bibinfo{person}{Emmanuel M{\"{u}}ller}, \bibinfo{person}{Sebastian Raubach},
  {and} \bibinfo{person}{Thomas Seidl}.} \bibinfo{year}{2011}\natexlab{}.
\newblock \showarticletitle{Flexible Fault Tolerant Subspace Clustering for
  Data with Missing Values}. In \bibinfo{booktitle}{\emph{11th {IEEE}
  International Conference on Data Mining, {ICDM} 2011}}.
  \bibinfo{pages}{231--240}.
\newblock


\bibitem[\protect\citeauthoryear{Hao, Zhang, Wang, and Chen}{Hao
  et~al\mbox{.}}{2020}]%
        {DBLP:conf/ssdbm/HaoZW020}
\bibfield{author}{\bibinfo{person}{Yang Hao}, \bibinfo{person}{Mengqi Zhang},
  \bibinfo{person}{Xiaoyang Wang}, {and} \bibinfo{person}{Chen Chen}.}
  \bibinfo{year}{2020}\natexlab{}.
\newblock \showarticletitle{Cohesive Subgraph Detection in Large Bipartite
  Networks}. In \bibinfo{booktitle}{\emph{{SSDBM} 2020: 32nd International
  Conference on Scientific and Statistical Database Management}}.
  \bibinfo{publisher}{{ACM}}, \bibinfo{pages}{22:1--22:4}.
\newblock


\bibitem[\protect\citeauthoryear{Himmel, Molter, Niedermeier, and Sorge}{Himmel
  et~al\mbox{.}}{2017}]%
        {DBLP:journals/snam/HimmelMNS17}
\bibfield{author}{\bibinfo{person}{Anne{-}Sophie Himmel},
  \bibinfo{person}{Hendrik Molter}, \bibinfo{person}{Rolf Niedermeier}, {and}
  \bibinfo{person}{Manuel Sorge}.} \bibinfo{year}{2017}\natexlab{}.
\newblock \showarticletitle{Adapting the Bron-Kerbosch algorithm for
  enumerating maximal cliques in temporal graphs}.
\newblock \bibinfo{journal}{\emph{Soc. Netw. Anal. Min.}} \bibinfo{volume}{7},
  \bibinfo{number}{1} (\bibinfo{year}{2017}), \bibinfo{pages}{35:1--35:16}.
\newblock


\bibitem[\protect\citeauthoryear{Hooi, Song, Beutel, Shah, Shin, and
  Faloutsos}{Hooi et~al\mbox{.}}{2016}]%
        {DBLP:conf/kdd/HooiSBSSF16}
\bibfield{author}{\bibinfo{person}{Bryan Hooi}, \bibinfo{person}{Hyun~Ah Song},
  \bibinfo{person}{Alex Beutel}, \bibinfo{person}{Neil Shah},
  \bibinfo{person}{Kijung Shin}, {and} \bibinfo{person}{Christos Faloutsos}.}
  \bibinfo{year}{2016}\natexlab{}.
\newblock \showarticletitle{{FRAUDAR:} Bounding Graph Fraud in the Face of
  Camouflage}. In \bibinfo{booktitle}{\emph{Proc. ACM SIGKDD Int. Conf. Knowl.
  Discov. Data Mining}}. \bibinfo{publisher}{{ACM}}, \bibinfo{pages}{895--904}.
\newblock


\bibitem[\protect\citeauthoryear{Hu, Cheng, Chang, Sankar, Fang, and Lam}{Hu
  et~al\mbox{.}}{2019}]%
        {DBLP:conf/icde/HuCCSFL19}
\bibfield{author}{\bibinfo{person}{Jiafeng Hu}, \bibinfo{person}{Reynold
  Cheng}, \bibinfo{person}{Kevin~Chen{-}Chuan Chang}, \bibinfo{person}{Aravind
  Sankar}, \bibinfo{person}{Yixiang Fang}, {and} \bibinfo{person}{Brian Y.~H.
  Lam}.} \bibinfo{year}{2019}\natexlab{}.
\newblock \showarticletitle{Discovering Maximal Motif Cliques in Large
  Heterogeneous Information Networks}. In \bibinfo{booktitle}{\emph{Proc. 36th
  IEEE Int. Conf. Data Eng., {ICDE} 2019}}. \bibinfo{publisher}{{IEEE}},
  \bibinfo{pages}{746--757}.
\newblock


\bibitem[\protect\citeauthoryear{Ignatov}{Ignatov}{2019}]%
        {DBLP:conf/icfca/Ignatov19}
\bibfield{author}{\bibinfo{person}{Dmitry~I. Ignatov}.}
  \bibinfo{year}{2019}\natexlab{}.
\newblock \showarticletitle{Preliminary Results on Mixed Integer Programming
  for Searching Maximum Quasi-Bicliques and Large Dense Biclusters}. In
  \bibinfo{booktitle}{\emph{Supplementary Proceedings of {ICFCA} 2019
  Conference and Workshops}} \emph{(\bibinfo{series}{{CEUR} Workshop
  Proceedings}, Vol.~\bibinfo{volume}{2378})}.
  \bibinfo{publisher}{CEUR-WS.org}, \bibinfo{pages}{28--32}.
\newblock


\bibitem[\protect\citeauthoryear{Ignatov, Ivanova, and Zamaletdinova}{Ignatov
  et~al\mbox{.}}{2018}]%
        {ignatov2018mixed}
\bibfield{author}{\bibinfo{person}{Dmitry~I Ignatov}, \bibinfo{person}{Polina
  Ivanova}, {and} \bibinfo{person}{Albina Zamaletdinova}.}
  \bibinfo{year}{2018}\natexlab{}.
\newblock \showarticletitle{Mixed Integer Programming for Searching Maximum
  Quasi-Bicliques}. In \bibinfo{booktitle}{\emph{International Conference on
  Network Analysis}}. Springer, \bibinfo{pages}{19--35}.
\newblock


\bibitem[\protect\citeauthoryear{Kloster, Sullivan, and van~der Poel}{Kloster
  et~al\mbox{.}}{2019}]%
        {kloster2019mining}
\bibfield{author}{\bibinfo{person}{Kyle Kloster}, \bibinfo{person}{Blair~D
  Sullivan}, {and} \bibinfo{person}{Andrew van~der Poel}.}
  \bibinfo{year}{2019}\natexlab{}.
\newblock \showarticletitle{Mining maximal induced bicliques using odd cycle
  transversals}. In \bibinfo{booktitle}{\emph{Proceedings of the 2019 SIAM
  International Conference on Data Mining}}. SIAM, \bibinfo{pages}{324--332}.
\newblock


\bibitem[\protect\citeauthoryear{Ley}{Ley}{2002}]%
        {ley2002dblp}
\bibfield{author}{\bibinfo{person}{Michael Ley}.}
  \bibinfo{year}{2002}\natexlab{}.
\newblock \showarticletitle{The DBLP computer science bibliography: Evolution,
  research issues, perspectives}. In \bibinfo{booktitle}{\emph{International
  symposium on string processing and information retrieval}}. Springer,
  \bibinfo{pages}{1--10}.
\newblock


\bibitem[\protect\citeauthoryear{Li, Dai, Qin, Wang, Xiao, Yu, and Qiao}{Li
  et~al\mbox{.}}{2021}]%
        {DBLP:journals/tkde/LiDQWXYQ21}
\bibfield{author}{\bibinfo{person}{Rong{-}Hua Li}, \bibinfo{person}{Qiangqiang
  Dai}, \bibinfo{person}{Lu Qin}, \bibinfo{person}{Guoren Wang},
  \bibinfo{person}{Xiaokui Xiao}, \bibinfo{person}{Jeffrey~Xu Yu}, {and}
  \bibinfo{person}{Shaojie Qiao}.} \bibinfo{year}{2021}\natexlab{}.
\newblock \showarticletitle{Signed Clique Search in Signed Networks: Concepts
  and Algorithms}.
\newblock \bibinfo{journal}{\emph{{IEEE} Trans. Knowl. Data Eng.}}
  \bibinfo{volume}{33}, \bibinfo{number}{2} (\bibinfo{year}{2021}),
  \bibinfo{pages}{710--727}.
\newblock


\bibitem[\protect\citeauthoryear{Liu, Yuan, Lin, Qin, Zhang, and Zhou}{Liu
  et~al\mbox{.}}{2019}]%
        {liu2019efficient}
\bibfield{author}{\bibinfo{person}{Boge Liu}, \bibinfo{person}{Long Yuan},
  \bibinfo{person}{Xuemin Lin}, \bibinfo{person}{Lu Qin},
  \bibinfo{person}{Wenjie Zhang}, {and} \bibinfo{person}{Jingren Zhou}.}
  \bibinfo{year}{2019}\natexlab{}.
\newblock \showarticletitle{Efficient ($\alpha$, $\beta$)-core computation: An
  index-based approach}. In \bibinfo{booktitle}{\emph{The World Wide Web
  Conference}}. \bibinfo{pages}{1130--1141}.
\newblock


\bibitem[\protect\citeauthoryear{Liu, Su, and Chu}{Liu et~al\mbox{.}}{2013}]%
        {DBLP:conf/bigdata/LiuSC13}
\bibfield{author}{\bibinfo{person}{Hsiao{-}Fei Liu},
  \bibinfo{person}{Chung{-}Tsai Su}, {and} \bibinfo{person}{An{-}Chiang Chu}.}
  \bibinfo{year}{2013}\natexlab{}.
\newblock \showarticletitle{Fast Quasi-biclique Mining with Giraph}. In
  \bibinfo{booktitle}{\emph{{IEEE} International Congress on Big Data, BigData
  Congress 2013}}. \bibinfo{publisher}{{IEEE} Computer Society},
  \bibinfo{pages}{347--354}.
\newblock


\bibitem[\protect\citeauthoryear{Liu, Li, and Wang}{Liu et~al\mbox{.}}{2008}]%
        {DBLP:conf/cocoon/LiuLW08}
\bibfield{author}{\bibinfo{person}{Xiaowen Liu}, \bibinfo{person}{Jinyan Li},
  {and} \bibinfo{person}{Lusheng Wang}.} \bibinfo{year}{2008}\natexlab{}.
\newblock \showarticletitle{Quasi-bicliques: Complexity and Binding Pairs}. In
  \bibinfo{booktitle}{\emph{Computing and Combinatorics, 14th Annual
  International Conference, {COCOON} 2008}} \emph{(\bibinfo{series}{Lecture
  Notes in Computer Science}, Vol.~\bibinfo{volume}{5092})}.
  \bibinfo{publisher}{Springer}, \bibinfo{pages}{255--264}.
\newblock


\bibitem[\protect\citeauthoryear{Lyu, Qin, Lin, Zhang, Qian, and Zhou}{Lyu
  et~al\mbox{.}}{2020}]%
        {DBLP:journals/pvldb/LyuQLZQZ20}
\bibfield{author}{\bibinfo{person}{Bingqing Lyu}, \bibinfo{person}{Lu Qin},
  \bibinfo{person}{Xuemin Lin}, \bibinfo{person}{Ying Zhang},
  \bibinfo{person}{Zhengping Qian}, {and} \bibinfo{person}{Jingren Zhou}.}
  \bibinfo{year}{2020}\natexlab{}.
\newblock \showarticletitle{Maximum Biclique Search at Billion Scale}.
\newblock \bibinfo{journal}{\emph{Proc. {VLDB} Endow.}} \bibinfo{volume}{13},
  \bibinfo{number}{9} (\bibinfo{year}{2020}), \bibinfo{pages}{1359--1372}.
\newblock


\bibitem[\protect\citeauthoryear{Makino and Uno}{Makino and Uno}{2004}]%
        {DBLP:conf/swat/MakinoU04}
\bibfield{author}{\bibinfo{person}{Kazuhisa Makino} {and}
  \bibinfo{person}{Takeaki Uno}.} \bibinfo{year}{2004}\natexlab{}.
\newblock \showarticletitle{New Algorithms for Enumerating All Maximal
  Cliques}. In \bibinfo{booktitle}{\emph{Algorithm Theory - {SWAT} 2004, 9th
  Scandinavian Workshop on Algorithm Theory}} \emph{(\bibinfo{series}{Lecture
  Notes in Computer Science}, Vol.~\bibinfo{volume}{3111})}.
  \bibinfo{publisher}{Springer}, \bibinfo{pages}{260--272}.
\newblock


\bibitem[\protect\citeauthoryear{Mitzenmacher, Pachocki, Peng, Tsourakakis, and
  Xu}{Mitzenmacher et~al\mbox{.}}{2015}]%
        {DBLP:conf/kdd/MitzenmacherPPT15}
\bibfield{author}{\bibinfo{person}{Michael Mitzenmacher},
  \bibinfo{person}{Jakub Pachocki}, \bibinfo{person}{Richard Peng},
  \bibinfo{person}{Charalampos~E. Tsourakakis}, {and}
  \bibinfo{person}{Shen~Chen Xu}.} \bibinfo{year}{2015}\natexlab{}.
\newblock \showarticletitle{Scalable Large Near-Clique Detection in Large-Scale
  Networks via Sampling}. In \bibinfo{booktitle}{\emph{Proc. ACM SIGKDD Int.
  Conf. Knowl. Discov. Data Mining}}. \bibinfo{publisher}{{ACM}},
  \bibinfo{pages}{815--824}.
\newblock


\bibitem[\protect\citeauthoryear{Mukherjee and Tirthapura}{Mukherjee and
  Tirthapura}{2016}]%
        {mukherjee2016enumerating}
\bibfield{author}{\bibinfo{person}{Arko~Provo Mukherjee} {and}
  \bibinfo{person}{Srikanta Tirthapura}.} \bibinfo{year}{2016}\natexlab{}.
\newblock \showarticletitle{Enumerating maximal bicliques from a large graph
  using mapreduce}.
\newblock \bibinfo{journal}{\emph{IEEE Transactions on Services Computing}}
  \bibinfo{volume}{10}, \bibinfo{number}{5} (\bibinfo{year}{2016}),
  \bibinfo{pages}{771--784}.
\newblock


\bibitem[\protect\citeauthoryear{Ni}{Ni}{2018}]%
        {dataset}
\bibfield{author}{\bibinfo{person}{Jianmo Ni}.}
  \bibinfo{year}{2018}\natexlab{}.
\newblock \bibinfo{booktitle}{\emph{Amazon Review Data}}.
\newblock
\urldef\tempurl%
\url{https://nijianmo.github.io/amazon/index.html}
\showURL{%
\tempurl}


\bibitem[\protect\citeauthoryear{Poernomo and Gopalkrishnan}{Poernomo and
  Gopalkrishnan}{2009}]%
        {DBLP:conf/kdd/PoernomoG09a}
\bibfield{author}{\bibinfo{person}{Ardian~Kristanto Poernomo} {and}
  \bibinfo{person}{Vivekanand Gopalkrishnan}.} \bibinfo{year}{2009}\natexlab{}.
\newblock \showarticletitle{Towards efficient mining of proportional
  fault-tolerant frequent itemsets}. In \bibinfo{booktitle}{\emph{Proc. ACM
  SIGKDD Int. Conf. Knowl. Discov. Data Mining}}. \bibinfo{pages}{697--706}.
\newblock


\bibitem[\protect\citeauthoryear{Sim, Li, Gopalkrishnan, and Liu}{Sim
  et~al\mbox{.}}{2009}]%
        {DBLP:journals/sadm/SimLGL09}
\bibfield{author}{\bibinfo{person}{Kelvin Sim}, \bibinfo{person}{Jinyan Li},
  \bibinfo{person}{Vivekanand Gopalkrishnan}, {and} \bibinfo{person}{Guimei
  Liu}.} \bibinfo{year}{2009}\natexlab{}.
\newblock \showarticletitle{Mining maximal quasi-bicliques: Novel algorithm and
  applications in the stock market and protein networks}.
\newblock \bibinfo{journal}{\emph{Statistical Analysis and Data Mining}}
  \bibinfo{volume}{2}, \bibinfo{number}{4} (\bibinfo{year}{2009}),
  \bibinfo{pages}{255--273}.
\newblock


\bibitem[\protect\citeauthoryear{Takeaki}{Takeaki}{2003}]%
        {takeaki2003two}
\bibfield{author}{\bibinfo{person}{U Takeaki}.}
  \bibinfo{year}{2003}\natexlab{}.
\newblock \bibinfo{booktitle}{\emph{Two general methods to reduce delay and
  change of enumeration algorithms}}.
\newblock \bibinfo{type}{{T}echnical {R}eport}.
\newblock


\bibitem[\protect\citeauthoryear{Tomita, Tanaka, and Takahashi}{Tomita
  et~al\mbox{.}}{2006}]%
        {DBLP:journals/tcs/TomitaTT06}
\bibfield{author}{\bibinfo{person}{Etsuji Tomita}, \bibinfo{person}{Akira
  Tanaka}, {and} \bibinfo{person}{Haruhisa Takahashi}.}
  \bibinfo{year}{2006}\natexlab{}.
\newblock \showarticletitle{The worst-case time complexity for generating all
  maximal cliques and computational experiments}.
\newblock \bibinfo{journal}{\emph{Theor. Comput. Sci.}} \bibinfo{volume}{363},
  \bibinfo{number}{1} (\bibinfo{year}{2006}), \bibinfo{pages}{28--42}.
\newblock


\bibitem[\protect\citeauthoryear{Tsukiyama, Ide, Ariyoshi, and
  Shirakawa}{Tsukiyama et~al\mbox{.}}{1977}]%
        {DBLP:journals/siamcomp/TsukiyamaIAS77}
\bibfield{author}{\bibinfo{person}{Shuji Tsukiyama}, \bibinfo{person}{Mikio
  Ide}, \bibinfo{person}{Hiromu Ariyoshi}, {and} \bibinfo{person}{Isao
  Shirakawa}.} \bibinfo{year}{1977}\natexlab{}.
\newblock \showarticletitle{A New Algorithm for Generating All the Maximal
  Independent Sets}.
\newblock \bibinfo{journal}{\emph{{SIAM} J. Comput.}} \bibinfo{volume}{6},
  \bibinfo{number}{3} (\bibinfo{year}{1977}), \bibinfo{pages}{505--517}.
\newblock


\bibitem[\protect\citeauthoryear{Wang, De~Vries, and Reinders}{Wang
  et~al\mbox{.}}{2006}]%
        {wang2006unifying}
\bibfield{author}{\bibinfo{person}{Jun Wang}, \bibinfo{person}{Arjen~P
  De~Vries}, {and} \bibinfo{person}{Marcel~JT Reinders}.}
  \bibinfo{year}{2006}\natexlab{}.
\newblock \showarticletitle{Unifying user-based and item-based collaborative
  filtering approaches by similarity fusion}. In
  \bibinfo{booktitle}{\emph{Proceedings of the 29th annual international ACM
  SIGIR conference on Research and development in information retrieval}}.
  \bibinfo{pages}{501--508}.
\newblock


\bibitem[\protect\citeauthoryear{Wang, Lin, Qin, Zhang, and Zhang}{Wang
  et~al\mbox{.}}{2019}]%
        {DBLP:journals/pvldb/WangLQZZ19}
\bibfield{author}{\bibinfo{person}{Kai Wang}, \bibinfo{person}{Xuemin Lin},
  \bibinfo{person}{Lu Qin}, \bibinfo{person}{Wenjie Zhang}, {and}
  \bibinfo{person}{Ying Zhang}.} \bibinfo{year}{2019}\natexlab{}.
\newblock \showarticletitle{Vertex Priority Based Butterfly Counting for
  Large-scale Bipartite Networks}.
\newblock \bibinfo{journal}{\emph{Proc. {VLDB} Endow.}} \bibinfo{volume}{12},
  \bibinfo{number}{10} (\bibinfo{year}{2019}), \bibinfo{pages}{1139--1152}.
\newblock


\bibitem[\protect\citeauthoryear{Wang, Lin, Qin, Zhang, and Zhang}{Wang
  et~al\mbox{.}}{2020}]%
        {wang2020efficient}
\bibfield{author}{\bibinfo{person}{Kai Wang}, \bibinfo{person}{Xuemin Lin},
  \bibinfo{person}{Lu Qin}, \bibinfo{person}{Wenjie Zhang}, {and}
  \bibinfo{person}{Ying Zhang}.} \bibinfo{year}{2020}\natexlab{}.
\newblock \showarticletitle{Efficient Bitruss Decomposition for Large-scale
  Bipartite Graphs}. In \bibinfo{booktitle}{\emph{Proc. 36th IEEE Int. Conf.
  Data Eng., {ICDE} 2020}}. \bibinfo{publisher}{{IEEE}},
  \bibinfo{pages}{661--672}.
\newblock


\bibitem[\protect\citeauthoryear{Wang, Zhang, Lin, Zhang, Qin, and Zhang}{Wang
  et~al\mbox{.}}{2021}]%
        {DBLP:journals/corr/abs-2011-08399}
\bibfield{author}{\bibinfo{person}{Kai Wang}, \bibinfo{person}{Wenjie Zhang},
  \bibinfo{person}{Xuemin Lin}, \bibinfo{person}{Ying Zhang},
  \bibinfo{person}{Lu Qin}, {and} \bibinfo{person}{Yuting Zhang}.}
  \bibinfo{year}{2021}\natexlab{}.
\newblock \showarticletitle{Efficient and Effective Community Search on
  Large-scale Bipartite Graphs}. In \bibinfo{booktitle}{\emph{Proc. 36th IEEE
  Int. Conf. Data Eng., {ICDE} 2021}}. IEEE, \bibinfo{pages}{85--96}.
\newblock


\bibitem[\protect\citeauthoryear{Wang}{Wang}{2013}]%
        {wang2013near}
\bibfield{author}{\bibinfo{person}{Lusheng Wang}.}
  \bibinfo{year}{2013}\natexlab{}.
\newblock \showarticletitle{Near optimal solutions for maximum
  quasi-bicliques}.
\newblock \bibinfo{journal}{\emph{Journal of Combinatorial Optimization}}
  \bibinfo{volume}{25}, \bibinfo{number}{3} (\bibinfo{year}{2013}),
  \bibinfo{pages}{481--497}.
\newblock


\bibitem[\protect\citeauthoryear{Yu and Long}{Yu and Long}{2021}]%
        {yu2021graph}
\bibfield{author}{\bibinfo{person}{Kaiqiang Yu} {and} \bibinfo{person}{Cheng
  Long}.} \bibinfo{year}{2021}\natexlab{}.
\newblock \showarticletitle{Graph Mining Meets Fake News Detection}.
\newblock In \bibinfo{booktitle}{\emph{Data Science for Fake News}}.
  \bibinfo{publisher}{Springer}, \bibinfo{pages}{169--189}.
\newblock


\bibitem[\protect\citeauthoryear{Yu, Long, Deepak, and Chakraborty}{Yu
  et~al\mbox{.}}{2021a}]%
        {yu2021efficient}
\bibfield{author}{\bibinfo{person}{Kaiqiang Yu}, \bibinfo{person}{Cheng Long},
  \bibinfo{person}{P Deepak}, {and} \bibinfo{person}{Tanmoy Chakraborty}.}
  \bibinfo{year}{2021}\natexlab{a}.
\newblock \showarticletitle{On Efficient Large Maximal Biplex Discovery}.
\newblock \bibinfo{journal}{\emph{{IEEE} Trans. Knowl. Data Eng.}}
  (\bibinfo{year}{2021}).
\newblock


\bibitem[\protect\citeauthoryear{Yu, Long, Liu, and Yan}{Yu
  et~al\mbox{.}}{2021b}]%
        {TR}
\bibfield{author}{\bibinfo{person}{Kaiqiang Yu}, \bibinfo{person}{Cheng Long},
  \bibinfo{person}{Shengxin Liu}, {and} \bibinfo{person}{Da Yan}.}
  \bibinfo{year}{2021}\natexlab{b}.
\newblock \bibinfo{title}{Efficient Algorithms for Maximal $k$-biplex
  Enumeration (Technical report)}.
\newblock
  \bibinfo{howpublished}{\url{https://personal.ntu.edu.sg/c.long/paper/22-SIGMOD-kbiplex-report.pdf}}.
\newblock


\bibitem[\protect\citeauthoryear{Zhang and Ghorbani}{Zhang and
  Ghorbani}{2020}]%
        {zhang2020overview}
\bibfield{author}{\bibinfo{person}{Xichen Zhang} {and} \bibinfo{person}{Ali~A
  Ghorbani}.} \bibinfo{year}{2020}\natexlab{}.
\newblock \showarticletitle{An overview of online fake news: Characterization,
  detection, and discussion}.
\newblock \bibinfo{journal}{\emph{Information Processing \& Management}}
  \bibinfo{volume}{57}, \bibinfo{number}{2} (\bibinfo{year}{2020}),
  \bibinfo{pages}{102025}.
\newblock


\bibitem[\protect\citeauthoryear{Zhang, Phillips, Rogers, Baker, Chesler, and
  Langston}{Zhang et~al\mbox{.}}{2014}]%
        {zhang2014finding}
\bibfield{author}{\bibinfo{person}{Yun Zhang}, \bibinfo{person}{Charles~A
  Phillips}, \bibinfo{person}{Gary~L Rogers}, \bibinfo{person}{Erich~J Baker},
  \bibinfo{person}{Elissa~J Chesler}, {and} \bibinfo{person}{Michael~A
  Langston}.} \bibinfo{year}{2014}\natexlab{}.
\newblock \showarticletitle{On finding bicliques in bipartite graphs: a novel
  algorithm and its application to the integration of diverse biological data
  types}.
\newblock \bibinfo{journal}{\emph{BMC bioinformatics}} \bibinfo{volume}{15},
  \bibinfo{number}{1} (\bibinfo{year}{2014}), \bibinfo{pages}{110}.
\newblock


\bibitem[\protect\citeauthoryear{Zhou, Xu, Guo, Xiao, and Jin}{Zhou
  et~al\mbox{.}}{2020}]%
        {DBLP:conf/aaai/ZhouXGXJ20}
\bibfield{author}{\bibinfo{person}{Yi Zhou}, \bibinfo{person}{Jingwei Xu},
  \bibinfo{person}{Zhenyu Guo}, \bibinfo{person}{Mingyu Xiao}, {and}
  \bibinfo{person}{Yan Jin}.} \bibinfo{year}{2020}\natexlab{}.
\newblock \showarticletitle{Enumerating Maximal \emph{k}-Plexes with Worst-Case
  Time Guarantee}. In \bibinfo{booktitle}{\emph{The Thirty-Fourth {AAAI}
  Conference on Artificial Intelligence, {AAAI} 2020, The Thirty-Second
  Innovative Applications of Artificial Intelligence Conference, {IAAI} 2020,
  The Tenth {AAAI} Symposium on Educational Advances in Artificial
  Intelligence, {EAAI} 2020}}. \bibinfo{pages}{2442--2449}.
\newblock


\bibitem[\protect\citeauthoryear{Zou}{Zou}{2016}]%
        {DBLP:conf/dasfaa/Zou16}
\bibfield{author}{\bibinfo{person}{Zhaonian Zou}.}
  \bibinfo{year}{2016}\natexlab{}.
\newblock \showarticletitle{Bitruss Decomposition of Bipartite Graphs}. In
  \bibinfo{booktitle}{\emph{Database Systems for Advanced Applications - 21st
  International Conference, {DASFAA} 2016}} \emph{(\bibinfo{series}{Lecture
  Notes in Computer Science}, Vol.~\bibinfo{volume}{9643})}.
  \bibinfo{publisher}{Springer}, \bibinfo{pages}{218--233}.
\newblock


\end{thebibliography}


\end{document}